\newcommand{\BeginMyItemize}{\begin{itemize}}
\newcommand{\EndMyItemize}{\end{itemize}}
\newcommand{\BeginMyEnumerate}{\begin{enumerate}\setlength{\itemsep}{-\parskip}}
\newcommand{\EndMyEnumerate}{\end{enumerate}}
\renewcommand{\leq}{\leqslant}
\renewcommand{\geq}{\geqslant}
\newcommand{\Reals}{\mathbb{R}}
\newcommand{\Integers}{\mathbb{Z}}
\newcommand{\bd}{\partial\hspace*{0.5mm}}
\renewcommand{\preceq}{\preccurlyeq}
\newcommand{\Exp}{\mathrm{Exp}}
\newcommand{\A}{\ensuremath{\mathcal{A}}}
\newcommand{\B}{\ensuremath{\mathcal{B}}}
\newcommand{\M}{\ensuremath{\mathcal{M}}}
\newcommand{\R}{\ensuremath{\mathcal{R}}}
\newcommand{\sepset}{\ensuremath{\mathcal{S}}}
\newcommand{\PP}{\ensuremath{\mathbb{P}}}
\newcommand{\EE}{\ensuremath{\mathbb{E}}}
\newcommand{\ceil}[1]{\left\lceil #1 \right\rceil}
\newcommand{\ceili}[1]{\lceil #1 \rceil}
\newcommand{\br}[1]{\left( #1 \right)}
\newcommand{\floor}[1]{\left\lfloor #1 \right\rfloor}
\newcommand{\floori}[1]{\lfloor #1 \rfloor}
\newcommand{\eps}{\varepsilon}
\newcommand{\mydef}{:=}
\newcommand{\etal}{\emph{et al.}}
\newcommand{\length}{\mathrm{length}}
\newcommand{\tspp}{\mbox{{\sc Traveling Salesman Problem}}\xspace}
\newcommand{\etsp}{\mbox{{\sc Euclidean TSP}}\xspace}
\newcommand{\btsp}{\mbox{{\sc Bitonic TSP}}\xspace}
\newcommand{\bdtsp}{\mbox{{\sc Euclidean Path Cover}}\xspace}
\newcommand{\NP}{\mbox{{\sc np}}\xspace}
\newcommand{\fail}{{\sc Fail}\xspace}
\newcommand{\success}{{\sc Success}\xspace}
\newcommand{\yrange}{\mbox{$y$-range}}
\DeclareMathOperator{\ton}{ton}
\newcommand{\myright}{\mathrm{right}}
\newcommand{\myleft}{\mathrm{left}}
\newcommand{\ball}{\mathrm{Ball}}
\newcommand{\cC}{\mathcal{C}}
\newcommand{\cB}{\B}
\newcommand{\cM}{\M}
\newcommand{\cR}{\R}
\newcommand{\cP}{\mathcal{P}}
\newcommand{\sig}{\sigma}
\newcommand{\symdiff}{\triangle}
\newcommand{\poly}{\mathrm{poly}}
\newcommand{\myopt}{\mathrm{opt}}
\newcommand{\weight}{\mathrm{weight}}
\newcommand{\ovf}{\overline{F}}
\newcommand{\labda}{\lambda}
\newcommand{\sps}{s}
\newcommand{\SPS}{\mathcal{S}}
\newcommand{\spt}{t}
\newcommand{\SPT}{\mathcal{T}}
\newcommand{\spu}{t^+}
\newcommand{\SPU}{\mathcal{T}^+}
\newcommand{\ind}{\mathrm{st}}
\newcommand{\indp}{st^+\mkern-2mu}
\newcommand{\mypara}[1]{\medskip\noindent{\sf\textbf{#1}}}
\newcommand{\opt}{T_{\mathrm{opt}}}
\newcommand{\opteen}{T_{\mathrm{opt1}}}
\newcommand{\opttwee}{T_{\mathrm{opt2}}}
\title{Euclidean TSP in Narrow Strips}
\author{Henk Alkema}{Department of Mathematics and Computer Science, TU Eindhoven, the Netherlands}{h.y.alkema@tue.nl}{}{}
\author{Mark de Berg}{Department of Mathematics and Computer Science, TU Eindhoven, the Netherlands}{m.t.d.berg@tue.nl}{}{}
\author{Remco van der Hofstad}{Department of Mathematics and Computer Science, TU Eindhoven, the Netherlands}{r.w.v.d.hofstad@tue.nl}{}{}
\author{S\'andor Kisfaludi-Bak}{Aalto University, Espoo, Finland}{sandor.kisfaludi-bak@aalto.fi}{}{}
\authorrunning{H.~Alkema, M.~de Berg, R.~van der Hofstad, and S.~Kisfaludi-Bak}
\keywords{Computational geometry, Euclidean TSP, bitonic TSP, fixed-parameter tractable algorithms}
\begin{document}
\maketitle

\begin{abstract}
We investigate how the complexity of \etsp for point sets $P$ inside the strip $(-\infty,+\infty)\times [0,\delta]$ depends on the strip width~$\delta$.
We obtain two main results.
\begin{itemize}
\item For the case where the points have distinct integer $x$-coordinates, we prove that a shortest bitonic tour (which can be computed in $O(n\log^2 n)$ time using an existing algorithm) is guaranteed to be a shortest tour overall when $\delta\leq 2\sqrt{2}$, a bound which is best possible.
\item We present an algorithm that is fixed-parameter tractable with respect to~$\delta$.
Our algorithm has running time $2^{O(\sqrt{\delta})} n + O(\delta^2 n^2)$ for sparse point sets, where each $1\times\delta$ rectangle inside the strip contains $O(1)$ points.
For random point sets, where the points are chosen uniformly at random from the rectangle~$[0,n]\times [0,\delta]$, it has an expected running time of $2^{O(\sqrt{\delta})} n$.
These results generalise to point sets $P$ inside a hypercylinder of width $\delta$.
In this case, the factors $2^{O(\sqrt{\delta})}$ become $2^{O(\delta^{1-1/d})}$.
\end{itemize}
\end{abstract}

\section{Introduction} \label{sec:intro}
In the \tspp one is given an edge-weighted complete graph and the goal is to
compute a tour---a simple cycle visiting all nodes---of minimum total
weight. Due to its practical as well as theoretical importance, the
\tspp and its many variants are among the most famous problems in computer science
and combinatorial optimization. In this paper we study the Euclidean version
of the problem. In \etsp the input is a set~$P$ of $n$ points in~$\Reals^d$,
and the goal is to compute a minimum-length tour visiting each point.
\etsp in the plane was proven to be \NP-hard in the 1970s~\cite{GareyGJ76,Papadimitriou77}.
Around the same time, Christofides~\cite{Chr76} gave an elegant (3/2)-approximation
algorithm, which works in any metric space. For a long time it was unknown if
\etsp is APX-hard, until Arora~\cite{Arora98}, and independently Mitchell~\cite{Mitchell99},
presented a PTAS. Mitchell's algorithm works for the planar case, while Arora's
algorithm also works in higher dimensions. Rao and Smith~\cite{RaoS98} later improved
the running time of Arora's PTAS, obtaining a running time of
$2^{(1/\eps)^{O(d)}} n + (1/\eps)^{O(d)} n\log n$ in~$\Reals^d$.
\medskip

We are interested in exact algorithms for \etsp. As mentioned, the problem is already \NP-hard in
the plane. Unlike the general (metric) version, however, it can be solved in
\emph{subexponential} time, that is, in time $2^{o(n)}$. In particular, Kann~\cite{Kann92} and
Hwang~\etal~\cite{HwangCL93} presented algorithms with $n^{O(\sqrt{n})}$
running time. Smith and Wormald~\cite{SmithW98} gave a subexponential
algorithm that works in any (fixed) dimension; its running time in $\Reals^d$
is~$n^{O(n^{1-1/d})}$. Very recently De Berg~\etal~\cite{bbkk-ethtsp-2018}
improved this to $2^{O(n^{1-1/d})}$, which is tight
up to constant factors in the exponent, under the Exponential-Time Hypothesis
(ETH)~\cite{ImpagliazzoP01}.

There has also been considerable research on special cases of \etsp that
are polynomial-time solvable. One example is
\btsp, where the goal is to find a shortest \emph{bitonic} tour. 
(A tour is bitonic if any vertical line crosses it
at most twice; here the points from the input set $P$ are assumed to have distinct
$x$-coordinates.) It is a classic exercise%
~\cite{clrs-ita-09} to prove that \btsp can be solved in
$O(n^2)$ time by dynamic programming. De~Berg~\etal~\cite{bbjw-fgctsp-16}
showed how to speed this algorithm up to~$O(n\log^2 n)$.
When $P$ is in convex position,
then the convex hull of $P$ is a shortest tour and so
one can solve \etsp in $O(n \log n)$ time~\cite{bcko-cgaa-08}. 
Deineko~\etal~\cite{ddr-sctsp-94}
studied the case where the points need not all be on the convex hull;
the points inside the convex hull, however, are required to be collinear.
Their algorithm runs in $O(n^2)$ time. Deineko and Woeginger~\cite{dw-chkltsp-96}
extended this to the case where the points in the interior of the convex hull
lie on $k$ parallel lines, obtaining an~$O(n^{k+2})$ algorithm.
These results generalize earlier work by Cutler~\cite{c-esctsp-80}
and Rote~\cite{r-nlctsp-92} who consider point sets lying on three,
respectively $k$, parallel lines. Deineko~\etal~\cite{DeinekoHOW06} gave a fixed-parameter tractable algorithm for \etsp where the parameter $k$ is the number of points inside the convex hull, with running time $O(2^kk^2n)$.
Finally, Reinhold~\cite{r-srmcp-65} and Sanders~\cite{s-ec-68} proved that when there exists
a collection of disks centered at the points~in $P$ whose
intersection graph is a single cycle---this is called the
necklace condition---then the tour following the cycle is optimal.
Edelsbrunner~\etal~\cite{erw-tnc-89} gave an $O(n^2\log n)$ algorithm to verify whether such a collection of disks exists (and, if so, find one).

\mypara{Our contribution.}
The computational complexity of \etsp in $\Reals^d$ is $2^{\Theta(n^{1-1/d})}$
(for $d\geq 2$), assuming ETH. Thus the complexity depends heavily on the
dimension~$d$. This is most pronounced when we compare the complexity for $d=2$
with the trivial case~$d=1$: in the plane \etsp takes $2^{\Theta(\sqrt{n})}$
time in the worst case, while the 1-dimensional case is trivially solved in
$O(n\log n)$ time by sorting the points. We study the complexity of \etsp
for planar point sets that are ``almost 1-dimensional''. In particular, we
assume that the point set~$P$ is contained in the hypercylinder $\Reals \times \ball^{d-1}(\delta/2)$, where $\ball^{d-1}(\delta/2)$ denotes the closed $(d-1)$-dimensional ball with radius $\delta/2$, for some relatively small~$\delta$ and some arbitrary but fixed $d \geq 2$. We investigate how the complexity of \etsp depends
on the parameter~$\delta$. As any instance of \etsp can be scaled to fit inside a hypercylinder,
we need to make some additional restriction on the input. We consider three scenarios.
\begin{itemize}
\item \emph{Integer $x$-coordinates, with $d=2$.}
\btsp can be solved in $O(n\log^2 n)$ time~\cite{bbjw-fgctsp-16}.
It is natural to conjecture that for points with distinct integer $x$-coordinates inside a sufficiently narrow strip, an optimal bitonic tour is a shortest tour overall.
We give a (partially computer-assisted) proof that this is indeed the case: we prove that when $\delta\leq 2\sqrt{2}$ an optimal bitonic tour is optimal overall, and we show that the bound $2\sqrt{2}$ is best possible.
\item \emph{Sparse point sets.}
We generalize the case of integer $x$-coordinates to the case where the drum $[x,x+1]\times \ball^{d-1}(\delta/2)$ contains $O(1)$ points for all $x \in \Reals$.
Furthermore, we investigate how the complexity of \etsp grows with~$\delta$.
We show that for sparse point sets in $\Reals^2$ an optimal tour must be $k$-tonic---a tour is $k$-tonic if it intersects any vertical line at most $k$-times---for $k=O(\sqrt{\delta})$.
This suggests that one might be able to use a dynamic-programming algorithm similar to the ones for points on $k$ parallel lines~\cite{dw-chkltsp-96,r-nlctsp-92}.
The latter algorithms run in~$O(n^k)$ time, suggesting that a running time of $n^{O(\sqrt{\delta})}$ is achievable for $d=2$ in our case.
We give a much more efficient algorithm, which is fixed-parameter tractable (and subexponential) with respect to the parameter~$\delta$, and which generalizes to $d > 2$.
Its running time for sparse point sets in $\Reals^d$ is $2^{O(\delta^{1-1/d})}n + O(\delta^2 n^2)$.
\item \emph{Random point sets.}
In the third scenario the points in~$P$ are drawn independently and uniformly at random from the hypercylinder $[0,n]\times \ball^{d-1}(\delta/2)$.
For this case we prove that a very similar algorithm to the algorithm for sparse point sets has an expected running time of $2^{O(\delta^{1-1/d})}n$, which is linear if $\delta=O(1)$.
\end{itemize}

\mypara{Notation and terminology.}
Let $P:=\{p_1,\ldots,p_n\}$ be a set of points in a $d$-dimensional hypercylinder with radius $\delta/2$ ---we call such a hypercylinder a \emph{$\delta$-cylinder}---which we assume without loss of generality to be~$\Reals\times \ball^{d-1}(\delta/2)$.
We denote the $x$-coordinate of a point $p \in \Reals^d$ by $x(p)$,
and its other coordinates by $y_1(p),...,y_{d-1}(p)$.
To simplify the notation, we also write $x_i$ for $x(p_i)$. We sort the points in $P$ such that $x_i \leq x_{i+1}$ for all $1 \leq i < n$.
We define the \emph{spacing} $\Delta_i \mydef x_{i+1} - x_i$.

For two points $p,q \in \Reals^d$, we write $pq$ to denote the \emph{directed} edge from $p$ to $q$.
Paths are written as lists of points, so $(q_1, q_2, \ldots, q_m)$
denotes the path consisting of the edges $q_1q_2,\ldots,q_{m-1}q_m$.
All points in a path must be distinct, except possibly $q_1=q_m$ in which case
the path is a tour. The length of an edge $pq$ is denoted by~$|pq|$, and the
total length of a set $E$ of edges is denoted by~$\|E\|$.
Finally, a \emph{separator} is a hyperplane orthogonal to the $x$-axis, not containing any of the points in $P$.

\section{Bitonicity for points with integer \texorpdfstring{$x$}{x}-coordinates} \label{sec:bitonic}
In this section we consider the case where $d=2$ and the points in $P$ have distinct integer $x$-coordinates, which implies that $\Delta_i \in \mathbb{N}^+$ for all $i$.
Note that separators in $\Reals^2$ are vertical lines.
For simplicity, we will write $y(p) \mydef y_1(p)$, and $y_i \mydef y(p_i)$.
We will also assume the points are in the \textit{$\delta$-strip} $\Reals \times [0,\delta]$.

For our purposes, two separators~$s,s'$ that induce the same partitioning of~$P$ are equivalent.
Therefore, we can define $\sepset := \{s_1,\ldots,s_{n-1}\}$ as the set of all combinatorially distinct separators, obtained by taking one separator $s_i$ between any two points $p_i,p_{i+1}$.
Let $E$ be a set of edges with endpoints in~$P$.
The \emph{tonicity of~$E$ at a separator~$s$}, written as $\ton(E,s)$, is the number of edges in $E$ crossing~$s$.
We say that a set $E$ has \emph{lower tonicity} than a set $F$ of edges, denoted by $E \preceq F$, if $\ton(E,s_i) \leq \ton(F,s_i)$ for all $s_i\in \sepset$.
The set~$E$ has \emph{strictly lower tonicity}, denoted by $E \prec F$, if there also exists at least one $i$ for which $\ton(E,s_i) < \ton(F,s_i)$.
Finally, we call a set $E$ of edges \emph{$k$-tonic} if $\ton(E,s_i) \leq k$ for all $s_i\in \sepset$.
We will use the terms \emph{monotonic} and \emph{bitonic} to denote 1-tonic and 2-tonic, respectively.
We say $E$ has \emph{exact tonicity} $k$ if $E$ is $k$-tonic, but not $(k-1)$-tonic. 

The goal of this section is to prove the following theorem.
\begin{theorem} \label{thm:bitonic:main}
Let $P$ be a set of points with distinct and integer $x$-coordinates in a $\delta$-strip.
When $\delta\leq 2\sqrt{2}$, a shortest bitonic tour on $P$ is a shortest tour overall.
Moreover, for any $\delta>2\sqrt{2}$ there is a point set $P$ in a $\delta$-strip such that a shortest bitonic tour on $P$ is not a shortest tour overall.
\end{theorem}
The construction for the case $\delta>2\sqrt{2}$ is shown in Figure~\ref{fig:bitonic:2sqrt2_counterex}.
It is easily verified that, up to symmetrical solutions, the tours $T_1$ and $T_2$ are the only candidates for the shortest tour.
\begin{figure}
\begin{center}
\includegraphics{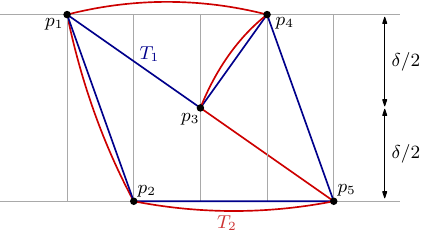}
\caption{Construction for $\delta > 2 \sqrt{2}$ for Theorem~\protect\ref{thm:bitonic:main}.
The grey vertical segments are at distance~1 from each other.
If $\delta > 2 \sqrt{2}$ then $T_1$, the shortest bitonic tour (in blue), is longer than $T_2$, the shortest non-bitonic tour (in red).}
\label{fig:bitonic:2sqrt2_counterex}
\end{center}
\end{figure}
Observe that
$\|T_2\| - \|T_1\| = |p_1 p_4| - |p_4 p_5| = 3 - \sqrt{1+\delta^2}$.
Hence, for $\delta>2\sqrt{2}$ we have $\|T_2\| < \|T_1\|$, which proves the lower bound of Theorem~\ref{thm:bitonic:main}.
The remainder of the section is devoted to proving the first statement.
\medskip

Let $P$ be a point set in a $\delta$-strip for $\delta=2\sqrt{2}$, where all points in $P$ have distinct integer $x$-coordinates.
Among all shortest tours on $P$, let $\opt$ be one that is minimal with respect to the $\preceq$-relation;
$\opt$ exists, since the number of different tours on $P$ is finite.
We claim that $\opt$ is bitonic, proving the upper bound of Theorem~\ref{thm:bitonic:main}.

Suppose for a contradiction that $\opt$ is not bitonic.
Let $s^*\in\sepset$ be the rightmost separator for which $\ton(\opt,s^*)>2$.
Note that for any two consecutive separators $s_i, s_{i+1}$, the difference in tonicity at those separators is at most 2.
Hence, $\ton(\opt,s^*)=4$.
Let $F$ be the four edges of $\opt$ crossing~$s^*$, and let $E$ be the remaining set of edges of~$\opt$.
Let $Q$ be the set of endpoints of the edges in~$F$.
We will argue that there exists a set~$F'$ of edges with endpoints in~$Q$ such that $E\cup F'$ is a tour and (i) $\|F'\| < \|F\|$, or (ii) $\|F'\| = \|F\|$ and $F' \prec F$.
We will call such an $F'$ \emph{superior to} $F$.
Option~(i) contradicts that $\opt$ is a shortest tour.
Since $E\cup F' \prec E\cup F$ if and only if $F' \prec F$, (ii) contradicts that $\opt$ is a shortest tour that is minimal with respect to~$\preceq$.
Hence, proving that such a superior set~$F'$ exists finishes the proof.

The remainder of the proof proceeds in two steps.
In the first step we move the points in $Q$, obtaining a set $\overline{Q}$ with consecutive integer coordinates and an edge set $\ovf$.
These will be such that if an $\ovf'$ superior to $\ovf$ exists, then there also exists an $F'$ superior to $F$.
In the second step we then give a computer-assisted proof that the desired set~$\ovf'$ exists.

\mypara{Step~1: Finding a suitable~$\overline{Q}$ with consecutive $x$-coordinates.}
Let $\opt$, $s^*$, $E$, $F$ and $Q$ be defined as above. We assume without loss of generality that the $x$-coordinate of $s^*$ is equal to $x^*+\frac{1}{2}$,
where~$x^*$ is the largest integer such that the line $x=x^*+\frac{1}{2}$ intersects all four edges in~$F$.
Since the actual edges in $E$ are not important for our arguments, we replace them by abstract ``connections'' specifying which pairs of endpoints of the edges in~$F$ are connected by paths of edges in~$E$.
It will be convenient to duplicate the points in~$Q$ that are shared endpoints of two edges in $F$, and add a connection between the two copies; see Figure~\ref{fig:bitonic:connectivity_pattern}.
We denote the set of connections obtained in this way by $\widetilde{E}$, and we call $\widetilde{E}$ the \emph{connectivity pattern} of~$F$ (in $E\cup F$).
\begin{figure}
\begin{center}
\includegraphics{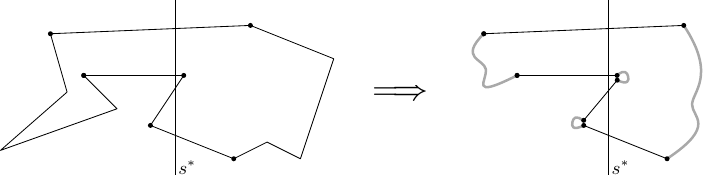}
\caption{Replacing the paths connecting endpoints of edges in $F$ by abstract connections.
    The copies of duplicated shared endpoints are slightly displaced in the figure to be able to distinguish them, but they are actually coinciding.}
\label{fig:bitonic:connectivity_pattern}
\end{center}
\end{figure}
\medskip

Next we show how to move the points in~$Q$ such that the modified set~$\overline{Q}$ uses consecutive $x$-coordinates.
Recall that $s^* : x=x^* +\frac{1}{2}$ is a separator that intersects all edges in~$F$.
Let $Q_{\myleft}$ and $Q_{\myright}$ be the subsets of points from~$Q$ lying to the left and right of $s^*$, respectively.
We will move the points in $Q_{\myleft}$ such that they will get consecutive $x$-coordinates with the largest one being equal to~$x^*$, while the points in $Q_{\myright}$ will get consecutive $x$-coordinates with the smallest one being~$x^*+1$.

We move the points in $Q_{\myleft}$ as follows.
Let $z\leq x^*$ be the largest $x$-coordinate currently not in use by any of the points in $Q_{\myleft}$.
If $Q_{\myleft}$ lies completely to the right of the line $\ell(z):x=z$, then we are done:
the set of $x$-coordinates used by points in $Q_{\myleft}$ is $\{z+1,\ldots,x^*\}$.
Otherwise, we take any point to the left of~$\ell(z)$, and we move it along its edge $e \in F$ to the point $e\cap \ell(z)$;
see Figure~\ref{fig:bitonic:reduction_to_consecutive}(i).
This process is repeated until the points of $Q_{\myleft}$ have consecutive $x$-coordinates.
Note that duplicate $x$-coordinates caused by duplicating shared endpoints can still exist, if neither of the duplicates was moved.

After moving the points in $Q_{\myleft}$ we treat $Q_{\myright}$ in a similar manner;
the only difference is that now we define $z> x^*$ to be the smallest $x$-coordinate currently not in use by any of the points in $Q_{\myright}$.
Figure~\ref{fig:bitonic:reduction_to_consecutive}(ii) shows the final result for the example in part~(i) of the figure.
\begin{figure}
\begin{center}
\includegraphics{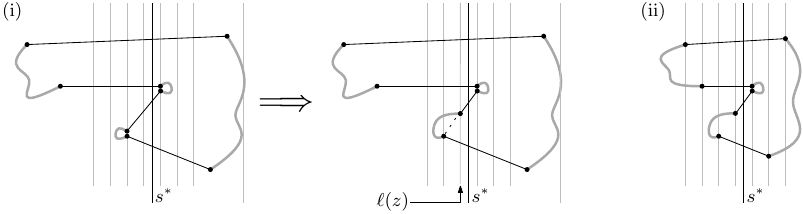}
\caption{The process of moving the points in~$Q$. Grey vertical lines have integer~$x$-coordinates.
         (i) Moving a point in $Q_{\myleft}$ so that it gets $x$-coordinate~$z$.
         (ii) A possible configuration after $Q_{\myleft}$
              and $Q_{\myright}$  have been treated.}
\label{fig:bitonic:reduction_to_consecutive}
\end{center}
\end{figure}

Before we prove that this procedure preserves the desired properties, two remarks are in order about the process described above.
First, in each iteration we may have different choices for the edge~$e$ crossing $\ell(z)$, and the final result depends on these choices.
Second, when we move a point in $Q$ to a new location, then the new $x$-coordinate is not used by $Q$ but it may already be used by points in $P\setminus Q$.
Neither of these facts causes any problems for the coming arguments.
\medskip

Let $\overline{Q}$ be the set of points from~$Q$ after they have been moved to their new locations, and let $\ovf$ be the set of edges from~$F$ after the move.
With a slight abuse of notation we still use $\widetilde{E}$ to specify the connectivity pattern on $\ovf$, which is simply carried over from~$F$.
The following lemma shows that we can use $\ovf$, $\overline{Q}$ and $\widetilde{E}$ in Step~2 of the proof.

\begin{lemma} \label{lem:bitonic:reduction_to_consecutive}
Let $E, F, Q, \widetilde{E}, \ovf,\overline{Q}$ be defined as above.
Let $\ovf'$ be any set of edges (with endpoints in $\overline{Q}$) superior to $\ovf$, such that $\widetilde{E} \cup \ovf'$ is a tour.
Then there is a set of edges~$F'$ (with endpoints in $Q$) superior to $F$ such that $E \cup F'$ is a tour.
\end{lemma}
\begin{proof}
We define $F'$ in the obvious way, by simply taking $\ovf'$ and replacing each endpoint (which is a point in $\overline{Q}$) by the corresponding point in~$Q$.
Clearly $E \cup F'$ forms a tour if $\widetilde{E} \cup \ovf'$ forms a tour.

Suppose $\ovf'$ is superior to $\ovf$.
We will show that $F'$ is superior to $F$.
We will do this by first proving that $\|F\| - \|F' \| \geq \|\ovf\| - \|\ovf'\|$.
Then, we prove that $F' \preceq F$.
Finally, we prove that if there exists a separator $\overline{s} \in \sepset$ such that $\ton(\ovf',\overline{s}) < \ton(\ovf,\overline{s})$, then $\ton(F',s^*) < \ton(F,s^*)$. \medskip

Recall that each edge~$\overline{e}\in\ovf$ is obtained from the corresponding edge~$e\in F$ by moving one or both endpoints along the edge~$e$ itself.
Also recall that we duplicated shared endpoints of edges in $F$, so if we move a point in~$Q$, then we move the endpoint of a single edge in~$F$.
Hence,
$$\|F\| - \|\ovf\| = \mbox{total distance over which the points in $Q$ are moved}.$$
Since we added a connection to $\widetilde{E}$ between the two copies of a shared endpoint, each point in~$Q$ is incident to exactly one connection in~$\widetilde{E}$ and, hence, to exactly one edge in~$F'$.
This means that if we move a point in $Q$, then we move the endpoint of a single edge in $F'$, so
$$\|F'\| - \|\ovf'\| \leq \mbox{total distance over which the points in $Q$ are moved}.$$
We conclude that $\|F\| - \|F' \| \geq \|\ovf\| - \|\ovf'\|$. \medskip

Next, we claim that $F' \preceq F$.
We will prove this by showing that $F$ is maximal with respect to $\preceq$.
Let $s$ be any separator.
Let $G$ be any edge set on $Q$ such that $\widetilde{E} \cup G$ forms a tour.
Now, the tonicity of $G$ at $s$ is bounded by the minimum of the number of points of $Q$ to the left of $s$ and the number of points of $Q$ to the right of $s$.
Note that $F$ attains this bound for every separator $s$.
We conclude that $F$ is maximal with respect to $\preceq$, and therefore, $F' \preceq F$ indeed holds. \medskip

Finally, let $\overline{s} \in \sepset$ be such that $\ton(\ovf',\overline{s}) < \ton(\ovf,\overline{s})$.
We will show that $\ton(F',s^*) < \ton(F,s^*)$.
Without loss of generality, let $\overline{s}$ either be to the left of $s^*$, or $s^*$ itself.
Since $\ton(\ovf',\overline{s}) < \ton(\ovf,\overline{s})$, the edge set $\ovf'$ contains an edge $e$ fully to the left of $\overline{s}$: otherwise, the tonicity of $\ovf'$ at $\overline{s}$ would be equal to the number of points to the left of $\overline{s}$, which is equal to the tonicity of $\ovf$ at $\overline{s}$.
Furthermore, since
$$\ton(F,s^*) = \ton(\ovf,s^*)$$
and
$$\ton(F',s^*) = \ton(\ovf',s^*),$$
we conclude that $\ton(F',s^*) < \ton(F,s^*)$. \medskip

Recall that by the definition of superior, $F'$ is superior to $F$ if and only if (a) $\|F'\| < \|F\|$, or (b) $\|F'\| = \|F\|$ and $F' \prec F$.
We have proven that (i) $\|F\| - \|F' \| \geq \|\ovf\| - \|\ovf'\|$, (ii) $F' \preceq F$, and (iii) if $\ton(\ovf',s) < \ton(\ovf,s)$ for some separator $s \in \sepset$, then $\ton(F',s^*) < \ton(F,s^*)$.

Suppose that $\ovf'$ is superior to $\ovf$.
If $\|\ovf'\| < \|\ovf\|$, then by (i), we get that (a) holds.
Else, $\|\ovf'\| = \|\ovf\|$, so by (i), we get that either (a) holds, or $\|F'\| = \|F\|$, satisfying the first requirement of (b).
Furthermore, since in this case $\ovf' \prec \ovf$, there exists a separator $\overline{s}$ satisfying $\ton(\ovf',\overline{s}) < \ton(\ovf,\overline{s})$.
By (iii), we have $\ton(F',s^*) < \ton(F,s^*)$.
Combining this with (ii) gives us that $F' \prec F$, which is the second requirement of (b).

We conclude that if $\ovf'$ is superior to $\ovf$, then $F'$ is superior to $F$.
\end{proof}

\mypara{Step~2: Finding the set~$F'$.}
The goal of Step~2 of the proof is the following:
given the tour~$\opt = E\cup F$ inside a $\delta$-strip of width~$\delta=2\sqrt{2}$, show that there exists a set~$F'$ of edges such that $E\cup F'$ is a tour and $F'$ is superior to $F$.
Lemma~\ref{lem:bitonic:reduction_to_consecutive} implies that we may work with $\widetilde{E}$ and $\ovf$ instead of $E$ and $F$ (and then find $\ovf'$ instead of~$F'$).

In Step~1 we duplicated shared endpoints of edges in $E$.
We now merge these two copies again if they are still at the same location.
This will always be the case for the shared endpoint immediately to the right of the separator~$s^*$:
we picked $s^*: x=x^*+\frac{1}{2}$ such that there is a shared endpoint at $x=x^*+1$ (since $s^*$ is the rightmost separator with $\ton(\opt,s^*) = 4$), and the copies of this endpoint will not be moved.
So if $n_{\myleft}$ and $n_{\myright}$ denote the number of distinct endpoints to the right and left of~$s^*$, respectively, then $n_{\myright}\in\{2,3\}$ and $n_{\myleft}\in\{2,3,4\}$.
We thus have six cases in total for the pair~$(n_{\myleft},n_{\myright})$, as depicted in Figure~\ref{fig:bitonic:overlineF_cases}.
\begin{figure}
\begin{center}
\includegraphics{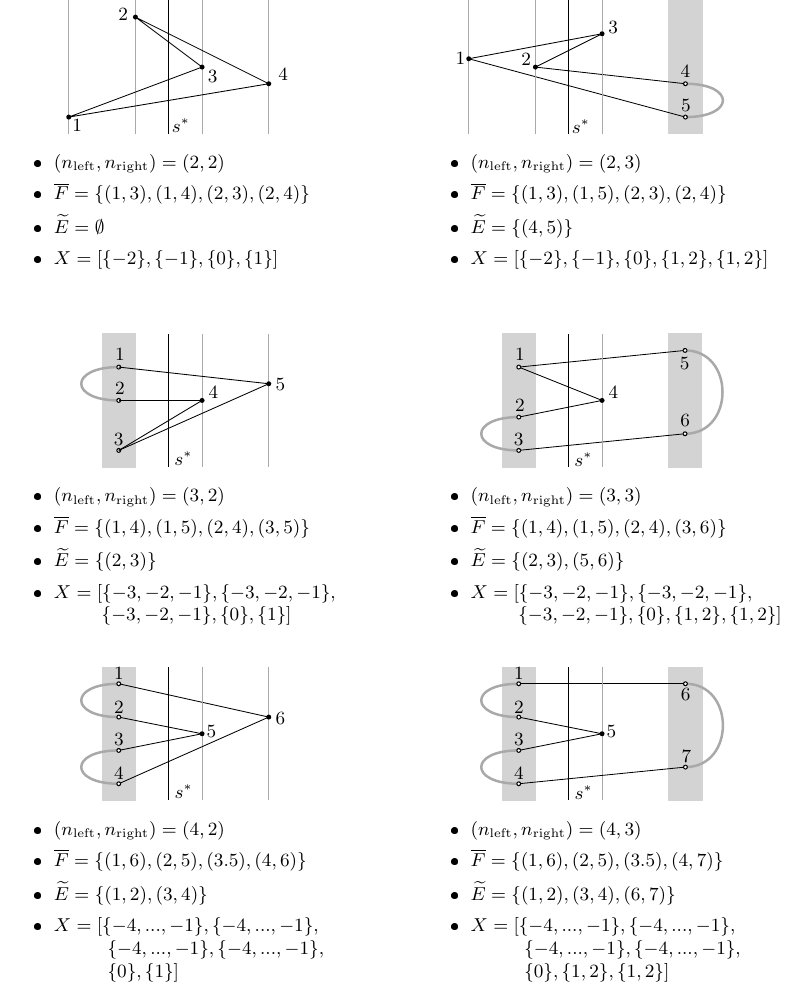}
\caption{The six different cases that result after applying Step~1 of the proof.
Points indicated by filled disks have a fixed $x$-coordinate.
The left-to-right order of points drawn inside a grey rectangle, on the other hand, is not known yet.
The vertical order of the edges is also not fixed, as the points can have any $y$-coordinate in the range~$[0,2\sqrt{2}]$.}
\label{fig:bitonic:overlineF_cases}
\end{center}
\end{figure}
Each of the six cases has several subcases, depending on the left-to-right order of the vertices inside the gray rectangles in the figure.
For any fixed ordering we can still vary the $y$-coordinates in the range~$[0,\delta]$.
This may lead to scenarios where different sets~$\ovf'$ are required.
We handle this potentially huge amount of cases in a computer-assisted manner, using an automated prover \emph{FindShorterTour}$(n_\myleft,n_\myright,\ovf,\widetilde{E},X,\delta,\eps)$.
The input parameter~$X$ is an array where $X[i]$ specifies the set from which the $x$-coordinate of the $i$-th point in the given scenario may be chosen.
Here, we assume w.l.o.g.~that $x(s^*) = -1/2$;
see Figure~\ref{fig:bitonic:overlineF_cases}.
The smaller the precision parameter~$\eps$ is, the more precise the output will be. Its precise role will be explained below.

The output of \emph{FindShorterTour} is a list of \emph{scenarios} and an \emph{outcome} for each scenario.
A scenario contains for each point~$q$ an $x$-coordinate~$x(q)$ from the set of allowed $x$-coordinates for~$q$, and a range~$\yrange(q)\subseteq [0,2\sqrt{2}]$ for its $y$-coordinate.
This $y$-range is an interval of length at least~$\eps/2$.
The outcome of a scenario is either \success or \fail.
An outcome \success means that a set $\ovf'$ has been found with the desired properties:
$\widetilde{E}\cup \ovf'$ is a tour, and for all possible instantiations of the scenario---that is, all choices of $y$-coordinates from the $y$-ranges in the scenario---we have $\|\ovf'\| < \|\ovf\|$.
An outcome \fail means that such an $\overline{F'}$ has not been found.
It does not guarantee that such an $\ovf'$ does not exist for this scenario.

The list of scenarios is complete in the sense that for any instantiation of the input case there is a scenario that covers it.

\emph{FindShorterTour} works brute-force.
It checks all possible combinations of $x$-coordinates and subdivides the $y$-coordinate ranges, until a suitable $\ovf'$ can be found or until the $y$-ranges have length smaller than the precision parameter~$\eps$.
The implementation details of the procedure are in Appendix~\ref{sec:app_prover}.
\medskip

Note that case $(n_\myleft,n_\myright)=(2,3)$ in Figure~\ref{fig:bitonic:overlineF_cases}
is a subcase of case $(n_\myleft,n_\myright)=(3,2)$, if we exchange the roles of the points lying to the left and to the right of~$s^*$, swap the labels of points 5 and 1, and swap the labels of points 2 and 4.
Hence, we ignore this subcase and run our automated prover on the remaining five cases, where we set $\eps:= 0.001$.
It successfully proves the existence of a suitable set~$\ovf'$ in four cases;
the case where the prover fails is the case $(n_\myleft,n_\myright)=(3,2)$.
For this case, it fails for the two scenarios depicted in Figure~\ref{fig:bitonic:worstcase}.
All other scenarios for these cases are handled successfully (up to symmetries).
\begin{figure}
\begin{center}
\includegraphics{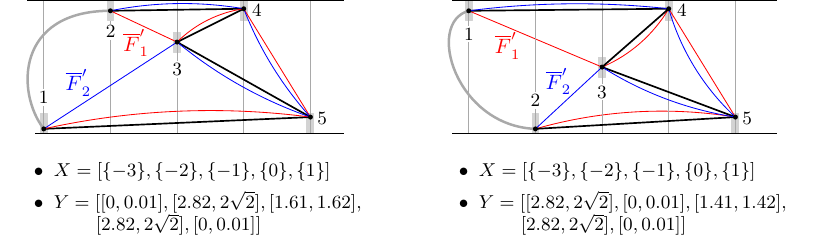}
\caption{Two scenarios covering all subscenarios where the automated prover fails, up to symmetries.
Each point has a fixed $x$-coordinate and a $y$-range specified by the array~$Y$;
the resulting possible locations are shown as small grey rectangles (drawn larger than they actually are for visibility).
For all subscenarios, at least one of $\ovf'_1$ (in red) and $\ovf'_2$ (in blue) is at most as long as $\ovf$ (in black).}
\label{fig:bitonic:worstcase}
\end{center}
\end{figure}
For both scenarios in Figure~\ref{fig:bitonic:worstcase} we consider two alternatives for the set $\ovf'$: the set $\ovf'_1$ shown in red in Figure~\ref{fig:bitonic:worstcase}, and the set $\ovf'_2$ shown in blue.
We will show that in any instantiation of both scenarios, either $\ovf'_1$ or $\ovf'_2$ is at least as short as $\ovf$.
Since both alternatives are bitonic, this finishes the proof.

For $1\leq i\leq 5$, let $q_i$ be the point labeled~$i$ in the left scenario in Figure~\ref{fig:bitonic:worstcase}.
We first argue that we can assume without loss of generality that $y(q_2)=y(q_4)=2\sqrt{2}$ and $y(q_1)=y(q_5)=0$.
To this end, consider an arbitrary instantiation of this scenario, and imagine moving $q_2$ and $q_4$ up to the line~$y=2\sqrt{2}$, and moving $q_1$ and $q_5$ down to the line~$y=0$.
It suffices to show, for $i\in\{1,2\}$, that if we have $\|\ovf'_i\| \leq \|\ovf\|$ after the move, then we also have $\|\ovf_i\| \leq \|\ovf\|$ before the move.
To see why this indeed holds, we will need the following observation.

\begin{observation}\label{obs:bitonic:moving}
Let $p$ be a point in $\Reals^2$.
Suppose we move $p$ in direction $\vec{v}$ at unit speed.
Let $p(t)$ denote the position of point $p$ at time $t$.
Let $\vec{p}(t)$ be the corresponding vector.
Then $\frac{d}{dt} |\vec{p}(t)| = \frac{\vec{p}(t)\cdot v}{|\vec{p}(t)|}$, where $`\,\cdot$' denotes the inner product.
This also equals the cosine of the angle between $\vec{p}(t)$ and $\vec{v}$.
\end{observation}
\begin{proof}
Assume without loss of generality that $\vec{v} = (1,0)$.
Then
\begin{align*}
\frac{d}{dt} |\vec{p}(t)| &= \frac{d}{dt} \sqrt{x(p(t))^2+y(p(t))^2}\\
&= \frac{d}{dt} \sqrt{(x(p)+t)^2+y(p)^2}\\
&= \frac{x(p)+t}{\sqrt{(x(p)+t)^2+y(p)^2}}\\
&= \frac{\vec{p}(t)\cdot \vec{v}}{|\vec{p}(t)|}.
\end{align*}
Finally, for any vector $\vec{v}_1,\vec{v}_2$ we have $\vec{v}_1 \cdot \vec{v}_2 = |\vec{v}_1| |\vec{v}_2| \cos \alpha$, where $\alpha$ is the angle between $\vec{v}_1$ and $\vec{v}_2$.
Since $\vec{v}$ is a unit-length vector, this finalizes our argument.
\end{proof}

This leads us to the following corollary.
\begin{corollary}\label{cor:bitonic:moving}
Let $a,b,c$ be three points.
Let $\ell$ be the vertical line through $c$, and suppose we move $c$ downwards along $\ell$.
Let $\alpha$ be the smaller angle between~$ac$ and~$\ell$ if $y(c) < y(a)$, and the larger angle otherwise.
Let $\beta$ be the smaller angle between~$bc$ and~$\ell$ if $y(c) < y(b)$, and the larger angle otherwise.
Suppose $\alpha<\beta$ throughout the move.
Then the move increases $|ac|$ more than it increases~$|bc|$.
\end{corollary}
\begin{proof}
From Observation~\ref{obs:bitonic:moving} we can directly conclude that at any time during the move, the derivative of $|ac|$ is strictly larger than the derivative of $|bc|$.
\end{proof}
We can now repeatedly apply Corollary~\ref{cor:bitonic:moving} to our current scenario, the left scenario in Figure~\ref{fig:bitonic:worstcase}. Let us first compare $\|\ovf'_1\|$ to $\|\ovf\|$. Moving $q_1$ downwards to $y = 0$ changes both by the same amount.
We can apply Corollary~\ref{cor:bitonic:moving} to $q_2$: by moving it upwards, $|q_2 q_3|$ increases more than $|q_2 q_4|$, thus $\|\ovf'_1\|-\|\ovf\|$ increases. Then, moving $q_4$ upwards, $|q_2 q_4|$ decreases while $|q_4 q_5|$ increases, and therefore $\|\ovf'_1\|-\|\ovf\|$ increases once more. Finally, we can apply Corollary~\ref{cor:bitonic:moving} to show that if we move $q_5$ downwards, $|q_4 q_5|$ increases more than $|q_3 q_5|$, and therefore increasing $\|\ovf'_1\|-\|\ovf\|$ again.
Since none of the moves decreased $\|\ovf'_1\|-\|\ovf\|$, we conclude that if $\|\ovf'_1\| \leq \|\ovf\|$ after the move, then we also have $\|\ovf_1\| \leq \|\ovf\|$ before the move. We can prove analogously that if $\|\ovf'_2\| \leq \|\ovf\|$ after the move, then we also have $\|\ovf_2\| \leq \|\ovf\|$ before the move.

So now assume $y(q_2)=y(q_4)=2\sqrt{2}$ and $y(q_1)=y(q_5)=0$.
Let $y := y(q_3)$.
If $y\geq (8\sqrt{2})/7$ then
\[
|q_2 q_3|+|q_4 q_5| = \sqrt{1+(2\sqrt{2}-y)^2} + 3 \leq 2 + \sqrt{4+y^2} = |q_2 q_4|+|q_3 q_5|,
\]
so $\| \ovf'_1 \| \leq \| \ovf \|$.
On the other hand, if $y\leq (8\sqrt{2})/7$ then
\[
|q_1 q_3|+|q_4 q_5| = \sqrt{4+y^2} + 3   \leq   \sqrt{1+(2\sqrt{2}-y)^2} + 4 = |q_3 q_4|+|q_1 q_5|,
\]
so $\| \ovf'_2 \| \leq \| \ovf \|$.
So either $\ovf'_1$ or $\ovf'_2$ is at least as short as $\ovf$,
finishing the proof for the left scenario in Figure~\ref{fig:bitonic:worstcase}.
The proof for the right scenario in Figure~\ref{fig:bitonic:worstcase} is analogous, with cases $y\geq \sqrt{2}$ and $y\leq \sqrt{2}$.
This finishes the proof for the right scenario
and, hence, for Theorem~\ref{thm:bitonic:main}.

\section{The tonicity of TSP tours of sparse point sets} 
\label{sec:2ktonic}
In this section we investigate the tonicity of optimal TSP tours for planar sets $P$ that are sparse.
Recall that $P$ is sparse if for any $x \in \Reals$ the drum $[x,x+1]\times \ball_{d-1}(\delta/2)$ contains $O(1)$ points.
We start with an easy observation that holds for point sets in an arbitrary number of dimensions, illustrated in Fig.~\ref{fig:2ktonicproofs:switcheroo}.
\begin{figure}
\begin{center}
\includegraphics[]{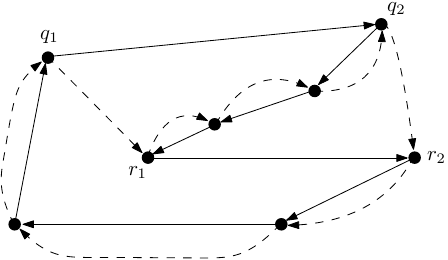}
\caption{Illustration of Observation~\ref{obs:2ktonicproofs:switcheroo}.
$T$ is the solid line, $T'$ is the dashed line.
Between $x(r_1)$ and $x(q_2)$, the tonicity of $T'$ is 2 lower than the tonicity of $T$.}
\label{fig:2ktonicproofs:switcheroo}
\end{center}
\end{figure}
\begin{observation}\label{obs:2ktonicproofs:switcheroo}
Let $T$ be a tour containing the (directed) edges $q_1 q_2$ and $r_1 r_2$.
Let $|q_1 r_1|+|q_2 r_2| \leq |q_1 q_2|+|r_1 r_2|$.
Then if we swap the edges $q_1 q_2$ and $r_1 r_2$ for the edges $q_1 r_1$ and $q_2 r_2$ we obtain a tour $T'$ with $\|T'\| \leq \|T\|$.
Similarly, if $|q_1 r_1|+|q_2 r_2| < |q_1 q_2|+|r_1 r_2|$ then $\|T'\| < \|T\|$.
Furthermore, if $q_1$ and $r_1$ are to the left of $q_2$ and $r_2$, then $T' \prec T$.
\end{observation}
\begin{proof}
The fact that $\|T'\| \leq \|T\|$ (or $\|T'\| < \|T\|$) is trivial.
Furthermore, if $q_1$ and $r_1$ are to the left of $q_2$ and $r_2$, then $T' \prec T$ holds:
between $\max(x(q_1),x(r_1))$ and $\min(x(q_2),x(r_2))$ the tonicity has been lowered by~2.
Everywhere else the tonicity remains unchanged, so $T' \prec T$.
\end{proof}
We will need the following observation, illustrated in Fig.~\ref{fig:2ktonicproofs:linesum}.
\begin{figure}
\begin{center}
\includegraphics{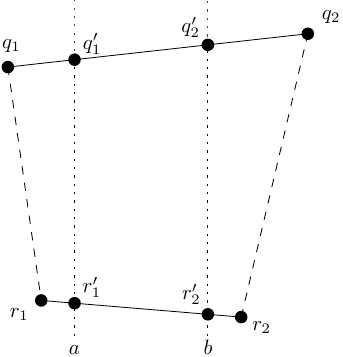}
\caption{Illustration of Observation~\ref{obs:2ktonicproofs:linesum}.
Since $|q_1q_2| + |r_1r_2| < |q_1r_1|+|q_2r_2|$ holds, $|q_1'q_2'|+|r_1'r_2'| < |q_1'r_1'| + |q_2'r_2'|$ also holds.}
\label{fig:2ktonicproofs:linesum}
\end{center}
\end{figure}
\begin{observation}\label{obs:2ktonicproofs:linesum}
Let $a,b\in\Reals$.
Let $q_1,q_2,r_1,r_2$ be any four points of $P$ with $\max(x(q_1), x(r_1))$ $\leq a < b \leq \min(x(q_2), x(r_2))$ and $|q_1q_2| + |r_1r_2| < |q_1r_1|+|q_2r_2|$. 
Let $q_1', q_2'$ be the points on $q_1q_2$ with $x$-coordinates $a, b$ respectively. 
Let $r_1', r_2'$ be the points on $r_1r_2$ with $x$-coordinates $a, b$ respectively. 
Then $2(b-a) \leq |q_1'q_2'|+|r_1'r_2'| < |q_1'r_1'| + |q_2'r_2'|$.
\end{observation}
\begin{proof}
When we move $q_1$ straight towards $q_2$ until it reaches $q'_1$, then $|q_1r_1|$ cannot decrease more than $|q_1q_2|$.
Therefore, $|q_1r_1| - |q_1'r_1| \leq |q_1q_2| - |q_1'q_2|$.
Combining this with our assumption that $|q_1q_2| + |r_1r_2| < |q_1r_1|+|q_2r_2|$, we get that
$|q_1'q_2| + |r_1r_2| < |q_1'r_1|+|q_2r_2|$.
Note how we have essentially swapped $q_1$ for $q_1'$.
We can repeat this process three more times.
First we swap $q_2$ for $q'_2$, then we swap $r_1$ for $r_1'$, and finally we swap $r_2$ for $r_2'$. We thus obtain that $|q_1'q_2'| + |r_1'r_2'| < |q_1'r_1'|+|q_2'r_2'|$.
We complete our proof by noting that since $x(q_1') = x(r_1') = a$ and $x(q_2') = x(r_2') = b$, we trivially have that $2(b-a) \leq |q_1' q_2'| + |r_1'r_2'|$.
\end{proof}
From now on, we will once more assume that $d=2$, and that the point set $P$ lies in the $\delta$-strip $\Reals \times [0,\delta]$.
Recall that $s_i$ denotes the separator between points~$p_i,p_{i+1}$, and recall the definition $\Delta_i \mydef x_{i+1} - x_i$. 
A direct consequence of Observation~\ref{obs:2ktonicproofs:linesum} is the following.
\begin{lemma}\label{lem:2ktonicproofs:gapseparator}
Let $k \in \mathbb{N}$ be such that $\delta \leq k \Delta_i$. Let $\opt$ be a shortest tour on~$P$.
Then there exists a shortest tour $\opt' \preceq \opt$ with $\ton(\opt',s_i) \leq 2k$.
\end{lemma}
\begin{proof}
Let $\opt$ be a shortest tour.
If $\ton(\opt,s_i)>2k$, then $\opt$ has at least $k+1$ edges crossing~$s_i$ from left to right.
We claim that at least one pair of edges $(q_1q_2, r_1r_2)$ has the property that $|q_1q_2|+|r_1r_2| \geq |q_1r_1|+|q_2r_2|$. 
To see this, suppose such a pair does not exist.
If we order the edges crossing~$s_i$ on where they cross~$s_i$, and then apply Observation~\ref{obs:2ktonicproofs:linesum} to each consecutive pair, we get that $2 \delta > 2 k \Delta_i$.
Specifically, let $e_1, ..., e_{k+1}$ be such a set of $k+1$ edges crossing $s_i$.
Let $e_j = (e_{j1}, e_{j2})$ for all $j$, and let $e_{j1}'$ and $e_{j2}'$ be the points on $e_j$ with $x$-coordinates $x_i$ and $x_{i+1}$, respectively.
Note that since $e_j$ crosses $s_i$ and is defined by two points of $P$, $e_{j1}'$ and $e_{j2}'$ are indeed well-defined.
Now, from Observation~\ref{obs:2ktonicproofs:linesum}, we get for all $1 \leq j \leq k$ that $2(x_{i+1}-x_i) < |e_{j1}' e_{(j+1)1}'| + |e_{j2}' e_{(j+1)2}'|$.
See Figure~\ref{fig:2ktonicproofs:gapseparator} for an example.
Therefore,
\[2k \Delta_i = \sum_{j=1}^k 2(x_{i+1}-x_i) < \sum_{j=1}^k |e_{j1}' e_{(j+1)1}'| + |e_{j2}' e_{(j+1)2}'| = |e_{11}' e_{(k+1)1}'| + |e_{12}' e_{(k+1)2}'| \leq 2 \delta.\]
This, however, directly contradicts our assumption that $\delta \leq k \Delta_i$. 
Therefore, such a pair of edges must exist.
Using this pair, we can find a tour~$\opt^* \prec \opt$ with $\|\opt^*\| \leq \|\opt\|$ (and therefore a shortest tour), by Observation~\ref{obs:2ktonicproofs:switcheroo}.
We apply this repeatedly until we have obtained a shortest tour $\opt' \preceq \opt$ with $\ton(\opt',s_i) \leq 2k$.
\begin{figure}
\begin{center}
\includegraphics[]{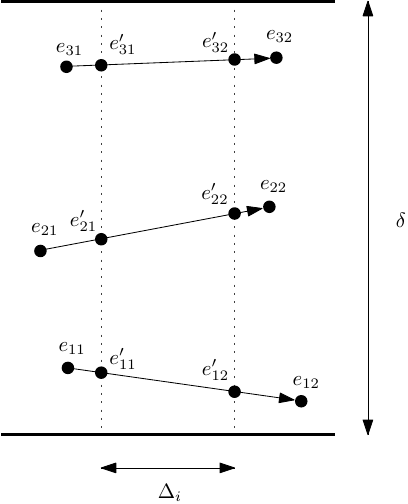}
\caption{ An example of Lemma~\ref{lem:2ktonicproofs:gapseparator}, with $k = 2$.
Suppose $|e_1|+|e_2| < |e_{11}e_{21}| + |e_{12}e_{22}|$ and $|e_2|+|e_3| < |e_{21}e_{31}|+|e_{22}e_{32}|$.
Since $\delta \geq |e_{11}'e_{21}'| + |e_{21}'e_{31}'|$ and $\delta \geq |e_{12}'e_{22}'| + |e_{22}'e_{32}'|$, we get that $2 \delta \geq |e_{11}'e_{21}'| + |e_{21}'e_{31}'| + |e_{12}'e_{22}'| + |e_{22}'e_{32}'| > |e_{11}'e_{12}'| + 2 |e_{21}'e_{22}'| + |e_{31}'e_{32}'|  \geq 4 \Delta_i$.}
\label{fig:2ktonicproofs:gapseparator}
\end{center}
\end{figure}
\end{proof}
A more general version is the following:
\begin{lemma}\label{lem:2ktonicproofs:gapseparator2}
Let $k \in \mathbb{N}$ be such that $\delta \leq (x_j-x_i)(k-j+i+1)$ for some $1 \leq i < j \leq n$. 
Let $\opt$ be a shortest tour.
Then there exists a shortest tour $\opt' \preceq \opt$ with $\ton(\opt',s_i) \leq 2k$ and $\ton(\opt',s_{j-1}) \leq 2k$.
\end{lemma}
\begin{proof}
Let $\opt$ be a shortest tour.
If $\ton(\opt,s_i)>2k$, then $\opt$ has at least $k+1$ edges crossing $s_i$ from left to right.
We claim that at least one pair $(q_1q_2, r_1r_2)$ of crossing edges has the property that $|q_1q_2|+|r_1r_2| \geq |q_1r_1|+|q_2r_2|$.
To see this, suppose such a pair does not exist.
Since at most $j-i-1$ of these $k+1$ edges have an endpoint in $\{p_{i+1},...,p_{j-1}\}$, at least $k+1-(j-i-1)$ of these have length at least $x_j-x_i$.
Analogously to the previous lemma, we order these edges depending on where they cross $s_i$, and then apply Observation~\ref{obs:2ktonicproofs:linesum} to each consecutive pair.
This gives us that $2 \delta > 2 (k-(j-i-1)) (x_j-x_i)$. 
This, however, directly contradicts our assumption. Therefore, such a pair of edges must exist.

Analogously to the previous lemma, we can use this to find a tour~$\opteen \prec \opt$ with $\|\opteen\| \leq \|\opt\|$ (and therefore a shortest tour), by Observation~\ref{obs:2ktonicproofs:switcheroo}.
We apply this repeatedly until we obtain a shortest tour $\opt^* \preceq \opt$ with $\ton(\opt^*,s_i) \leq 2k$. 

By symmetry, we can then find a shortest tour $\opt' \preceq \opt^*$ with $\ton(\opt',s_{j-1}) \leq 2k$.
Since $\opt' \preceq \opt^*$, we also have $\ton(\opt',s_i) \leq 2k$.
Therefore, $\opt' \preceq \opt$ is indeed a shortest tour with $\ton(\opt',s_i) \leq 2k$ and $\ton(\opt',s_{j-1}) \leq 2k$.
\end{proof}
We can now bound the tonicity of an optimal TSP tour for~$P$.
\begin{theorem}\label{thm:2ktonicproofs:main2k_sparse}
Let $P$ be a set of points inside a $\delta$-strip such that for any $x \in \Reals$ there are at most $c$ points with $x$-coordinates in the interval $[x, x+1]$.
Then there exists an optimal TSP tour on $P$ that is $2k$-tonic for $k \mydef \floori{2\sqrt{c\delta}+2c}$.
\end{theorem}
\begin{proof}
Let $\opt$ be a shortest tour.
We define $m \mydef \floori{\sqrt{\delta/c}+2}$.
We split the proof into two cases.
\begin{itemize}
\item \textbf{Case I: $n < 2cm$.}
Trivially, $\opt$ is $n$-tonic.
We have 
$$n < 2cm = 2 c \floor{\sqrt{\delta/c}+2} < 2 \floor{2\sqrt{c\delta}+2c} = 2 k,$$
so $\opt$ is indeed $2k$-tonic.
\item \textbf{Case II: $n \geq 2cm$.}
For any $i, j$ such that $1 \leq i$ and $j = i + c m \leq n$, we would like to apply Lemma~\ref{lem:2ktonicproofs:gapseparator2}.
To do so, we first argue that $\delta \leq (x_j - x_i) (k-j+i+1)$.
Note that since for any $x \in \Reals$ there are at most $c$ points with $x$-coordinates in the range $[x,x+1]$, we have $x_j-x_i \geq \floori{(j-i)/c}$.
Therefore, we get
\begin{align*}
(x_j - x_i) (k-j+i+1) & \geq \br{\floor{\frac{j-i}{c}}-1}(k+1-(j-i)) \\
	& = \br{\floor{\sqrt{\delta/c}+2}-1}\br{\floor{2\sqrt{c\delta}+2c}+1-c \floor{\sqrt{\delta/c}+2}} \\
	& \geq \sqrt{\delta/c}\br{2\sqrt{c\delta}+2c-c \br{\sqrt{\delta/c}+2}}\\
	& = \delta.
\end{align*}
This is true independent of our choice of $i$.
Therefore, we can first apply Lemma~\ref{lem:2ktonicproofs:gapseparator2} with $i=1$, giving us an optimal tour $\opteen \preceq \opt$ that is $2k$-tonic at $s_1$ and at~$s_{c m}$. 
When we apply this lemma again on $\opteen$ with $i=2$, we get a shortest tour $\opttwee \preceq \opteen$ that is $2k$-tonic at $s_1, s_2, s_{cm}$ and $s_{cm+1}$.
After doing this for all $i$ such that $i + c m \leq n$, we have a shortest tour $\opt' \preceq \opt$ that is $2k$-tonic at all $s_i$ for $1 \leq i \leq n-cm$ and for $cm \leq i \leq n$.
Since $n \geq 2cm$, this implies that the tour $\opt'$ is $2k$-tonic at all $s_i$. \qedhere
\end{itemize}
\end{proof}
If the points of point set $P$ have distinct integer $x$-coordinates (note that we therefore have a sparse point set with $c = 2$), then we can get a slightly better bound in a similar way.
\begin{theorem}\label{thm:2ktonicproofs:main2k_int}
Let $P$ be a set of points with distinct integer $x$-coordinates inside
a $\delta$-strip.
Then there exists an optimal TSP tour on $P$ that is $2k$-tonic for $k \mydef \floori{ 2\sqrt{\delta + 1}}$.
\end{theorem}
\begin{proof}
The proof is analogous to that of Theorem~\ref{thm:2ktonicproofs:main2k_sparse}, substituting $\floori{k/2}$ for $cm$. The only significant difference is the proof that $(x_j - x_i) (k-j+i+1) \geq \delta$. This time, we have
$$(x_j - x_i) (k-j+i+1) \geq \floor{\frac{k}{2}} \br{k-\floor{\frac{k}{2}} + 1} = \floor{\frac{k}{2}} \br{\ceil{\frac{k}{2}} + 1}.$$
We finish the proof using that $k \in \mathbb{N}$ and $k \geq 2\sqrt{\delta + 1} - 1$:
$$\floor{\frac{k}{2}} \br{\ceil{\frac{k}{2}} + 1} \geq \frac{k-1}{2} \br{\frac{k+1}{2} + 1} \geq \frac{2\sqrt{\delta+1}-1-1}{2} \br{ \frac{2\sqrt{\delta+1}-1+1}{2} + 1} = \delta.$$
\end{proof}

Finally, we show that the factor $\sqrt{\delta}$ is necessary, by giving a set of examples where the shortest tour has exact tonicity $\Theta(\sqrt{\delta})$:

\begin{theorem}\label{thm:2ktonicproofs:stingray}
For any $k \geq 2$, there exists a sparse point set $P_k$ inside a $\delta$-strip of which the shortest tour is unique and has exact tonicity $2k$, with $\delta_k = \Theta (k^2)$.
\end{theorem}
\begin{proof}

Let $\delta_k \mydef 2k^2$.
First, we will give a set $Q_k$ of points with distinct integer $x$-coordinates of which the shortest \emph{path} is unique and consists of the shortest $|Q_k|-1$ edges.
Then, we will add a point $t$, and argue that the (unique) shortest tour on $P_k = Q_k \cup \{t\}$ is the aforementioned path with both endpoints connected to $t$.

Let $k \geq 2$ be any fixed integer. Let $p \mydef (2k+1,0)$, let $q_i \mydef \left(k+i,2k i \right)$ and let $r_i = \left(i,2k i+k\right)$.
Define
$$Q_k \mydef \{p\} \cup \left\{q_i \; \middle | \; i \in \{0, \ldots, k\} \right\} \cup \left\{r_i \; \middle | \; i \in \{0, \ldots, k-1\} \right\}.$$
See Figure~\ref{fig:2ktonicproofs:stingray} for an example with $k=2$.
\begin{figure}
\begin{center}
\includegraphics{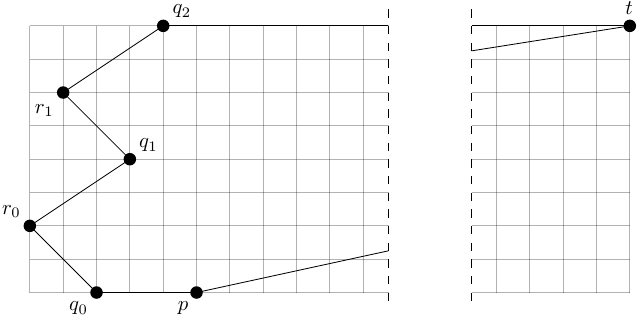}
\caption{A sketch of $Q_2'$ and its shortest tour, see Theorem~\ref{thm:2ktonicproofs:stingray}}
\label{fig:2ktonicproofs:stingray}
\end{center}
\end{figure}
We claim that $\pi \mydef (p,q_0,r_0,q_1,...,r_{k-1},q_k)$ is the unique shortest path on $Q_k$.
To see this, first observe that the edge $pq_0$ and all $k$ edges of the form $q_ir_i$ are the shortest possible edges, with lengths $k+1$ and length $\sqrt{2}k$, respectively.
The next shortest are the $k$ edges of the form $r_iq_{i+1}$, with length $\sqrt{2k^2+2k+1}$.
Note that these $2k+1$ edges together form the path $\pi$.
Therefore, if all other possible edges are indeed longer, then $\pi$ is indeed the shortest path through all points of $Q_k$, as claimed.
The only other serious candidates are of the form $r_ir_{i+1}$ or $q_iq_{i+1}$, all of which have length $\sqrt{4k^2+1}$.
Since $k \geq 2$, these are indeed longer than the edges in $\pi$.

Finally, let $t$ be any point in the $\delta$-strip with a sufficiently large $x$-coordinate.
Then, the two points closest to $t$ are those two that have the highest $x$-coordinates: $p$ and $q_k$.
Specifically, $t \mydef (3k^4, 2k^2)$ suffices.
Since $\pi$ is a shortest path on $Q_k$, and starts in $p$ and ends in $q_k$, we can combine it with the edges $q_kt$ and $tp$ to obtain the shortest tour $T$ through all points of $P_k = Q_k \cup \{t\}$.

In conclusion, for any $k \geq 2$, we can find a point set $P_k$, consisting of $3k^4+1$ points with distinct $x$-coordinates, and a $\delta_k$ of $2k^2$, of which the shortest tour is unique and has exact tonicity $2k$.
\end{proof}

\section{An algorithm for narrow cylinders}\label{sec:alg}
In this section we investigate how the complexity of \etsp depends on the
width~$\delta$ of the strip (or cylinder) containing the point set~$P$. Recall that a
point set~$P$ inside a $\delta$-cylinder is \emph{sparse} if for
every $x\in \Reals$ the set~$[x,x+1]\times \ball^{d-1}(\delta/2)$ contains $O(1)$~points.
\begin{theorem}\label{thm:alg}
Let $P$ be a set of $n$ points in a $\delta$-cylinder.
\begin{enumerate}
\item[(i)] If for any~$i\in \Integers$ the drum $[i\delta,(i+1)\delta]\times \ball^{d-1}(\delta/2)$
    contains at most $k$ points, then we can solve \etsp on~$P$ in $2^{O(k^{1-1/d})}n^2$ time.
\item[(ii)] If $P$ is sparse then we can solve \etsp in $2^{O(\delta^{1-1/d})}n^2$ time.
\end{enumerate}
\end{theorem}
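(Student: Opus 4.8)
The plan is to design a dynamic-programming algorithm over separators, analogous to the $O(n^k)$ algorithms for points on $k$ parallel lines, but exploiting the geometry of the narrow strip to keep the number of subproblems subexponential in the parameter. First I would establish a structural lemma (the one promised in the appendix, to which the excerpt alludes): in a $\delta$-strip, an optimal tour is $k$-tonic for $k = O(\sqrt{\delta})$ in the sparse case — more precisely, for any separator $s$, the number of edges of $\opt$ crossing $s$ is $O(\sqrt{\delta})$ (resp.\ $O(\sqrt{k})$ in case (i)). The intuition is that an edge crossing $s$ has horizontal extent at least some amount, and edges crossing $s$ that are ``long'' are expensive: a counting/charging argument shows that having too many crossing edges near one vertical line forces a locally suboptimal configuration, because one can reroute (e.g.\ via an exchange argument that shortcuts two crossing edges into two shorter, less-crossing edges when they are ``nested'' or ``parallel'' enough). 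Bounding the worst case over all strip positions gives the $O(\sqrt{\delta})$ bound.

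Next I would set up the DP. Sweep a vertical separator $s$ from left to right through the $n-1$ combinatorially distinct separator positions. The state at separator $s$ records: (a) the (at most $k = O(\sqrt{\delta})$) edges of the partial solution crossing $s$, described combinatorially by how the crossing edges' endpoints-to-the-left are paired up into paths (a matching/partition on the crossing ``stubs''), together with (b) enough geometric information to evaluate lengths. Since the strip is narrow, an edge crossing $s$ is essentially determined by its left endpoint, its right endpoint, and the separator; the key point is that the number of candidate left endpoints relevant to a crossing at $s$ is $O(1)$ per unit of $x$-extent, and the charging argument bounds the total $x$-extent of crossing edges, so only $2^{O(\sqrt{\delta})}$ combinatorial crossing-patterns need to be considered. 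Transitioning from $s$ to the next separator $s'$ (adding the one new point $p$ between them) amounts to: deciding which one or two crossing edges are incident to $p$, updating the path-pairing accordingly, and adding the corresponding edge lengths. Each transition touches $2^{O(\sqrt{\delta})}$ states and does $\poly$ work, and there are $n$ transitions; tabulating over all $n$ separators and all states gives $2^{O(\sqrt{\delta})} n^2$ after a careful accounting (one factor of $n$ from the separators, one more from ranging over the $O(n)$ possible ``other endpoint'' choices when a new edge is created, or equivalently from the size of the per-separator table). For part~(i) the identical argument with $k$ in place of $\delta$ gives $2^{O(\sqrt{k})} n^2$; part~(ii) follows since sparsity makes every $\delta \times \delta$ square contain $O(\delta)$ points, so $k = O(\delta)$ and $\sqrt{k} = O(\sqrt{\delta})$.

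The main obstacle I expect is twofold. First, proving the $k$-tonicity bound rigorously: the exchange argument must handle all the ways four (or more) edges can cross a single vertical line in a narrow strip, and one must verify that the rerouting both decreases length (or length-and-tonicity in the lexicographic sense used in Section~2) and actually reconnects into a single tour rather than splitting it — this connectivity bookkeeping is exactly the subtle point that made Section~2 computer-assisted, though here we only need an existential $O(\sqrt{\delta})$ bound, not a tight constant, so a cruder argument should suffice. Second, bounding the number of DP states by $2^{O(\sqrt{\delta})}$ rather than $n^{O(\sqrt{\delta})}$: the naive state space, which remembers the actual left endpoints of all crossing edges, is polynomial per edge and hence $n^{O(\sqrt{\delta})}$ overall; the improvement requires observing that, for the purpose of extending the solution to the right, a crossing edge only needs to be remembered up to which ``recent'' point it emanates from (points far to the left of $s$ cannot be the left endpoint of an edge crossing $s$ in an optimal tour, again by the length/charging argument), collapsing the per-edge information to $O(1)$ choices and the whole crossing pattern to $2^{O(\sqrt{\delta})}$ possibilities. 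Getting these two estimates to interlock cleanly — the same geometric lemma bounding both the tonicity and the ``locality'' of crossing edges — is the technical heart of the proof.
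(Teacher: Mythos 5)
Your overall architecture (left-to-right DP over separators whose states record the crossing edges and how their left stubs are paired into paths) matches the paper's, but there are two concrete gaps that prevent your sketch from reaching the claimed $2^{O(\sqrt{k})}n^2$ bound, plus one false claim. First, you assert that only $2^{O(\sqrt{\delta})}$ crossing patterns need be considered because each crossing edge has $O(1)$ candidate endpoints; but even granting $O(\sqrt{k})$ crossing edges with endpoints among the $O(k)$ nearby points, the naive count of candidate edge sets is $\binom{O(k)}{O(\sqrt{k})}=2^{O(\sqrt{k}\log k)}$, not $2^{O(\sqrt{k})}$. The paper gets the smaller family not from the tonicity bound but from a separator theorem of De~Berg~\etal\ (Lemma~\ref{lem:random_det_sep}), which simultaneously chooses a good separator position and produces an explicit family of $2^{O(\sqrt{k})}$ candidate crossing sets guaranteed to contain the true one; this is a nontrivial ingredient your argument does not supply. (Relatedly, the appendix tonicity bound only shows \emph{some} optimal tour is $O(\sqrt{\delta})$-tonic; it does not by itself tell you \emph{which} edges to enumerate.) Second, the path-pairing part of your state is a perfect matching on $O(\sqrt{k})$ stubs, and there are $2^{\Theta(\sqrt{k}\log k)}$ such matchings; storing them all gives $2^{O(\sqrt{k}\log k)}n^2$, overshooting the theorem. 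The paper avoids this with the rank-based representative-set machinery of Bodlaender~\etal\ (Lemma~\ref{lem:repr}), keeping only $2^{|B|-1}=2^{O(\sqrt{k})}$ weighted matchings per state; your proposal has no substitute for this step.

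Finally, your locality claim --- that points far to the left of $s$ cannot be left endpoints of edges crossing $s$ in an optimal tour --- is false as stated. The correct statement (the paper's Lemma~\ref{lem:two_long_edges} and Lemma~\ref{lem:sep_cross}) is that at most \emph{one} crossing edge can have a distant endpoint, and that single edge is exactly why the paper's candidate family at each separator has size $2^{O(\sqrt{k})}\cdot n$ rather than $2^{O(\sqrt{k})}$; you do hedge with an $O(n)$ factor in your runtime accounting, but the structural claim underpinning your ``$O(1)$ choices per edge'' compression is wrong and needs to be replaced by the at-most-one-distant-endpoint argument. A smaller stylistic difference: you advance one point at a time, whereas the paper places separators only in every second nonempty square and solves the intervening blocks with a black-box subexponential path-cover algorithm (\emph{TSP-repr}); either sweep order can work, but only if the two state-compression issues above are resolved.
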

\vspace*{2mm}
Part~(ii) of the theorem is a trivial consequence of part~(i), so the rest of the
section focuses on proving part~(i). Our proof uses and modifies some techniques
of~\cite{bbkk-ethtsp-2018}. For $i\in \Integers$, let $\sig_i$ be the drum
$[i\delta,(i+1)\delta]\times \ball^{d-1}(\delta/2)$. Define $n_i \mydef |\sig_i \cap P|$---we
assume without loss of generality that all points from $P$ lie in the interior of a drum~$\sig_i$---and let $k\mydef \max_i n_i$.  We say that a drum $\sig_i$ is \emph{empty} if $n_i=0$.
\medskip

We will regularly use that any subset~$E$ of edges from an optimal tour of a
point set~$P$ has the \emph{Packing Property}~\cite{bbkk-ethtsp-2018}:
for any $\lambda>0$ and any cube~$\sig$ of side length~$\lambda$, the number of edges from $E$ of length at least~$\lambda/4$ that intersect~$\sig$ is~$O(1)$.
The Packing Property is at the heart of several subexponential algorithms~\cite{Kann92,SmithW98}.
We need a recent separator theorem~\cite[Theorem 5]{bbkk-ethtsp-2018}, as explained next.

For a cube $\sig$ of side length $\lambda$ and a number $c>0$, let $c\sig$
denote the cube of side length~$c\lambda$ and with the same center as~$\sig$.
We say that an edge~$e$ \emph{enters} a cube~$\sig$ if one endpoint
of~$e$ is inside~$\sig$ and the other endpoint is outside~$\sig$.
The following lemma formalizes the core of the proof of Theorem~5 (and Corollary 3) 
in~\cite{bbkk-ethtsp-2018}. Theorem~5 from that paper is stronger than the lemma below, as it also involves balancing 
the number of points inside and outside the separator. To guarantee a good balance, 
a special cube $\sig$ is used in the lemma below, but for us that is
not relevant.
\begin{lemma}[De Berg~\etal~\cite{bbkk-ethtsp-2018}]\label{thm:sep_ETSP_2}
Let $P$ be a unknown set of points in $\Reals^d$.
Let $Q$ be a (known) subset of~$P$ of $m$ points.
Let $T$ be an unknown shortest tour on $P$, and let $E$ be the (unknown) set of edges of $T$ of which both endpoints are in $Q$.
Then for any cube~$\sig$ we can in $O(m^{d+1})$ time compute a scale factor $c\in[1,3]$ such that the following holds for the cube~$c\sig$ (which is the cube $\sigma$ scaled by a factor $c$ with respect to its center).
\begin{itemize}
	\item $|E_{c\sig}|=O(m^{1-1/d})$, where $E_{c\sig}\subseteq E$ denotes
          the set of edges entering~$c\sig$.
	\item There is a family $\cC\subseteq 2^{E}$ of $2^{O(m^{1-1/d})}$ \emph{candidate sets}
          such that $E_{c\sig} \in \cC$, and this family can be computed in
          $2^{O(m^{1-1/d})}$ time.
\end{itemize}
\end{lemma}

Recall that a separator for a set $P$ of points inside a $\delta$-cylinder is a hyperplane orthogonal to the $x$-axis which does not contain a point from $P$ and which partitions $P$ into two non-empty subsets.
Let $T_{\myopt}$ be an optimal TSP tour on~$P$.
For a separator~$t$, let $T(t,\sig_i)$ denote the set of edges from $T_\myopt$ with both endpoints in $\sig_{i-1}\cup \sig_i\cup \sig_{i+1}$ and crossing~$t$.

\begin{lemma}\label{lem:random_det_sep}
Let $\sig_i$ be a drum as defined above.
Then we can compute a separator~$t$ intersecting~$\sig_i$ such that $|T(t,\sig_i)|=O(k^{1-1/d})$ in $O(k^{d+1})$ time. 
Furthermore, there is a family $\cC$ of $2^{O(k^{1-1/d})}$ sets, which we call \emph{candidate sets},
such that $T(t,\sig_i)\in \cC$, and this family can be computed in $2^{O(k^{1-1/d})}$ time.
\end{lemma}
\begin{proof}
We apply Lemma~\ref{thm:sep_ETSP_2} to the point set
$Q \mydef P\cap (\sig_{i-1}\cup \sig_i\cup \sig_{i+1})$ and the
cube~$\sig$ of side length $\delta$ whose left facet contains
the right facet of~$\sig_{i}$.
The boundary $\bd (c\sig)$ of the cube $c\sig$ given by the lemma intersects $\sig_i$ (potentially on its boundary), and it is disjoint from the interiors of $\sig_{i-1}$ and $\sig_{i+1}$.
Let $\spt$ be the separator containing $\sig_i \cap \bd (c\sig)$. The number of points in the three drums is at most~$3k$.
The number of edges from $E$ crossing $\spt$ is $O((3k)^{1-1/d})=O(k^{1-1/d})$.
\end{proof}


\mypara{Separators and blocks.}
Consider the sequence of non-empty drums~$\sig_i$, ordered from left to right.
We use Lemma~\ref{lem:random_det_sep} to place a separator in every second drum
of this sequence.
Let $\SPT \mydef \{\spt_1,\ldots,\spt_{|\SPT|}\}$ be the resulting (ordered) set of separators.
Let $\spt_0$ and $\spt_{|\SPT|+1}$ denote separators coinciding with the left side of the leftmost non-empty drum and the right side of the rightmost non-empty drum, respectively;
see Fig.~\ref{fig:blocks} for an illustration in $\Reals^2$.
\begin{figure}
\begin{center}
\includegraphics{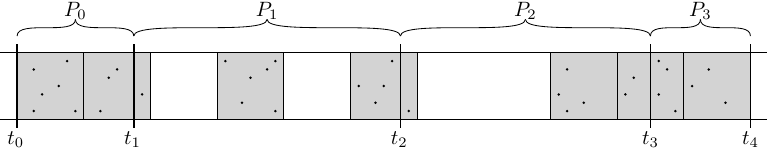}
\caption{The separators $\spt_0,\ldots,\spt_{|\SPT|+1}$ and the blocks they define.}
\label{fig:blocks}
\end{center}
\end{figure}
We call the region of the $\delta$-cylinder between two consecutive separators $\spt_i$
and $\spt_{i+1}$ a \emph{block}. Let $P_i\subseteq P$ denote the set of points in this block.
Note that $|P_i|\leq 3k$.

For an edge set $E$ and a separator~$\spt$, let $E(\spt)\subseteq E$ denote the subset of edges intersecting~$\spt$.
Define $P_{\myright}(E,\spt)$ to be the set of endpoints of the edges in $E(\spt)$ that lie to the right\footnote{The separators given by Lemma~\ref{lem:random_det_sep} are not incident to any input points.} of~$t$.
We call $P_{\myright}(E,\spt)$ the \emph{endpoint configuration of $E$ at $\spt$}.
The next two lemmas rule out endpoint configurations with
two ``distant'' points from the separator.

\begin{lemma}\label{lem:two_long_edges}
Let $s_{\myleft}: x=x_{\myleft}$ and $s_{\myright}: x=x_{\myright}$ be two separators
such that $x_{\myright}-x_{\myleft} > 3\delta$, and suppose there is a point $z\in P$
with $x_{\myleft} + 3 \delta / 2 < x(z) < x_{\myright} - 3 \delta/2$.
Then an optimal tour on $P$ cannot have two edges that both cross $s_{\myleft}$ and~$s_{\myright}$.
\end{lemma}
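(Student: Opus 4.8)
The plan is to argue by contradiction. Suppose $T_\myopt$ contains two distinct edges $e_1=a_1b_1$ and $e_2=a_2b_2$, each crossing both $s_{\myleft}$ and $s_{\myright}$; orient them so that $a_1,a_2$ lie to the left of $s_{\myleft}$ (hence $x(a_i)<x_{\myleft}$) and $b_1,b_2$ lie to the right of $s_{\myright}$ (hence $x(b_i)>x_{\myright}$). Since $x_{\myleft}<x(z)<x_{\myright}$, the vertical line $\ell_z:x=x(z)$ meets both $e_1$ and $e_2$; let $u_i:=e_i\cap \ell_z$. As $z,u_1,u_2$ lie in the $\delta$-strip on a common vertical line, $|zu_i|\leq\delta$ and $|u_1u_2|\leq\delta$, whereas the subsegment $a_iu_i$ of $e_i$ has horizontal extent $x(z)-x(a_i)>x(z)-x_{\myleft}>\tfrac32\delta$, so $|a_iu_i|>\tfrac32\delta$, and symmetrically $|u_ib_i|>\tfrac32\delta$. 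These are the inequalities the argument will run on.

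The key is that $T_\myopt$ must pass through $z$, and $z$ is not any of $a_1,b_1,a_2,b_2$. Deleting $e_1$ and $e_2$ from $T_\myopt$ yields two vertex-disjoint paths with the four endpoints $a_1,b_1,a_2,b_2$ (the degenerate case where $e_1,e_2$ share an endpoint is handled separately, and is only more restrictive). A short connectivity check shows these paths cannot be a path from $a_1$ to $b_1$ together with a path from $a_2$ to $b_2$ (that would split $T_\myopt$ into two cycles), so, up to relabelling $e_1\leftrightarrow e_2$ and mirroring left$\leftrightarrow$right, there are only two possibilities for the cyclic order of $a_1,b_1,a_2,b_2$ along $T_\myopt$, and $z$ lies on one of the two paths. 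In each case I would perform a bounded-size exchange: delete $e_1$, $e_2$, and one of the two tour-edges incident to $z$, and reconnect the resulting stubs --- routing one of the slab-crossings through $z$ --- so that the result is again a single tour on $P$. For each case there are only a constant number of admissible reconnections; I would select one and bound the length of every newly inserted edge by the triangle inequality through the relevant points among $u_1,u_2,z$. The point of the inequalities above is that rerouting a crossing through $z$ trades a subsegment of a long edge (length $>\tfrac32\delta$) for a hop of length $\leq\delta$, a strict saving exceeding $\tfrac12\delta$, which is meant to dominate the unavoidable ``patch'' cost (each patch edge bounded by $|u_1u_2|\leq\delta$ or $|u_iz|\leq\delta$); summing over the two long edges against the $O(\delta)$ patch cost then yields a strictly shorter tour, contradicting optimality of $T_\myopt$ (note $\preceq$-minimality is not needed here).

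The main obstacle is exactly this length bookkeeping: a slab-crossing edge has horizontal extent $>3\delta$, and any tour that breaks both $e_1$ and $e_2$ must still cross the slab, so the reconnection has to be chosen so as not to merely re-incur the ``far'' portions $a_iu_i$ and $u_ib_i$ of the two long edges. It is precisely here that the slack between $x_{\myright}-x_{\myleft}>3\delta$ and the $\tfrac32\delta$ bound on the position of $z$ is consumed, and where the distinctions --- on the cyclic order of the four endpoints, on which edge at $z$ is deleted, and on whether $z$ lies above or below the points $u_i$ --- must be carried out. I would organise this as a small case table, verifying in each entry that the chosen reconnection is a single tour and that the net change in length is strictly negative.
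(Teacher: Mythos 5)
Your overall strategy coincides with the paper's: assume two edges cross both separators, split on the two possible cyclic orders of the four endpoints along the tour, and in the relevant case perform a three-edge exchange that deletes the two long edges plus one edge at $z$ and reroutes the crossing through $z$. (The paper dispatches the ``same direction'' case with a plain 2-opt not involving $z$; your uniform treatment would also work there.) The gap is in the step you yourself flag as the main obstacle: the length bookkeeping, as you propose to carry it out, does not close. In any useful reconnection, one of the three new edges is unavoidably a ``patch'' joining the two left endpoints $a_1,a_2$ (or, symmetrically, the two right ones). If you bound every new edge ``by the triangle inequality through the relevant points among $u_1,u_2,z$'', i.e.\ through points on the vertical line $x=x(z)$, then for this patch edge you get $|a_1a_2|\le |a_1u_1|+|u_1u_2|+|u_2a_2|$, which re-incurs the left portions of \emph{both} long edges --- exactly the subsegments whose removal was supposed to produce the $>\tfrac{1}{2}\delta$ savings. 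Writing all six edges this way, everything telescopes and the net change is bounded below only by $-3\delta$: no saving at all. Your assertion that each patch edge is bounded by $|u_1u_2|\le\delta$ or $|u_iz|\le\delta$ is therefore unjustified for the one patch edge that matters.

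The missing idea is to control the patch edge via the separator $s_{\myleft}$ rather than via the line through $z$: bound $|a_1a_2|\le |a_1a_1'|+|a_1'a_2'|+|a_2'a_2|$, where $a_i'$ is the intersection of the edge $a_ib_i$ with $s_{\myleft}$, so that the middle term is at most $\delta$ and the two overhangs $|a_ia_i'|$ are exactly the parts of the deleted long edges lying outside the slab. The paper formalizes this by first moving the four endpoints onto the separators and arguing, as in Lemma~\ref{lem:reduction-to-consecutive}, that an improvement for the moved points implies one for the originals. With that bound the net saving becomes $2(x(z)-x_{\myleft})-3\delta$, which is positive precisely because $x(z)>x_{\myleft}+\tfrac{3}{2}\delta$. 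Note that this constant is tight, so there is genuinely no slack for a lossier estimate: the case table you defer to cannot be completed with the bounds you propose, only with separator-based ones.
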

\begin{proof}
Suppose for a contradiction that an optimal tour~$T$ has two (directed) edges,
$q_1 q_2$ and $r_1 r_2$, that both cross $s_{\myleft}$ and~$s_{\myright}$. (The direction of $q_1 q_2$
and $r_1 r_2$ is according to a fixed traversal of the tour.) If both edges cross
$s_{\myleft}$ and $s_{\myright}$ from left to right (or both cross from right to left) then
replacing $q_1 q_2$ and $r_1 r_2$ by $q_1 r_1$ and $r_2 q_2$
gives a shorter tour (see Observation~\ref{obs:2ktonicproofs:switcheroo}), leading to the desired contradiction.

Now suppose that $q_1 q_2$ and $r_1 r_2$ cross $s_{\myleft}$ and $s_{\myright}$ in opposite directions.
Assume w.l.o.g.~that $x(q_1) < x_{\myleft}$ and $x(r_2) < x_{\myleft}$, and that $z$ lies on the path
from $r_2$ to $q_1$. Let $u_1, \ldots, u_k$,  $v_1, \ldots, v_l$, $w_1, \ldots, w_m$ and $z_2$ be such that
\[
T = (q_1,q_2, u_1, \ldots, u_k, r_1,r_2, v_1, \ldots, v_l, z, z_2, w_1, \ldots, w_m, q_1).
\]
We claim that the tour $T'$ defined as
\[
T' = (q_1, r_2, v_1, \ldots, v_l, z, r_1, u_k, \ldots, u_1, q_2, z_2, w_1, \ldots, w_m, q_1)
\]
is a strictly shorter tour. To show this, we will first change our point set $P$ into
a point set~$P'$ such that if $\|T'\| < \|T\|$ on $P'$, then $\|T'\| < \|T\|$ also on $P$.
To this end we replace $q_1$ by $q'_1 \mydef q_1 q_2 \cap s_{\myleft}$ and
$q_2$ by $q'_2 \mydef q_1 q_2 \cap s_{\myright}$, and we replace $r_1$ by
$r'_1 \mydef r_1 r_2 \cap s_{\myleft}$ and $r_2$ by~$r'_2 \mydef r_1 r_2 \cap s_{\myright}$;
see Figure~\ref{fig:replaceP}.
\begin{figure}[b]
\begin{center}
\includegraphics{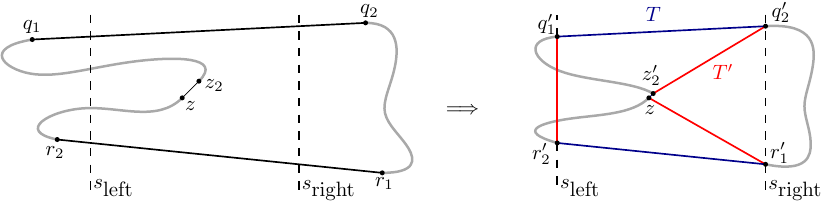}
\caption{Illustration for the proof of Lemma~\ref{lem:two_long_edges}.
The point $z'_2$ coincides with $z$ but is slightly displaced for visibility.
The sum of the lengths of the edges unique to $T$ (displayed in blue) is strictly larger than the sum of the lengths of the edges unique to $T'$ (displayed in red).}
\label{fig:replaceP}
\end{center}
\end{figure}

Finally, we replace $z_2$ by a point $z'_2$ coinciding with~$z$ (note that if $z_2 = q_1$, then we can split it before moving the resulting two points, analogous to the proof of Theorem \ref{thm:bitonic:main}).
Using a similar reasoning as in the proof of Lemma~\ref{lem:bitonic:reduction_to_consecutive},
one can argue that the point set~$P' := (P \setminus \{q_1,q_2,r_1,r_2,z_2\}) \cup \{q'_1,q'_2,r'_1,r'_2,z'_2\}$
has the required property.
To get the desired contradiction it thus suffices to show that $\|T\| - \|T'\| > 0$ on~$P'$.
This is true because
\[
\begin{array}{lll}
\|T\| - \|T'\| 	& = & |q_1' q_2'| + |r_1' r_2'| + |z z_2'| - |q_1' r_2'| - |z r_1'| - |q_2' z_2'| \\
			& \geq & |x_{\myright} - x_{\myleft}| + |x_{\myright} - x_{\myleft}| + 0\\
			& & \hspace{1cm}  - \delta - (|x_{\myright}- x(z)|+\delta) - (|x_{\myright} - x(z)|+\delta) \\
			& > & 2(x(z)-x_{\myleft}) - 3 \delta > 0,
\end{array}
\]
where the last line uses that $x_{\myleft} + 3 \delta / 2 < x(z)$.
\end{proof}

\begin{figure}[t]
\centering
\includegraphics{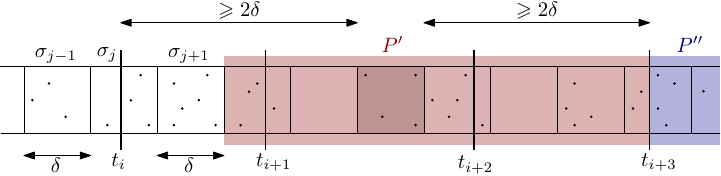}
\caption{Only $O(1)$ edges from $T_{\myopt}$ can cross $\spt_i$ and the hyperplane $x={(j+2)\delta}$, because of the Packing Property.
Tour edges crossing $\spt_i$ and $\spt_{i+3}$ obey Lemma~\ref{lem:two_long_edges}.
The points in the red area form $P'$. The points in the blue area form $P''$.
}\label{fig:twolongedges}
\end{figure}

\begin{lemma}\label{lem:sep_cross}
Let $\spt_i\in\SPT$ be a separator, and let $\sig_{j}=[j\delta,(j+1)\delta]\times\ball^{d-1}(\delta/2)$
denote the drum in which it is placed. Let $T_\myopt$ be an optimal tour on $P$ and let
$V := P_{\myright}(T_{\myopt},\spt_i)$ be its endpoint configuration at~$\spt_i$.
Let $P'$ denote the set of input points with $x$-coordinates between $(j+2)\delta$
and $x(\spt_{i+3})$, and let $P''$ be the set of input points with $x$-coordinate
larger than $x(\spt_{i+3})$.
Then (i) $|P' \cap V|\leq c^*$ for some absolute constant~$c^*$,
and (ii) $|P''\cap V|\leq 1$.
\end{lemma}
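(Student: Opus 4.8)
The plan is to prove both parts by combining the Packing Property with Lemma~\ref{lem:two_long_edges}, using the separator $x = (j^*+1)\delta$ (the right side of $\sig_{j^*+1}$) as an auxiliary vertical line, as suggested by Fig.~\ref{fig:twolongedges}. Every point $v \in P' \cap V$ or $v\in P''\cap V$ is, by definition of the endpoint configuration, the right endpoint of a tour edge $e_v$ of $T_\myopt$ that crosses $s_j$; since $x(s_j) \le j^*\delta$, such an edge $e_v$ has its left endpoint to the left of $s_j$ and its right endpoint at $v$.

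First I would handle part~(ii). Suppose $v_1, v_2 \in P'' \cap V$ are distinct, with edges $e_{v_1}, e_{v_2}$ crossing $s_j$ and having right endpoints $v_1, v_2$ with $x$-coordinate larger than $x(s_{j+3})$. Then both edges cross $s_j$ and $s_{j+3}$. To invoke Lemma~\ref{lem:two_long_edges} with $s_\myleft := s_j$ and $s_\myright := s_{j+3}$, I need (a) $x(s_{j+3}) - x(s_j) > 3\delta$ and (b) a point $z \in P$ strictly in the middle third of that interval. For (a): between $s_j$ and $s_{j+3}$ there are (at least) the three separators' defining squares plus the squares skipped by the ``every second block'' placement, and in particular there are non-empty squares contributing enough width; more carefully, each $s_{j'}$ lies in a square $\sig_{j'^*}$ and consecutive separators are at least, say, $2\delta$ apart by the placement rule, so over three steps we exceed $3\delta$ — this is where I would pin down the constant in the placement rule to guarantee the strict inequality. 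For (b): the block between $s_{j+1}$ and $s_{j+2}$ is non-empty by construction (separators are placed only using non-empty squares), so it contains an input point $z$; checking that $z$ falls in the open middle third $(x(s_j) + 3\delta/2,\ x(s_{j+3}) - 3\delta/2)$ again reduces to the spacing bound. Then Lemma~\ref{lem:two_long_edges} gives a contradiction, so $|P'' \cap V| \le 1$.

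For part~(i), the edges $\{e_v : v \in P' \cap V\}$ all cross $s_j$ and all have their right endpoint with $x$-coordinate in $[(j^*+1)\delta,\ x(s_{j+3})]$, hence in particular they all cross the vertical line $x = (j^*+1)\delta$ (which is the right side of $\sig_{j^*+1}$, a square of side $\delta$). Each such edge has length at least $(j^*+1)\delta - j^*\delta = \delta \ge \delta/4$, and each intersects the square $\sig_{j^*+1}$. By the Packing Property applied to the square $\sig_{j^*+1}$ with $t=\delta$, the number of tour edges of length at least $\delta/4$ intersecting $\sig_{j^*+1}$ is $O(1)$; this gives the absolute constant $c^*$. (If one wants the edges to literally have an endpoint in or cross into $\sig_{j^*+1}$ rather than merely intersect it, note that crossing the line $x=(j^*+1)\delta$ while also crossing $s_j$ forces the edge to span the full $x$-range of $\sig_{j^*+1}$, hence to intersect that square.)

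The main obstacle is the bookkeeping in part~(ii): verifying that the placement of separators in ``every second non-empty block'' actually yields $x(s_{j+3}) - x(s_j) > 3\delta$ with a \emph{strict} inequality, and that the guaranteed intermediate non-empty block supplies a point $z$ landing in the open middle third rather than just somewhere between $s_j$ and $s_{j+3}$. Both come down to a careful accounting of how many squares of width $\delta$ separate $s_j$ from $s_{j+3}$ and where the non-empty ones sit; once the spacing lower bound is nailed down, the two applications (Packing Property for (i), Lemma~\ref{lem:two_long_edges} for (ii)) are immediate.
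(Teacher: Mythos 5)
Your proposal follows essentially the same route as the paper: part~(i) by observing that an edge crossing $s_j$ with right endpoint beyond $(j^*+1)\delta$ must fully cross $\sig_{j^*+1}$, hence has length at least $\delta$ and is controlled by the Packing Property, and part~(ii) by applying Lemma~\ref{lem:two_long_edges} to the pair $s_j, s_{j+3}$ with a witness point $z$ from an intermediate non-empty square. The bookkeeping you flag as the remaining obstacle is settled exactly as you anticipate: since separators occupy every second non-empty square, there are five non-empty squares strictly between $s_j$ and $s_{j+3}$, so the middle one lies at distance at least $2\delta > 3\delta/2$ from each of them, which simultaneously yields the required gap $x(s_{j+3})-x(s_j)>3\delta$ and a point $z$ in the open middle region.
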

\begin{proof}
Consider a tour edge crossing $\spt_i$.
Any edge crossing $\spt_i$ ending in $P'$ must fully cross $\sig_{j+1}$, see Figure~\ref{fig:twolongedges}.
Therefore such edges have length at least~$\delta$.
By the Packing Property, there can be at most $c^* = O(1)$ such edges. This proves (i).

To prove (ii), note that there is a non-empty drum between $\spt_i$ and~$\spt_{i+3}$ with distance at least $2\delta$ from $\spt_i$ and $\spt_{i+3}$.
Lemma~\ref{lem:two_long_edges} thus implies that $T_\myopt$ has at most one edge crossing both $\spt_i$ and $\spt_{i+3}$, proving~(ii).
\end{proof}
Putting Lemma~\ref{lem:random_det_sep} and Lemma~\ref{lem:sep_cross} together,
we get the following corollary.
\begin{corollary}\label{cor:candidate-enumeration}
Let $T_{\myopt}$ be an optimal tour, let $\spt_i\in\SPT$ be a separator,
and let $V\subset P$ be the (unknown) endpoint configuration of $T_{\myopt}$ at $\spt_i$.
Then we can enumerate in $2^{O(k^{1-1/d})} n$ time a family $\cB_i$ of
candidate endpoint sets such that $V\in \cB_i$.
\end{corollary}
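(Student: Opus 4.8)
The plan is to combine the two structural results proved above, Lemma~\ref{lem:random_det_sep} (local candidate sets near $s_j$) and Lemma~\ref{lem:sep_cross} (the endpoint configuration extends only $O(1)$ blocks away, with at most one truly distant point), into a single enumeration. First I would decompose the endpoint configuration $V = P_{\myright}(T_{\myopt},s_j)$ according to how far its points are from $s_j$. Write $V = V_{\mathrm{near}} \cup V_{\mathrm{mid}} \cup V_{\mathrm{far}}$, where $V_{\mathrm{near}}$ consists of the endpoints lying in $\sig_{j^*-1}\cup\sig_{j^*}\cup\sig_{j^*+1}$ (together with the matching left-endpoints these come from the edges counted in Lemma~\ref{lem:random_det_sep}), $V_{\mathrm{mid}}$ consists of endpoints with $x$-coordinate between $(j^*+1)\delta$ and $x(s_{j+3})$, i.e.\ $V_{\mathrm{mid}} = P' \cap V$ in the notation of Lemma~\ref{lem:sep_cross}, and $V_{\mathrm{far}} = P'' \cap V$ consists of endpoints with $x$-coordinate beyond $x(s_{j+3})$.

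Next I would bound and enumerate each part. For $V_{\mathrm{near}}$: by Lemma~\ref{lem:random_det_sep} the set $T(s_j,\sig_{j^*})$ of tour edges crossing $s_j$ with both endpoints in the three-square neighbourhood lies in a precomputed family $\cC$ of size $2^{O(\sqrt{k})}$, computable in $2^{O(\sqrt{k})}$ time; since $V_{\mathrm{near}}$ is determined by the right-endpoints of such an edge set (plus we may also need the right-endpoints of edges crossing $s_j$ that leave the neighbourhood — but there are $O(1)$ of those by the Packing-Property part of Lemma~\ref{lem:sep_cross}(i), and their right-endpoints lie among the $\leq 3k$ points of the block, giving only $k^{O(1)}$ further choices), we obtain $2^{O(\sqrt{k})}\cdot k^{O(1)} = 2^{O(\sqrt{k})}$ candidates for $V_{\mathrm{near}}$. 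For $V_{\mathrm{mid}}$: by Lemma~\ref{lem:sep_cross}(i) we have $|V_{\mathrm{mid}}|\leq c^*$ for an absolute constant, and all its points lie among the at most $3c^*\cdot$(constant number of blocks)$\,= O(k)$ input points in the relevant range, so the number of candidate choices for $V_{\mathrm{mid}}$ is $\binom{O(k)}{\leq c^*} = k^{O(1)}$. For $V_{\mathrm{far}}$: by Lemma~\ref{lem:sep_cross}(ii) we have $|V_{\mathrm{far}}|\leq 1$, and this single point can be any of the $n$ input points lying to the right of $s_{j+3}$, giving at most $n+1$ choices. Taking the Cartesian product of the three candidate families yields $\cB_j$ with $|\cB_j| = 2^{O(\sqrt{k})}\cdot k^{O(1)}\cdot n = 2^{O(\sqrt{k})}\cdot n$, and the construction time is dominated by the same bound; since $V = V_{\mathrm{near}}\cup V_{\mathrm{mid}}\cup V_{\mathrm{far}}$ is one of the enumerated products, $V\in\cB_j$ as required.

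The main thing to be careful about — rather than a genuine obstacle — is the bookkeeping between \emph{edges} crossing $s_j$ and \emph{endpoints} on the right of $s_j$: Lemma~\ref{lem:random_det_sep} speaks about an edge set in a bounded neighbourhood, while the corollary asks for the set of right-endpoints of \emph{all} tour edges crossing $s_j$. The reconciliation is exactly Lemma~\ref{lem:sep_cross}: every crossing edge either stays in the three-square neighbourhood (handled by $\cC$), or is ``long'' and there are only $O(1)$ such edges, of which all but one have their right endpoint within $O(1)$ blocks of $s_j$ and exactly one may reach arbitrarily far. I would also note that we must generate $\cB_j$ for a single fixed $s_j$ in $2^{O(\sqrt{k})}\cdot n$ time, which the product construction above respects since the only factor of $n$ comes from the at-most-one far endpoint.
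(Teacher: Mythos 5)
Your proposal is correct and follows essentially the same route as the paper: take the candidate edge sets from Lemma~\ref{lem:random_det_sep} and keep their right endpoints, then append up to $c^*$ further endpoints from the $O(1)$ following non-empty squares ($k^{O(1)}$ choices, justified by the Packing Property via Lemma~\ref{lem:sep_cross}(i)) and at most one distant endpoint ($O(n)$ choices, via Lemma~\ref{lem:sep_cross}(ii)), yielding $2^{O(\sqrt{k})}\cdot k^{O(1)}\cdot n = 2^{O(\sqrt{k})}\cdot n$ candidate sets. Your explicit remark about long crossing edges whose right endpoint lies inside the three-square neighbourhood is a point the paper's one-line proof glosses over, but it is handled the same way (there are only $O(1)$ such edges by the Packing Property), so the two arguments coincide.
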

\begin{proof}
Let $c^*$, $P'$ and $P''$ be as defined in Lemma~\ref{lem:sep_cross}.
We use the enumeration of candidate edge sets from Lemma~\ref{lem:random_det_sep}, and only keep the endpoints that are to the right of $\spt_i$.
To each of these endpoint sets, we add at most $c^*$ endpoints from $P'$, and we add at most one endpoint from $P''$.
There are $\poly(k)$ and $O(n)$ ways to add such endpoints, respectively.
This gives a family of $2^{O(k^{1-1/d})}\cdot \poly(k)\cdot n=2^{O(k^{1-1/d})} n$ sets.
By Lemma~\ref{lem:sep_cross}, the resulting family of endpoint sets contains~$V$.
\end{proof}
In addition to the sets $\cB_i\ (i=1,\dots,|\SPT|)$, we define $\cB_0=\cB_{|\SPT|+1}\mydef \emptyset$.

\mypara{Matchings, the rank-based approach, and representative sets.}
When we cut a tour using a separator, the tour falls apart into
several paths. As in other TSP algorithms, we need to make sure that the paths
on each side of the separator can be patched up into a Hamiltonian cycle.
Following the terminology of~\cite{bbkk-ethtsp-2018}, let $P$ be our input point set,
and let $M$ be a perfect matching on a set~$B\subseteq P$, where the points of $B$
are called  \emph{boundary points}. A collection~$\cP=\{\pi_1,\ldots,\pi_{|B|/2}\}$
of paths \emph{realizes $M$ on $P$} if
(i) for each edge $(p,q)\in M$ there is a path $\pi_i\in \cP$ with~$p$ and~$q$ as endpoints,
and (ii) the paths together visit each point $p\in P$ exactly once.
We define the total length of $\cP$ as the length of the edges in its paths.
In general, the type of problem that needs to be solved on one side of a separator
is called \bdtsp.
The input to such a problem is a point set $P\subset \Reals^d$,
a set of boundary points $B\subseteq P$, and a perfect matching $M$ on $B$.
The task is to find a collection of paths of minimum total length that realizes~$M$ on~$P$.

To get the claimed running time, we need to avoid iterating over all matchings.
We can do this with the so-called \emph{rank-based approach}~\cite{single-exponential,CyganKN18}.
As our scenario is very similar to the general \etsp, we can reuse most definitions
and some proof ideas from~\cite{bbkk-ethtsp-2018}.

Let $\cM(B)$ denote the set of all perfect matchings on~$B$, and consider a
matching $M\in \cM(B)$. We can turn $M$ into a weighted matching by assigning
to it the minimum total length of any collection of paths realizing~$M$.
In other words, $\weight(M)$ is the length of the solution of \bdtsp for
input~$(P,B,M)$. We use $\cM(P,B)$ to denote the set of all such weighted
matchings on~$B$. Note that $|\cM(P,B)| = |\cM(B)| = 2^{O(|B|\log |B|)}$.

We say that two matchings $M,M' \in \cM(B)$ \emph{fit} if their union is a
Hamiltonian cycle in $B$. Consider a pair $P,B$. Let $\cR$ be a set of weighted matchings
on~$B$ and let $M$ be another matching on~$B$. We define
$\myopt(M,\cR) := \min \{ \weight(M') : M' \in \cR \text{ and }M' \mbox{ fits } M\}$, that is,
$\myopt(M,\cR)$ is the minimum total length of any collection of paths on $P$
that together with the matching $M$ forms a cycle.
A set $\cR \subseteq \cM(P,B)$ of weighted matchings  is defined to be
\emph{representative} of another set $\cR' \subseteq \cM(P,B)$
if for any matching $M\in \cM(B)$ we have $\myopt(M,\cR) = \myopt(M,\cR')$.
Note that our algorithm is not able to compute a representative set of $\cM(P,B)$,
because it is also restricted by the Packing Property and Lemma~\ref{lem:two_long_edges},
while a solution of \bdtsp for a generic $P,B,M$ may not satisfy them.
Let $\cM^*(P,B)$ denote the set of weighted matchings in $\cM(P,B)$ that have a
corresponding \bdtsp solution satisfying the Packing Property and Lemma~\ref{lem:two_long_edges}.

The basis of the rank-based approach is the following result.
\begin{lemma} {\rm\bf [Bodlaender~\etal~\cite{single-exponential}, Theorem 3.7]} \label{lem:repr}
There exists a set $\overline{\cR}$ consisting of $2^{|B|-1}$ weighted matchings that is representative of the set $\cM(P,B)$.  Moreover, there is an algorithm \emph{Reduce} that, given
a representative set~$\cR$ of $\cM(P,B)$, computes such a set $\overline{\cR}$ in
$|\cR| \cdot 2^{O(|B|)}$ time.
\end{lemma}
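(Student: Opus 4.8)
Since Lemma~\ref{lem:repr} is quoted from Bodlaender~\etal, the ``proof'' is really a recap of the rank-based method over $\mathbb{F}_2$, and that is what I would present. First I would introduce the Hamiltonicity-compatibility matrix $H$, indexed by $\cM(B)\times\cM(B)$, with $H[M,M']=1$ exactly when $M$ and $M'$ fit, i.e.\ when $M\cup M'$ is a single cycle through all of $B$. The heart of the argument --- and the step I expect to be the main obstacle --- is the claim that $H$ has $\mathbb{F}_2$-rank at most $2^{|B|-1}$; concretely, one exhibits an explicit factorization $H = A A^{\top}$ where $A$ is a $|\cM(B)|\times 2^{|B|-1}$ matrix whose columns are indexed by subsets $X\subseteq B\setminus\{b_0\}$ for a fixed $b_0\in B$, and $A[M,X]$ records a parity condition describing how $X$ cuts the matching~$M$. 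Verifying $AA^{\top}=H$ amounts to counting, modulo $2$, certain closed walks in $M\cup M'$; this combinatorial identity is exactly the content of the cited work~\cite{CyganKN18,single-exponential}, and is the part I would invoke rather than reprove.

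Granting the rank bound, the next step is the \emph{Reduce} procedure. Given a representative set $\cR$ of $\cM(P,B)$, I would sort its weighted matchings by nondecreasing weight and sweep through them, maintaining an $\mathbb{F}_2$-basis of the rows $\{A[M,\cdot] : M \text{ processed so far}\}$: a matching $M$ is added to the output set $\overline{\cR}$ (keeping its weight) precisely when its row is independent of the rows already kept. Since the rows lie in a space of dimension at most $2^{|B|-1}$, at most $2^{|B|-1}$ matchings survive. For the running time: computing the needed rows of $A$ costs $|\cR|\cdot 2^{O(|B|)}$ (each row has $2^{|B|-1}$ entries, each evaluable in $\poly(|B|)$ time), and the incremental Gaussian elimination adds a further factor polynomial in $2^{|B|-1}$, so the total is $|\cR|\cdot 2^{O(|B|)}$ as claimed.

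Finally I would verify correctness, i.e.\ that $\overline{\cR}$ is representative of $\cM(P,B)$, which by hypothesis on $\cR$ reduces to showing $\myopt(M,\overline{\cR}) = \myopt(M,\cR)$ for every $M\in\cM(B)$. The inequality $\myopt(M,\overline{\cR})\ge\myopt(M,\cR)$ is immediate since $\overline{\cR}\subseteq\cR$. For the reverse, take $\tilde M\in\cR$ fitting $M$ with $\weight(\tilde M)=\myopt(M,\cR)$; when $\tilde M$ was processed, its row was an $\mathbb{F}_2$-combination of rows of kept matchings $M_1,\dots,M_r\in\overline{\cR}$, each of weight at most $\weight(\tilde M)$ (they were processed no later, in nondecreasing weight order). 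Evaluating this linear relation in the coordinate indexed by $M$ gives $1 = H[\tilde M,M] = \sum_i H[M_i,M]$ over $\mathbb{F}_2$, so some $M_i$ has $H[M_i,M]=1$, i.e.\ fits $M$; hence $\myopt(M,\overline{\cR})\le\weight(M_i)\le\weight(\tilde M)=\myopt(M,\cR)$. Specializing the construction to $\cR=\cM(P,B)$ then produces a representative set $\overline{\cR}$ of size at most $2^{|B|-1}$ (padded arbitrarily to exactly $2^{|B|-1}$ if desired), which completes the statement.
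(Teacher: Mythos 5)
The paper does not prove this lemma at all---it is imported verbatim as Theorem~3.7 of Bodlaender~\etal~\cite{single-exponential}---and your recap is a correct and faithful rendering of that standard rank-based argument: the cut-indexed factorization $H=AA^{\top}$ over $\mathbb{F}_2$ giving rank at most $2^{|B|-1}$, the weight-ordered greedy basis extraction for \emph{Reduce}, and the linear-algebraic verification that representativity is preserved. Nothing is missing relative to what the paper relies on, so this matches the paper's (cited) approach.
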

Lemma~\ref{lem:repr} can also be applied in our case, where $\cR$ is representative of $\cM^*(P,B)$, the set of weighted matchings in $\cM(P,B)$ that have a corresponding \bdtsp solution satisfying the Packing Property and Lemma~\ref{lem:two_long_edges}.

We say that perfect matchings $M$ on $B$ and $M'$ on $B'$ are
\emph{compatible} if their union on $B\cup B'$ is either a single cycle or a collection of paths. The \emph{join} of these matchings, denoted by $\mathrm{Join}(M,M')$ is a perfect matching on
the symmetric difference $B\symdiff B'$ obtained by iteratively contracting edges with an incident vertex of degree~$2$ in the graph $(B\cup B',M\cup M')$ .

\mypara{The algorithm.}
Our algorithm is a dynamic program, where we define a subproblem for each separator
index $i$, and each set of endpoints $B\in \cB_i$. The value of $A[i,B]$ is defined as follows.
\[A[i,B]:= \begin{cases}\parbox{0.8\textwidth}{A representative set containing pairs $(M,x)$, where $M$ is a perfect matching on~$B$
and $x$ is a real number equal to the minimum total length of a path cover of
$P_0\cup\dots\cup P_{i-1} \cup B$ realizing the matching $M$.}\end{cases}\]
 The length of the entire
tour will be the value corresponding to the empty matching at index $|\SPT|+1$, that is,
it will be the value~$x$ such that $A[|\SPT|+1,\emptyset]=\{(\emptyset,x)\}$.

Our dynamic-programming algorithm works on a block-by-block basis.
It uses the algorithm \emph{TSP-repr} by De Berg~\etal~\cite{bbkk-ethtsp-2018} for \bdtsp on arbitrary $d$-dimensional point sets to solve subproblems inside a block. In particular, \emph{TSP-repr} computes a set of weighted matchings that represents $\cM^*(P,B)$.
Algorithm~\ref{alg:sparse} gives our algorithm in a pseudocode, which is further explained below.
\begin{algorithm}[ht]
\caption{NarrowRectTSP-DP($P, \delta$)}\label{alg:sparse}%
\hspace*{\algorithmicindent} \textbf{Input:} A set $P$ of points in $(-\infty,\infty) \times \ball^{d-1}(\delta/2)$ that is sparse\\
\hspace*{\algorithmicindent} \textbf{Output:} The length of the shortest tour through all points in $P$
\begin{algorithmic}[1]
\State Compute the separators $\spt_0,\ldots,\spt_{|\SPT|+1}$ using Lemma~\ref{lem:random_det_sep}, as explained above. \label{step:separators}
\State Compute the sets $\cB_0, \ldots, \cB_{|\SPT|+1}$ as explained in the proof of Corollary~\ref{cor:candidate-enumeration}. \label{step:cbjs}
\State $A[0,\emptyset] \mydef \{(\emptyset,0)\}$
\For {$i = 1$ to $|\SPT|+1$}\label{step:outerloop}
	\ForAll {$B\in \cB_i$} \label{step:outer_bd_iter}
		\State $A[i,B] \mydef \emptyset$
		\ForAll {$B'\in \cB_{i-1}$ where $B' \subseteq P_{i-1} \cup B$}\label{step:innerboundary}
			\ForAll {$(M,x)\in \text{\emph{TSP-repr}}(P_{i-1} \cup B,\ B'\symdiff B)$}\label{step:blocksolve}
				\ForAll {$(M',x')\in A[i-1,B']$}	\label{step:prevA}		
					\If{$M'$ and $M$ are compatible}\label{step:checkcomp}
						\State Insert $(\mathrm{Join}(M,M'),x+x')$ into $A[i,B]$\label{step:insert}
					\EndIf
				\EndFor
			\EndFor
		\EndFor
		\State $Reduce(A[i,B])$\label{step:reduce}
	\EndFor
\EndFor
\State \Return $\length(A[|\SPT|+1,\emptyset])$
\end{algorithmic}
\end{algorithm}

The goal of Lines~\ref{step:outer_bd_iter}--\ref{step:reduce}
is to compute a representative set $A[i,B]$ of $\cM^*(P_0\cup\dots\cup P_{i-1} \cup B,\ B)$
of size $2^{O(k^{1-1/d})}$. We say that a point $p\in P$ is \emph{distant} (with respect to a
separator~$\spt_i$) if it is to the right of $\spt_{i+3}$, and denote the set of distant input points from $\spt_i$ by $\mathrm{distant}(\spt_i)$.
First, we iterate over all sets $B\in \cB_i$ in Line~\ref{step:outer_bd_iter}.
Next, we consider certain boundary sets $B'\in \cB_{i-1}$. Note that if there is a distant point $p\in B'\cap\,  \mathrm{distant}(\spt_i)$, then a tour edge crossing $\spt_{i-1}$ ending at $p$ also crosses $\spt_i$, and thus $p$ is also a (distant) point of $B$.

In Line~\ref{step:blocksolve} we call the algorithm of
De Berg~\etal~\cite{bbkk-ethtsp-2018} 
on the point set $P_{i-1}\cup B'\cup B$ with $B' \symdiff B$ as boundary set;
note that  $P_{i-1} \cup B' \cup B = P_{i-1} \cup B$ (since
$B'\subseteq P_{i-1}\cup B$), so the first parameter of the
call can be written as $P_{i-1}\cup B$.
This gives us a representative set
$\cR$ of $\cM^*(P_{i-1} \cup B,\ B'\symdiff B)$. For each weighted matching
$(M,x)\in \cR$, and for each weighted matching from the representative set
$(M',x')\in A[i-1,B']$, we check whether $M$ and $M'$ are compatible. If so,
then  taking the union of the corresponding path covers gives a path cover of
$P_0\cup\dots\cup P_{i-1} \cup B$ of total length $x+x'$, which realizes the matching
$\mathrm{Join}(M,M')$ on $B$; we then add $(\mathrm{Join}(M,M^*),x+x^*)$
to $A[i,B]$ in Line~\ref{step:insert}.

After iterating over all boundary sets $B'$, the entry $A[i,B]$ stores a set of weighted
matchings. These weighted matchings will be used on the next iteration at Line~\ref{step:prevA} to generate the $A[i+1,\cdot]$ entries. To prevent each iteration taking more time than the previous one, thus blowing up the constants, we reduce $A[i,B]$ to size at most $2^{M k^{1-1/d}}$ for some large but fixed $M$ using the
\emph{Reduce} algorithm~\cite{single-exponential} in Line~\ref{step:reduce}.

Note that in the final iteration, when $i=t+1$, we take $B=\emptyset$. Now
$M$ and $M'$ are compatible if and only if the union of the corresponding path covers
is a Hamiltonian cycle. Line~\ref{step:reduce} then gives to a single entry the smallest weight.
Therefore, the length of the only entry in $A[t+1,\emptyset]$ after the loops have ended,
is the length of the optimal TSP tour.
Hence, the correctness of NarrowRectTSP-DP follows from the next lemma.
\begin{lemma}~\label{lem:represent_this}
After Step~\ref{step:reduce}, the set $A[i,B]$ is a representative set of $\cM^*(P_0\cup\dots\cup P_{i-1} \cup B, B)$.
\end{lemma}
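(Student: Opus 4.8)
The plan is to proceed by induction on the block index $j$, showing that after Line~\ref{step:reduce} the entry $A[j,B]$ is representative of $\cM^*(P_1\cup\dots\cup P_j\cup B,\ B)$. The base case $j=0$ is immediate since $A[0,\emptyset]=\{(\emptyset,0)\}$ and there is a single (empty) matching on the empty boundary. For the inductive step, I would first establish a \emph{decomposition claim}: any \bdtsp solution on $P_1\cup\dots\cup P_j\cup B$ realizing a matching $M$ and satisfying the Packing Property and Lemma~\ref{lem:two_long_edges} can be split at the separator $s_j$ into (a) a path cover of the block $P_j\cup B'\cup B$ realizing some matching on $B'\symdiff B$, and (b) a path cover of $P_1\cup\dots\cup P_{j-1}\cup B'$ realizing a matching $M'$ on $B'$, where $B'$ is the endpoint configuration at $s_{j-1}$, the two matchings are compatible, and $M=\mathrm{Join}(M,M')$. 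Here the key structural input is Corollary~\ref{cor:candidate-enumeration}: the true endpoint configuration at $s_{j-1}$ lies in $\cB_{j-1}$, so the outer loop in Line~\ref{step:innerboundary} does examine the relevant $B'$. One also needs to check that the side condition $B'\cap\mathrm{distant}(s_j)\subseteq B$ in Line~\ref{step:innerboundary} never discards a valid $B'$: this is exactly the remark following the algorithm, namely that a distant endpoint of an edge crossing $s_{j-1}$ must also cross $s_j$ and hence lie in $B$.

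Next I would argue that the construction in Lines~\ref{step:blocksolve}--\ref{step:insert} correctly assembles representative sets. Since \emph{TSP-repr} returns a representative set $\cR$ of $\cM^*(P_j\cup B'\cup B,\ B'\symdiff B)$, and by induction $A[j-1,B']$ is representative of $\cM^*(P_1\cup\dots\cup P_{j-1}\cup B',\ B')$, I would invoke the standard ``gluing'' property of representative sets under the $\mathrm{Join}$ operation (as used in the rank-based approach of~\cite{single-exponential,bbkk-ethtsp-2018}): joining a representative set on one side with a representative set on the other side, over all compatible pairs, yields a representative set of the glued problem. Concretely, for every $M\in\cM(B)$, the quantity $\myopt(M,\cdot)$ is preserved because any optimal path cover witnessing $\myopt$ for $\cM^*(P_1\cup\dots\cup P_j\cup B,\ B)$ decomposes as above into pieces whose weighted matchings are ``seen'' (up to the $\myopt$-equivalence) by $\cR$ and $A[j-1,B']$ respectively, and conversely every pair produced by the algorithm corresponds to an actual path cover of $P_1\cup\dots\cup P_j\cup B$. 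Finally, applying \emph{Reduce} (Lemma~\ref{lem:repr}) in Line~\ref{step:reduce} preserves representativeness while shrinking the set to size $2^{O(\sqrt k)}$, completing the induction.

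The main obstacle I anticipate is the bookkeeping around the fact that our algorithm deliberately works with the restricted class $\cM^*$ rather than all of $\cM(P,B)$, because the enumerated boundary sets $\cB_j$ only capture endpoint configurations of tours obeying the Packing Property and Lemma~\ref{lem:two_long_edges}. I would need to be careful that (i) every \bdtsp solution in $\cM^*(P_1\cup\dots\cup P_j\cup B,\ B)$ really does decompose into sub-solutions that themselves lie in the corresponding $\cM^*$ classes for the block and for the prefix --- i.e.\ that the Packing Property and the no-two-long-edges condition are inherited by the restrictions to $P_j\cup B'\cup B$ and to $P_1\cup\dots\cup P_{j-1}\cup B'$ --- and that (ii) conversely the $\mathrm{Join}$ of two such sub-solutions, while not necessarily lying in $\cM^*$ of the bigger instance, still has weight at least that of some genuine $\cM^*$ solution realizing the same matching, so that the $\myopt$ values match up. Both points follow from locality of the Packing Property and from the geometric placement of the separators (Lemma~\ref{lem:sep_cross}), but spelling them out cleanly is where the real work lies; the rest is a routine instance of the rank-based dynamic program.
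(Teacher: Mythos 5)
Your plan follows essentially the same route as the paper's proof: induction on $j$, decomposition of any valid path cover at the separator $s_{j-1}$ into a prefix part and a block part, the join of representative sets (Lemma~3.6 of~\cite{single-exponential}) applied to the output of \emph{TSP-repr} and to $A[j-1,B']$, and preservation of representativeness under \emph{Reduce}. The subtlety you flag about $\cM^*$ is exactly what the paper resolves by noting that $\bigcup_{B'}\cM_{B'}$ is a superset of $\cM^*(P_1\cup\dots\cup P_j\cup B,\ B)$, and that being representative of a superset suffices.
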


\begin{proof}
This proof is very similar to the correctness proof of~\cite{bbkk-ethtsp-2018}. We use induction on $i$. For $i=1$, we are directly calling \emph{TSP-repr}$(P_0 \cup B, B)$, which gives a representative set. This set is reduced to a potentially smaller set on Line~\ref{step:reduce}, which is still a representative set by Lemma~\ref{lem:repr}.

Assume now that $i>1$, and for each $B'\in \cB_{i-1}$ the family $A[i-1,B']$ is representative of $\cM^*(P_0\cup\dots\cup P_{i-2} \cup B', B')$. For any fixed $B'\in \cB_{i-1}$, we have that the set returned by \emph{TSP-repr} is a representative set of $\cM^*(P_{i-1} \cup B,\ B'\symdiff B)$. By Lemma~3.6 in~\cite{single-exponential}, the join operation executed on all compatible pairs of weighted matchings from two representative sets results in a representative set, thus for each fixed $B'$, we have inserted a representative set~of
\begin{align*}
\cM_{B'}\mydef \{\mathrm{Join}(M',M) \mid\ & M'\in \cM^*(P_0 \cup \dots\cup P_{i-2}\cup B',\ B'),\\
& M\in \cM^*(P_{i-1} \cup B,\ B'\symdiff B) \}
\end{align*}
into $A[i,B]$.
Consequently, just before executing Line~\ref{step:reduce} the set $A[i,B]$ is a representative set of $\bigcup_{B'\in \cB'} \cM_{B'}$, where $\cB'$ is the family of endpoint configurations we consider at Line~\ref{step:innerboundary}.
Note that $\cB'$ is a (super)set of all possible endpoint configurations in $\cB_{i-1}$ that a path cover of $P_0\cup\dots\cup P_{i-1} \cup B$ with boundary $B$ obeying the Packing Property and Lemma~\ref{lem:two_long_edges} can have.
The set $\cM_{B'}$ contains the subset of $\cM^*(P_0\cup\dots\cup P_{i-1},B)$ corresponding to path covers where the endpoint configuration at $\spt_{i-1}$ is $B'$.
Consequently, $\cM^*(P_0\cup\dots\cup P_{i-1},B)\subseteq A[i,B]$.
Therefore $A[i,B]$ is a representative set of $\cM^*(P_0\cup\dots\cup P_{i-1},B)$ just before executing Line~\ref{step:reduce}, and by Lemma~\ref{lem:repr}, it remains a representative set after executing this line.
\end{proof}

\mypara{Analysis of the running time.}
The loop of Lines~\ref{step:outerloop}--\ref{step:reduce} has $|\SPT|+1=O(n)$ iterations.
Each set $\cB_i$ contains $2^{O(k^{1-1/d})}n$ sets. For each choice of
$B\in \cB_i$, we have $2^{O(k^{1-1/d})}$ options for $B'$, since $B$ can have at most 
one point distant from $\spt_i$ by Lemma~\ref{lem:two_long_edges}.
The running time of \emph{TSP-repr} is $T(|P|,|B|)=2^{O(|P|^{1-1/d}+|B|)}$~\cite[Lemma~8]{bbkk-ethtsp-2018}.
By Lemma~\ref{lem:random_det_sep} we have $|B|=O(k^{1-1/d})$, so the running time of
each call to \emph{TSP-repr} in Algorithm~\ref{alg:sparse} is
\[2^{O((3k + |B|)^{1-1/d} + |B|^{1-1/d})}=2^{O(k^{1-1/d})}.\]
The representative set returned
by \emph{TSP-repr} has $2^{O(k^{1-1/d})}$ weighted matchings, and the representative set
of $A[i-1,B']$ also has $2^{O(k^{1-1/d})}$ matchings. Checking compatibility, joining and insertion in Lines~\ref{step:checkcomp} and~\ref{step:insert} takes $\poly(|M|,|M'|)=\poly(k)$ time.
Consequently, before executing the reduction in Line~\ref{step:reduce},
the set $A[i,B]$ contains at most $2^{O(k^{1-1/d})}\cdot 2^{O(k^{1-1/d})} = 2^{O(k^{1-1/d})}$ entries. 
By Lemma \ref{lem:repr} the application of the \emph{Reduce} algorithm results in a representative set of size at most $2^{|B|-1}=2^{O(k^{1-1/d})}$.
Hence, the total running time is
\[n\cdot 2^{O(k^{1-1/d})}n \cdot 2^{O(k^{1-1/d})} \cdot \left(2^{O(k^{1-1/d})} + 2^{O(k^{1-1/d})} poly(k)\right) = 2^{O(k^{1-1/d})}n^2.\]
This concludes the proof of Theorem~\ref{thm:alg}.

\section{Random point sets inside a narrow rectangle}\label{sec:random}
Algorithm \ref{alg:sparse} also works efficiently on random point sets, with only minimal changes.
Specifically, in this section we will prove the following.

\begin{theorem}\label{thm:algrandom}
Let $P$ be a set of $n$ points taken independently uniformly at random from the hypercylinder $[0,n] \times \ball^{d-1}(\delta/2)$.
Then we can solve \etsp on $P$ in $2^{O(\delta^{1-1/d})} n$ expected time.
\end{theorem}

Our proof has four steps.
In the first step, we will show that long edges are unlikely to be viable.
For the second step, recall the definition of the \emph{spacing} of $p_i$ (in $P$) as $\Delta_i = x_{i+1} - x_i$, for all $1 \leq i \leq n-1$.
We extend this definition with $\Delta_0 = x_1$, and $\Delta_n = n - x_n$.
Note that $\sum_{i=0}^n \Delta_i = n$.
The values $\Delta_i$ play an important role in the analysis of the algorithm, and it will be convenient to assume that the $\Delta_i$ are independent.
However, when the $x$-coordinates of the points are generated uniformly at random in the interval $[0,n]$, this is not quite true.
In the second step, we therefore describe a method to generate the random point set in a different way, and we show how to relate the expected running times in these two settings.
In the third step, we will explain which changes are made to the algorithm.
Finally, in the fourth step, we will use the previous steps to analyse the expected running time of the algorithm.

\mypara{Step 1: Long edges are unlikely to be viable.}
In this step, we will create a pattern of points through which no long edge can pass.
To do so, we will first show that if a long edge is in an optimal tour $T$, then $T$ must be bitonic at certain separators.
Recall that $\sps_j$ is the separator between points $p_j$ and $p_{j+1}$.
\begin{observation}\label{obs:random:long_edge_then_bitonic}
Let $i < j < k$ with $x_i + \delta^2 < x_j$ and $\Delta_j > 1$ and $x_{j+1} + \delta^2 < x_k$.
Let $T$ be an optimal tour on $P$, and suppose that $T$ contains $p_i p_k$.
Then $T$ is bitonic at $\sps_j$.
\end{observation}
\begin{proof}
See Figure~\ref{fig:random:long_edge_then_bitonic} for an example.
\begin{figure}
\begin{center}
\includegraphics{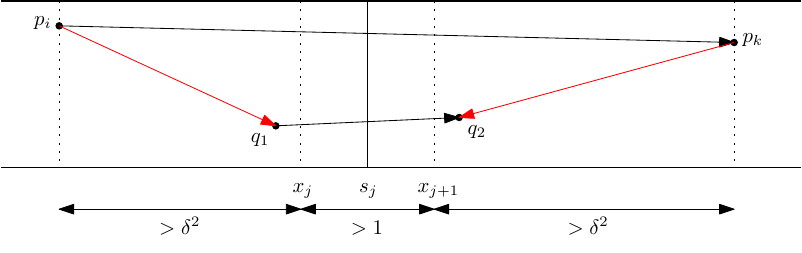}
\caption{An example of Observation~\ref{obs:random:long_edge_then_bitonic}.
Edges of $T$ are shown in black, edges of $T'$ are shown in red.}
\label{fig:random:long_edge_then_bitonic}
\end{center}
\end{figure}
W.l.o.g., $T$ contains the directed edge from $p_i$ to $p_k$.
Suppose for a contradiction that $T$ is an optimal tour and not bitonic at $\sps_j$.
Then $T$ must cross $s_j$ at least four times, which implies that there must be another directed edge $q_1 q_2$ with $x(q_1) \leq x_j$ and $x_{j+1} \leq x(q_2)$.
Note that $q_1 \neq p_i$ and $q_2 \neq p_k$, since a single point cannot have two incoming or two outgoing edges.
We will now show that $| p_i q_1| + |p_k q_2| < |q_1 q_2| + |p_i p_k|$.
By Observation~\ref{obs:2ktonicproofs:switcheroo}, we then get an optimal tour $T'$ shorter than $T$, thereby finishing the proof.
First, we will simplify the problem using an argument similar to the proof of Observation~\ref{obs:2ktonicproofs:linesum}.
Suppose we move $q_1$ towards $q_2$ until $x(q_1) = x_j$.
By doing so, $|q_1 p_i|$ will never decrease more than $|q_1 q_2|$.
Therefore, it is sufficient to prove the statement for $x(q_1) = x_j$.
Analogously, it is sufficient to prove the statement for $x(q_2) = x_j+1$ and $x_i = x_j - \delta^2$ and $x_k = x_j+1 + \delta^2$.
Together, we now have
$$x_i + \delta^2 = x(q_1) = x(q_2)-1 = x_k-1-\delta^2.$$
We get
\begin{align*}
|p_i q_1| + |p_k q_2|
    & \leq \sqrt{(x(q_1)-x_i)^2+\delta^2} + \sqrt{(x(q_2)-x_k)^2+\delta^2} \\
    & = \sqrt{(\delta^2)^2+\delta^2} + \sqrt{(\delta^2)^2+\delta^2} \\
    & = 2 \sqrt{ \delta^4 + \delta^2} \\
    & < 2 \sqrt{ \delta^4 + 2 \delta^2 + 1 } \\
    & = 2 \delta^2 + 2\\
    & = 1 + (2 \delta^2 + 1) \\
    & = (x(q_2)-x(q_1)) + (x_k-x_i) \\
    & \leq |q_1 q_2| + |p_i p_k|.
\end{align*}
By Observation~\ref{obs:2ktonicproofs:switcheroo}, this gives us an optimal tour $T'$ shorter than $T$, thereby finishing the proof.
\end{proof}
We are now ready to create our pattern.
We define a \emph{wall} to be a set of four points $(p_i, \ldots, p_{i+3})$ with the following properties:
\begin{itemize}
\item $\Delta_i > 7 \Delta_{i+1}$ and $\Delta_{i+1} > 1$ and $\Delta_{i+2} > 7 \Delta_{i+1}$, and
\item $|p_{i+1}' p_{i+2}'| > \delta / 2$,
\end{itemize}
where $p_{i+1}'$ and $p_{i+2}'$ are the orthogonal projections of $p_{i+1}$ and $p_{i+2}$ onto the ball $\{0\} \times \ball^{d-1}(\delta/2)$, respectively.
See Figure~\ref{fig:wall} for an example of a wall.
\begin{figure}
\begin{center}
\includegraphics{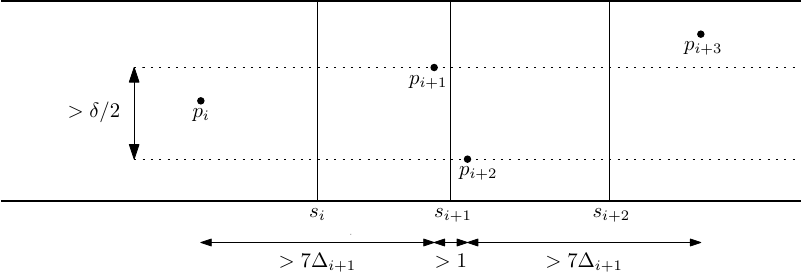}
\caption{An example of a wall, for $d = 2$.}
\label{fig:wall}
\end{center}
\end{figure}
Now we will prove that walls indeed tell us something about the validity of long edges.
To do so, we will need the following lemma:

\begin{lemma}\label{lem:random:three_sqrts}
Let $a, b, c > 0$.
If $4 a b - (4 c - a - b)^2 \geq 0$, then $\sqrt{a}+\sqrt{b} - 2 \sqrt{c} \geq 0$.
\end{lemma}
\begin{proof}
We split the proof into two cases:
\begin{itemize}
\item \textbf{Case $4c - a - b \geq 0$.} We get
\begin{align*}
\sqrt{a} + \sqrt{b} &= \sqrt{ a + b + \sqrt{4ab} } \\
&\geq \sqrt{ a + b + \sqrt{(4 c - a - b)^2} } & \text{since }4 a b - (4 c - a - b)^2 \geq 0\\
&= \sqrt{ a + b + 4c-a-b } & \text{since }4c - a - b \geq 0 \\
&= 2\sqrt{c}.
\end{align*}
\item \textbf{Case $4c - a - b < 0$.} We get
\begin{align*}
\sqrt{a} + \sqrt{b} &= \sqrt{ a + b + \sqrt{4ab} } \\
&> \sqrt{a + b} \\
&> 2\sqrt{c}. & \text{since }4c - a - b < 0
\end{align*}
\end{itemize}
This concludes the proof of Lemma~\ref{lem:random:three_sqrts}.
\end{proof}

We are now ready to prove that long edges passing through the wall must have one endpoint close to the wall.
\begin{lemma}\label{lem:random:walls}
Let $(p_i,\ldots,p_{i+3})$ be a wall.
Let $q_1q_2$ be an edge with $x(q_1) + \delta^2 < x_i$ and $x_{i+3} + \delta^2 < x(q_2)$.
Then the shortest tour on the points of $P$ does not contain $q_1q_2$.
\end{lemma}
\begin{proof}
Suppose for a contradiction that $T$ is an optimal tour that contains w.l.o.g. the directed edge $q_1q_2$.
By Observation~\ref{obs:random:long_edge_then_bitonic}, $T$ is bitonic at $s_i$ and $s_{i+1}$ and $s_{i+2}$.
Note that $q_1q_2$ passes through all three of these separators.
Therefore, one of the neighbours of $p_{i+1}$ must be to the left of $p_{i+1}$, and one must be to its right.
The same holds for $p_{i+2}$.
Since both $q_1q_2$ and the edge from $p_{i+2}$ to its `right' neighbour pass through $s_{i+2}$, the edge from $p_{i+1}$ to its `right' neighbour cannot pass through $s_{i+2}$.
Therefore, the `right' neighbour of $p_{i+1}$ is $p_{i+2}$.
In conclusion, $T$ must contain the directed edge $p_{i+2} p_{i+1}$.

It is sufficient to show that $|q_1 p_{i+2}| + |q_2 p_{i+1}| < |q_1 q_2| + |p_{i+2} p_{i+1}|$, because this will contradict the optimality of $T$ by Obervation~\ref{obs:2ktonicproofs:switcheroo}.
Analogously to the proof of Observation~\ref{obs:2ktonicproofs:linesum}, it is sufficient to prove this for $x(q_1) + \delta^2 + 7 \Delta_{i+1} = x_{i+1}$ and $x_{i+2} + \delta^2 + 7 \Delta_{i+1} = x(q_2)$.
Combining these simplifications, we want to prove that
\begin{align*}
&|q_1 q_2| + |p_{i+2} p_{i+1}| - |q_1 p_{i+2}| - |q_2 p_{i+1}|\\
&> \sqrt{(2\delta^2 + 15 \Delta_{i+1})^2+0^2} + \sqrt{\Delta_{i+1}^2 + \delta^2/4} - 2\sqrt{(\delta^2 + 8 \Delta_{i+1})^2 + \delta^2} \\
&\geq 0.
\end{align*}
We invoke Lemma~\ref{lem:random:three_sqrts} with $a=(2\delta^2 + 15 \Delta_{i+1})^2$ and $b= \Delta_{i+1}^2 + \delta^2/4$ and $c=(\delta^2 + 8 \Delta_{i+1})^2 + \delta^2$.
We check whether the requirement holds.
A bit of algebra gives us that for these values we can rewrite $4 a b - (4 c - a - b)^2$ to
$$\br{30 \Delta_{i+1} - \frac{225}{16}}\delta^4 + 4 \delta^6,$$
which is indeed at least $0$ for all $\delta > 0$ and $\Delta_{i+1} > 1$, as required.

Therefore, we have now proven that $|q_1 q_2| + |p_{i+2} p_{i+1}| - |q_1 p_{i+2}| + |q_2 p_{i+1}| > 0$.
Therefore, $T$ is not an optimal tour, leading to the conclusion that a shortest tour on the points of $P$ does not contain $q_1 q_2$.
\end{proof}

Note that if the $\Delta_i$ are i.i.d. (independent and identically distributed) and the other coordinates of the points are taken independently uniformly at random from $\ball^{d-1}(\delta/2)$, there exists some constant probability that $(p_i,\ldots,p_{i+3})$ is a wall.
This probability is independent of $\delta$, $n$, $i$, and whether any $(p_j,\ldots,p_{j+3})$ is a wall for any $j \leq i-3$ or $j \geq i+3$.
We will expand on this later.
This brings us to our next step.

\mypara{Step 2: Another way of generating points, with independent $\Delta_i$.}
As $n$ goes to infinity, the $\Delta_i$ become more and more independent.
True independence would greatly simplify the calculation of the running time.
Therefore, let us take a look at a different way of generating our point set.

First, we generate $n+1$ independent exponentially distributed variables, $\Delta_0,\ldots,\Delta_n \sim \Exp (1)$.
Using these, we compute the $x$-coordinates of our points, defined as $x_i := \sum_{j=0}^{i-1} \Delta_i$ for $1 \leq i \leq n$.
We will also write $x_{n+1} \mydef \sum_{i=0}^n \Delta_i$ for brevity.
Note that the value of $x_{n+1}$ will very likely be relatively close to $n$, since its distribution converges to a normal distribution with mean $n+1$ and standard deviation $\sqrt{n+1}$ when $n \rightarrow \infty$.
Finally, we generate the other coordinates by picking $n$ independent uniformly distributed points in $\ball^{d-1}(\delta/2)$.
Combining these gives us $n$ points in the hypercylinder $[0, x_{n+1}] \times \ball^{d-1}(\delta/2)$.

A point set generated this way has two important properties.
First of all, the $\Delta_i$ are now independent.
Second, as we will prove next, the expected running time of an algorithm on a uniformly distributed point set can be bounded by the expected running time of that algorithm on a point set generated this way.
We can even generalize this for more arbitrary point set distributions.

Let $X_n$ be a random point set of $n$ points in $\Reals^d$, where the $x$-coordinates of the points are taken independently uniformly at random from the interval $[0, n]$, and the other coordinates are i.i.d. and independent of the $x$-coordinates.
Let $Y_n$ be a random point set of $n$ points in $\Reals^d$, where the spacings $\Delta_i = x_{i+1}-x_i$ are taken independently from $\Exp(1)$ (including $\Delta_n$ to create an $x_{n+1}$), and the other coordinates are i.i.d. and independent of the $x$-coordinates, with the same distribution as in $X_n$.
We then get 
\begin{lemma}\label{lem:random:exptouni}
Let $X_n$ and $Y_n$ be as defined above.
Let $\A$ be an algorithm that runs on point sets in $\Reals^d$.
Let $T_\A(P)$ denote its running time for a point set $P$.
Suppose for all $\labda > 1$ and $P$, we have $T_\A(P_\labda) \leq T_\A(P)$, where $P_\labda$ is obtained from $P$ by scaling the $x$-coordinates by a factor $\labda > 1$.
Suppose $\EE[T_\A(Y_n)] = f(n)$ for some function $f$.
Then $\EE[T_\A(X_n)] = O(f(n))$.
\end{lemma} 
\begin{proof}
First, we note that $X_n$ and $Y_n$ are remarkably similar: when the $x$-coordinates of the points of $Y_n$ are scaled such that $x_{n+1} = n$, we obtain the same distribution as $X_n$, see \cite{daley2007introduction}.
Therefore, $$\EE[T_\A(X_n)] = \EE[T_\A(Y_n) | x_{n+1} = n].$$
Since $T_\A(P_\labda) \leq T_\A(P)$ for all $\labda > 1, P$, we get that $$\EE[T_\A(Y_n) | x_{n+1} = n] \leq \EE[T_\A(Y_n) | x_{n+1} \leq n].$$
Finally, since the distribution of $x_{n+1}$ converges to a normal distribution with mean $n+1$ and standard deviation $\sqrt{n+1}$ when $n \rightarrow \infty$, we have $\PP[x_{n+1} \leq n] \rightarrow \frac{1}{2}$ when $n \rightarrow \infty$. Therefore,
\begin{align*}
\EE[T_\A(Y_n) | x_{n+1} \leq n]
&< 3 \cdot \PP[x_{n+1} \leq n] \cdot \EE[T_\A(Y_n) | x_{n+1} \leq n] &\text{for $n$ large enough}\\
&\leq 3 \cdot \PP[x_{n+1} \leq n] \cdot \EE[T_\A(Y_n) | x_{n+1} \leq n] \\
&\; \; \; \; +3 \cdot \PP[x_{n+1} > n] \cdot \EE[T_\A(Y_n) | x_{n+1} > n] \\
&= 3 \cdot \EE[T_\A(Y_n)] \\
&= 3 \cdot f(n) = O(f(n)),
\end{align*}
finishing the proof.
\end{proof}

Now we can give a simple lower bound on the probability that $(p_i,\ldots,p_{i+3})$ forms a wall.
Let $B^{d-1}(p,r)$ be the $(d-1)$-dimensional ball with midpoint $p$ and radius $r$.
Define $q_1$ as $x(q_1) = 0, y_1(q_1) = 3 \delta /8, y_2(q_1) = \ldots = y_{d-2}(q_1) = 0$, and define $q_2$ as $x(q_2) = 0, y_1(q_2) = -3\delta/8, y_2(q_2) = \ldots = y_{d-2}(q_2) = 0$.
Recall that $p_{i+1}'$ and $p_{i+2}'$ are the orthogonal projections of $p_{i+1}$ and $p_{i+2}$, respectively, on $\{0\} \times \ball^{d-1}(\delta/2)$.
Now, note that if $p_{i+1}' \in \{0\} \times B^{d-1}(q_1,\delta/8)$ and $p_{i+2}' \in \{0\} \times B^{d-1}(q_2,\delta/8)$, then $|p_{i+1}'p_{i+2}'| \geq |q_1 q_2| - 2 \delta/8 = \delta/2$.
Then
\begin{align*}
& \PP[\Delta_i > 7 \Delta_{i+1} \wedge \Delta_{i+1} > 1 \wedge \Delta_{i+2} > 7 \Delta_{i+1} \wedge |p_{i+1}' p_{i+2}'| > \delta/2]\\
& > \PP[\Delta_i > 14 \wedge 1 < \Delta_{i+1} < 2 \wedge \Delta_{i+2} > 14 \wedge |p_{i+1}' p_{i+2}'| > \delta/2]\\
& = e^{-14} \cdot (e^{-1}-e^{-2}) \cdot e^{-14} \cdot \PP[ |p_{i+1}' p_{i+2}'| > \delta/2] & \text{since all $\Delta_j \sim \text{Exp}(1)$}\\
& = \Theta(1) \cdot \PP[ |p_{i+1}' p_{i+2}'| > \delta/2]\\
& > \Theta(1) \cdot \PP[ p_{i+1}' \in \{0\} \times B^{d-1}(q_1,\delta/8) ]^2 & \text{see Figure~\ref{fig:random:ballsinballs}}
\end{align*}
\begin{figure}
\begin{center}
\includegraphics{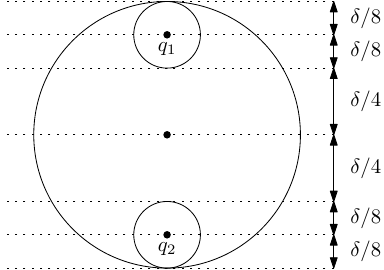}
\caption{An example with $d=3$.
The large circle is the separator $s_{i+1}$.
If $p_{i+1}'$ is in the upper circle $\{0\} \times B^{d-1}(q_1,\delta/8)$, and $p_{i+2}'$ is in the lower circle $\{0\} \times B^{d-1}(q_2,\delta/8)$, then $|p_{i+1}' p_{i+2}'|$ is at least $\delta/2$.
Note that the corresponding probabilities are equal, and that the location of $p_{i+1}'$ is independent of the location of $p_{i+2}'$.}
\label{fig:random:ballsinballs}
\end{center}
\end{figure}
Now, it is well-known that the volume of the $(d-1)$-dimensional ball with radius $r$ is $c_d r^{d-1}$ for some constant $c_d$.
Since $p_{i+1}'$ is a random point uniformly at random in $\{0\} \times \ball^{d-1}(\delta/2)$, we get
\begin{align*}
& \Theta(1) \cdot \PP[ p_{i+1}' \in \{0\} \times B^{d-1}(q_1,\delta/8) ]^2\\
& = \Theta(1) \cdot \br{ \frac{c_d (\delta/8)^{d-1} }{c_d (\delta/2)^{d-1} } }^2\\
& = \Theta(1) \cdot \br{ 2^{-2(d-1)} }^2 \\
& = \Theta(1),
\end{align*}
since we assume $d$ to be a constant.

\mypara{Step 3: A new algorithm.}
The new algorithm is exactly the same as the old algorithm, with three changes.
The first change is simple: we previously stated that we start by sorting our points on their $x$-coordinate.
In the case of sparse point sets we use a worst-case $O(n\log n)$ sorting algorithm for this, but now we can use a bucket sort with $n$ buckets.
This brings the expected sorting time to $O(n)$ \cite{clrs-ita-09}.

For the second change, we need to take another look at how we place the separators $t_i$.
We previously placed these separators in every second nonempty drum $\sigma_i \mydef [i\delta,(i+1)\delta]\times \ball^{d-1}(\delta/2)$ based on the points in $\sigma_{i-1} \cup \sigma_i \cup \sigma_{i+1}$.
However, in order for our algorithm to meet the requirements of Lemma~\ref{lem:random:exptouni}, we would like to avoid having a point enter a drum after the $x$-coordinates are multiplied by some factor $\labda > 1$.
Furthermore, since the proof of Lemma~\ref{lem:random_det_sep} requires every drum to be at least $\delta$ wide, we cannot simply scale the drums as well.
Therefore, we will need to allow some overlap. 
For $j \in \{-1,0,1\}$ and all $i$, we define $\tau_{i,j} \mydef [(X(i)-1/2 + j)\delta,(X(i)+1/2+j)\delta] \times \ball^{d-1}(\delta/2)$, with $X(i) \mydef (i+1/2)x_{n+1}/n$.
Note that the center of drum $\tau_{i,0}$ scales with the point set, and $\tau_{i,-1}$ and $\tau_{i,1}$ are the drums directly to its left and right.
In our new algorithm, we will use the drums $\tau_{i,j}$ instead of the $\sigma_i$ used so far.
Since this is quite a drastic change, before we show how exactly these new drums are used in the algorithm, we will need to revisit all drum-related lemmas and corollaries.
Let us start with Lemma~\ref{lem:random_det_sep}.
When applied to the new drums, it gives us the following.

\begin{lemma}\label{lem:random_det_sep_2}
Let $\opt$ be an optimal TSP tour on $P$.
For a separator $t$, let $T(t,\tau_{i,0})$ denote the set of edges from $\opt$ with both endpoints in $\tau_{i,-1} \cup \tau_{i,0} \cup \tau_{i,1}$ and crossing $t$.
Let $k$ be the number of points in $\tau_{i,-1} \cup \tau_{i,0} \cup \tau_{i,1}$.
Then in $O(k^{d+1})$ time we can compute a separator~$t$ intersecting~$\tau_{i,0}$ such that $|T(t,\tau_{i,0})|=O(k^{1-1/d})$ . 
Furthermore, there is a family $\cC$ of $2^{O(k^{1-1/d})}$ candidate sets such that $T(t,\tau_{i,0})\in \cC$, and this family can be computed in $2^{O(k^{1-1/d})}$ time.
\end{lemma}
\begin{proof}
The proof is analogous to that of Lemma~\ref{lem:random_det_sep}.
\end{proof}
Note that $k$ is now defined `locally', since we no longer have a good bound on the maximum density of our point set.
Instead of placing the separators $t_i$ in every second nonempty $\sigma_i$, we place them in every $\tau_{i,0}$ instead, using the points in $\tau_{i,-1} \cup \tau_{i,0} \cup \tau_{i,1}$.
We continue to Lemma~\ref{lem:sep_cross}, since Lemma~\ref{lem:two_long_edges} is independent of the drums used.
We first define $P[i,j] \mydef \{p_i, \ldots, p_j\}$ for all $1 \leq i \leq j \leq n$.
The revised version of Lemma~\ref{lem:sep_cross} is the following:
\begin{lemma}\label{lem:sep_cross_2}
Let $\spt_i\in\SPT$ be a separator.
Let $\opt$ be an optimal tour on $P$ and let $V := P_{\myright}(\opt,\spt_i)$ be its endpoint configuration at~$\spt_i$.
Let $z$ be the leftmost point with an $x$-coordinate larger than $(X(i)+3)\delta$.
Let $p_j$ be the leftmost point more than $3\delta/2$ to the right of $z$.
Let $P'$ denote the set of points with $x$-coordinates between $(X(i)+3/2)\delta$ (the right border of $\tau_{i,1}$) and $x_j$, and let $P'' = P[j,n]$.
Then (i) $|P' \cap V|\leq c^*$ for some absolute constant~$c^*$,
and (ii) $|P''\cap V|\leq 1$.
\end{lemma}
\begin{proof}
Consider a tour edge crossing $\spt_i$.
Any edge crossing $\spt_i$ ending in $P'$ must fully cross $\tau_{i,1}$.
Therefore such edges have length at least~$\delta$.
By the Packing Property, there can be at most $c^* = O(1)$ such edges. This proves (i).

To prove (ii), we can directly apply Lemma~\ref{lem:two_long_edges}. Note that point $z$ indeed suffices, since it is more than $3\delta/2$ to the right of the right border of $\tau_{i,1}$.
\end{proof}

Next up is Corollary~\ref{cor:candidate-enumeration}, which we adapt as follows:
\begin{corollary}\label{cor:candidate-enumeration_2}
Let $\opt$ be an optimal tour.
Let $\spt_i\in\SPT$ be a separator.
Let $V\subset P$ be the (unknown) endpoint configuration of $\opt$ at $\spt_i$.
Then in $2^{O(k^{1-1/d})} n$ time we can enumerate a family $\cB_i$ of candidate endpoint sets such that $V \in \cB_i$.
Here, $k$ denotes the number of points which either lie in $\tau_{i,-1} \cup \tau_{i,0} \cup \tau_{i,1}$ or are an element of $P'$, where $P'$ is as defined in Lemma~\ref{lem:sep_cross_2}.
\end{corollary}
\begin{proof}
Let $c^*$, $P'$ and $P''$ be as defined in Lemma~\ref{lem:sep_cross_2}.
We use the enumeration of candidate edge sets from Lemma~\ref{lem:random_det_sep_2}, and only keep the endpoints that are to the right of $\spt_i$.
To each of these endpoint sets, we add at most $c^*$ endpoints from $P'$, and we add at most one endpoint from $P''$.
There are $\poly(k)$ and $O(n)$ ways to add such endpoints, respectively.
This gives a family of $2^{O(k^{1-1/d})} \cdot \poly(k) \cdot n = 2^{O(k^{1-1/d})} n$ sets.
By Lemma~\ref{lem:sep_cross_2}, the resulting family of endpoint sets contains~$V$.
\end{proof}

The rank-based approach is independent of the drums used.
This brings us to the algorithm itself.
It requires only a single change: all separators to the left of a previously found separator can be ignored.

For the third and final change, we can use walls to decrease the number of viable distant points.
If we find a wall sufficiently far to the right of $\spt_i$, then by Lemma~\ref{lem:random:walls} all points that are sufficiently far to the right to this wall are unviable and can therefore be left out of $P''$.
Specifically, let $p_j$ be the leftmost point with $x_j > X(i)+\delta/2+\delta^2$ and $j$ divisible by $4$.
Note that this point is at least $\delta^2$ to the right of $\tau_{i,0}$ and therefore at least $\delta^2$ to the right of $\spt_i$.
We then check whether $(p_j,\ldots,p_{j+3})$ is a wall.
If it is not, we check $(p_{j+4},\ldots,p_{j+7})$ and so forth, until we either have found a wall, or have reached $p_n$.
If we find a wall, then we only use the distant points that are not rendered unviable by the wall.
If we do not, then we use the same set of distant points as we used previously.

\mypara{Step 4: Analysis of the running time.}
As stated before, the sorting of the points can be done in $O(n)$ expected time.
For the rest of the analysis, we will need a few lemmas.
\begin{observation} \label{obs:random:probbound}
Let $X,Y$ be two integer nonnegative random variables, such that for all $k \geq 0$, the equation $\PP[X \leq k] \geq \PP[Y \leq k]$ holds.
Let $f(k)$ be an increasing nonnegative function such that $\EE[f(Y)] < \infty$.
Then
$$\EE[f(X)] \leq \EE[f(Y)].$$
\end{observation}
For the proof, see~\cite[Formula 3.3 on p. 6]{thorisson}.

The next lemma gives us information about among others the running time of an algorithm when the input has a random size.
\begin{lemma}\label{lem:random:sqrt_algo_on_poisson}
Let $f(x)$ be a function such that $f(x) = 2^{O(x^{1-1/d})}$ for some $d \geq 2$.
Let $X$ be a Poisson distributed random variable with mean $z$.
Then $\EE[f(X)] = 2^{O(z^{1-1/d})}$.
\end{lemma}
\begin{proof}
Let $M, \labda > 0$ be such that for all $x > 0$ we have $f(x) \leq M + 2^{\labda x^{1-1/d}}$.
Let us write $\mu \mydef e^{\labda+2}$.
We get
\begin{align*}
\EE[f(X)] - M &\leq \EE[2^{\labda X^{1-1/d}}] \\
&\leq 2^{\labda \floor{\mu z}^{1-1/d}} \PP[X \leq \floor{\mu z}] +  \sum_{k=\ceili{\mu z}}^{\infty} 2^{\labda k^{1-1/d}} \PP[X = k] \\
&= 2^{\labda \floor{\mu z}^{1-1/d}} \PP[X \leq \floor{\mu z}] +  \sum_{k=\ceil{\mu z}}^{\infty} 2^{\labda k^{1-1/d}} \frac{z^k e^{-z}}{k!}\\
&< 2^{\labda \mu z^{1-1/d}} +  \sum_{k=\ceil{\mu z}}^{\infty} 2^{\labda k^{1-1/d}} \frac{z^k e^{-z}}{(k/e)^k}\\
&< 2^{\labda \mu z^{1-1/d}} +  \sum_{k=\ceil{\mu z}}^{\infty} e^{\labda k} \frac{z^k e^k}{(\mu z)^k}\\
&= 2^{\labda \mu z^{1-1/d}} +  \sum_{k=\ceil{\mu z}}^{\infty} e^{-k} = 2^{\labda \mu z^{1-1/d}} + O(1).
\end{align*}
We have now shown that $\EE[f(X)] \leq 2^{\labda \mu z^{1-1/d}} + M + O(1)$, finishing the proof.
\end{proof}

This brings us to the result we need:
\begin{lemma}\label{lem:random:poisson_magic}
Let $X$ be a Poisson distributed random variable with mean $z$.
Let $i \geq 0$ be an arbitrary but fixed integer. 
Let $j$ be any integer with $i \leq j$.
Let $M$ be a random variable defined by $\PP[M = k] \mydef \PP[X = k] / \PP[X \in \{i,\ldots,j\} ]$ for $k \in \{i,\ldots,j\}$, and $\PP[M = k] = 0$ otherwise.
Let $f(x)$ be such that $f(x) = 2^{O(x^{1-1/d})}$.
Then $\EE[f(M)] = 2^{O(z^{1-1/d})}$.
\end{lemma}

\begin{proof}
Let $\labda, \mu > 0$ be such that $g(x) \mydef \mu 2^{\lambda x^{1-1/d}}$ is larger than $f(x)$ for all $x$.
Clearly, it is sufficient to show that $\EE[g(M)] = 2^{O(z^{1-1/d})}$.

Let $X' \mydef X+i$.
First, we will show that $\PP[M \leq k] \geq \PP[X' \leq k]$ for all $k$:
\begin{itemize}
\item Suppose $0 \leq k \leq i-1$.
Since $\PP[M \leq k] = \PP[X' \leq k] = 0$, the statement indeed holds.
\item Suppose $k \geq j$.
Since $\PP[M \leq k] = 1$ and $\PP[X' \leq k] < 1$, the statement indeed holds.
\item Suppose $i \leq k \leq j-1$.
For this case, let us first take a look at the value of $\PP[M = m] - \PP[X' = m]$:
\begin{align*} 
\PP[M = m] - \PP[X' = m]
&= \frac{e^{-z} z^m}{m! \PP[i \leq X \leq j]} - \frac{e^{-z} z^{m-i}}{(m-i)!} \\
&= \frac{e^{-z} z^{m-i}}{m!} \br{\frac{z^i}{\PP[i \leq X \leq j]} - \frac{m!}{(m-i)!}}
\end{align*}
Note that $\PP[M = m] - \PP[X' = m] < 0$ if and only if $m$ is large enough.
Therefore, $\PP[M \leq k] - \PP[X' \leq k]$ is minimal at $k = i-1$ or $k = j$.
We have already seen that for both of these values of $k$, we have $\PP[M \leq k] - \PP[X' \leq k] \geq 0$.
Therefore, for all $i \leq k \leq j-1$, we have $\PP[M \leq k] \geq \PP[X' \leq k]$.
\end{itemize}
Since $\PP[M \leq k] \geq \PP[X' \leq k]$ for all $k$, and $g$ is increasing, Observation~\ref{obs:random:probbound} gives us that $\EE[g(M)] \leq \EE[g(X')]$.
Furthermore, since
$$g(x+i) = \mu 2^{\lambda (x+i)^{1-1/d}} \leq \mu 2^{\lambda (x^{1-1/d} + i)} = 2^{\lambda i} g(x),$$
we get
$$\EE[g(X')] = \EE[g(X+i)] \leq 2^{\lambda i} \EE[g(X)] = O(\EE[g(X)]).$$
Since $X$ follows a Poisson distribution with mean $z$, Lemma~\ref{lem:random:sqrt_algo_on_poisson} gives us that $\EE[g(X)] = 2^{O(z^{1-1/d})}$.
In conclusion, we have
$$\EE[f(M)] \leq \EE[g(M)] \leq \EE[g(X')] = O(\EE[g(X)]) = 2^{O( z^{1-1/d})},$$
as we wanted to prove.
\end{proof}

We are now ready to analyse the total running time of the algorithm.
We will look at each part of the algorithm separately.
\begin{itemize}
\item \textbf{Line~\ref{step:separators}: Computing the separators.}
There are $O(n)$ separators to be computed.
Recall that $\tau_{i,j} = [(X(i)-1/2 + j)\delta,(X(i)+1/2+j)\delta] \times \ball^{d-1}(\delta/2)$, with $X(i) \mydef (i+1/2)x_n/n$ for $j \in \{-1,0,1\}$.
By Lemma~\ref{lem:random_det_sep_2}, each separator is computed in $O(M^{d+1})$ time, where $M$ is the number of points in $\tau_{i,-1} \cup \tau_{i,0} \cup \tau_{i,1}$.
Now we want a bound on the total expected amount of time needed per separator, $\sum_{k=0}^\infty k^{d+1} \PP[M = k]$.
Recall that $\Delta_j \sim \Exp (1)$ for all $j$, and the total horizontal interval covered by $\tau_{i,-1} \cup \tau_{i,0} \cup \tau_{i,1}$ is $3 \delta$.
Clearly, $M \geq 0$.
Therefore, $M$ is almost a Poisson distributed variable with expected value $3\delta$. There is one difference: since there are only $n$ points, $M \leq n$.
Therefore, $M$ does have all the required properties for Lemma~\ref{lem:random:poisson_magic}, so the total expected time needed per separator is $2^{O((3 \delta)^{1-1/d})} = 2^{O(\delta^{1-1/d})}$.
(Using \cite{riordan1937}, we can prove that the expected time is $O(\delta^{d+1})$, but the above will suffice.)
Since there are $O(n)$ separators, the total expected time needed for this part is $2^{O(\delta^{1-1/d})} n$.
\item \textbf{Lines~\ref{step:cbjs} and~\ref{step:outer_bd_iter}: The sets $\cB_i$}.
We will now give an upper bound on the expected value of $\sum_{i=1}^{|\SPT|+1} |\cB_i|$.
Recall that each $B \in \cB_i$ consists of (i) the rightmost endpoints of a candidate edge set, (ii) at most $c^*$ points from $P'$ and (iii) at most one point to the right of $P'$, not rendered unviable by a wall.
Note that the expected number of possibilities of each type is dependent: the more possibilities one set contains, the fewer others can have.
After all, no point can be in two sets of different types at once.
Aside from the number of points in each interval, the three types are otherwise independent.
Therefore, to find an upper bound on $\EE[|\cB_i|]$ it suffices to multiply upper bounds of the expected values of possibilities of (i), (ii) and (iii).
Let us start with the number of viable distant points, (iii).
Let $p_j$ be the leftmost point such that $(p_j,\ldots,p_{j+3})$ is the first candidate wall checked.
We have already seen that the probability that it is a wall is some constant $\alpha$.
If it is not a wall, the next set we check has the same probability, and so forth.
Therefore, the geometric distribution with probability $\alpha$ forms an upper bound on the expected number of possible walls checked.
The expected $x$-coordinate of the last viable distant point therefore is $X(i) + O(\delta^2) + O(1/\alpha) + O(\delta^2) = x(t_i) + O(\delta^2)$.
Therefore, the expected number of viable distant points is $O(\delta^2)$.

This brings us to the number of possible $B'$.
Since we use Corollary~\ref{cor:candidate-enumeration_2}, this is $\delta^2 \cdot 2^{O(M^{1-1/d})}$, where $M$ is the number of points which are in $\tau_{i,-1} \cup \tau_{i,0} \cup \tau_{i,1}$ or in $P'$.
Note that we have a factor $O(\delta^2)$ instead of the factor $O(n)$ as in the statement of Corollary~\ref{cor:candidate-enumeration_2}. This is because, as we have just shown, the expected number of distant points we need to consider is not $O(n)$ but $O(\delta^2)$.
Now, let us look at $M$.
Recall how $P'$ was defined: 
Let $z$ be the leftmost point with an $x$-coordinate larger than $(X(i)+3)\delta$.
Let $p_j$ be the leftmost point more than $3\delta/2$ to the right of $z$.
Then $P'$ contains all points between $(X(i)+3/2)\delta$ (the right border of $\tau_{i,1}$) and $p_j$.
Therefore, $M$ equals the number of points in an interval of length $3\delta+3\delta/2+3\delta/2 = 6\delta$, plus one.
We once more apply Lemma~\ref{lem:random:poisson_magic}, giving us that the total expected number of $B'$ is $2^{O(\delta^{1-1/d})}$.

Since $i \in \{1,\ldots,|\SPT|+1\} \subseteq \{1,\ldots,n\}$, the total expected number of elements in all $\cB_i$ combined is bounded by $\delta^2 \cdot 2^{O(\delta^{1-1/d})} n = 2^{O(\delta^{1-1/d})} n$.
\item \textbf{Lines~\ref{step:innerboundary}-\ref{step:reduce}}.
The running time of \emph{TSP-repr} is $T(|P|,|B|)=2^{O(|P|^{1-1/d}+|B|)}$~\cite[Lemma~8]{bbkk-ethtsp-2018}.
By Lemma~\ref{lem:random_det_sep_2} we have $|B|=O(M_1^{1-1/d})$, so the running time of each call to \emph{TSP-repr} in Algorithm~\ref{alg:sparse} is $2^{O(M_2^{1-1/d})}$, for some $M_1, M_2$ with all properties required for Lemma~\ref{lem:random:poisson_magic}.
Similar statements hold for the number of weighted matchings in the representative set returned by \emph{TSP-repr} and the number of matchings in the representative set of $A[i-1,B']$.
Consequently, before executing the reduction in Line~\ref{step:reduce},
the set $A[i,B]$ contains at most $2^{O(M_1^{1-1/d})}\cdot 2^{O(M_2^{1-1/d})} = 2^{O(M_3^{1-1/d})}$ entries for some $M_1, M_2, M_3$ with all properties required for Lemma~\ref{lem:random:poisson_magic}.
By Lemma \ref{lem:repr} the application of the \emph{Reduce} algorithm results in a representative set of size at most $2^{|B|-1}=2^{O(M^{1-1/d})}$ for some $M$ with all properties required for Lemma~\ref{lem:random:poisson_magic}.
\end{itemize}
In conclusion, the total expected running time is bounded by $2^{O(\delta^{1-1/d})}n$ when the $\Delta_i$ are exponentially distributed.

Finally, we will show that the requirements for Lemma~\ref{lem:random:exptouni} hold, where we take $\A$ to be the algorithm described above.
The only nontrivial requirement is that $T_\A(P_\labda) \leq T_\A(P)$ for all point sets $P$ and $x$-axis scaling factors $\labda > 1$.
Intuitively, we would expect this to hold: every single separator $t_i$ and wall $(p_i,\ldots,p_{i+3})$ retains its properties when scaled with a factor larger than $1$, while the boundary sets can only decrease in size and $P_\labda$ might contain extra walls.
Furthermore, since we only check whether $(p_i,\ldots,p_{i+3})$ forms a wall for $i$ divisible by $4$, we do not miss walls we previously would have seen.

Unfortunately, $T_\A(P_\labda) \leq T_\A(P)$ does not necessarily hold for all $\labda$ and $P$, since we have only proven bounds on the running times, not the actual running times themselves.
However, it is easy to show the existence of an algorithm $\B$ of which we know the exact expected running time of every step.
Algorithm $\B$ is simply algorithm $\A$, but after every step it waits as long as necessary to make its expected running time for that step equal to the bound calculated for this step.
To be precise, there are two types of waiting, best explained by an example.
We have previously shown that computing a separator takes $O(|Q|^{d+1})$ time, where $Q$ is the set of points of $P$ in $\tau_{i,-1} \cup \tau_{i,0} \cup \tau_{i,1}$.
Let $c_1$ be such that for all $Q$, computing this separator takes $T(Q) < c_1 |Q|^{d+1}$ time for $\A$.
When algorithm $\B$ computes a separator, it waits for $c_1 |Q|^{d+1} - T(Q)$ time afterwards, therefore taking exactly $c_1 |Q|^{d+1}$ time in total.
Note that the time \emph{waited} by $\B$ depends on $Q$, but the \emph{total} time used depends only on $|Q|$, and not $Q$ itself.
This is the end of the first type of waiting, but not of the example.
Next, we showed that the total \emph{expected} time needed to compute a separator was upper bounded by $2^{O(\delta^{1-1/d})}$.
So, let $c_2$ be such that $\A$, including the first type of waiting, takes $T(\delta,d) < 2^{c_2 \delta^{1-1/d}}$ expected time to compute a separator.
When computing a separator, $\B$ also waits $2^{c_2 \delta^{1-1/d}} - T(\delta,d)$ time.
Note that the time waited is independent of $Q$.
Together, these two types of waiting ensure that (i) the time needed by $\B$ is monotone in $|Q|$ and (ii) the total expected time needed by $\B$ equals the calculated upper bound for $\A$.
Finally, we always wait as long as needed to emulate the separators being in the worst possible place for that specific step.
Therefore, the time taken is completely independent of the locations of the separators.

Since we know the exact running time of $\B$, using the arguments given earlier we get that $T_\B(P_\labda) \leq T_\B(P)$ for all point sets $P$ and $x$-axis scaling factors $\labda > 1$.
Therefore, we can apply Lemma~\ref{lem:random:exptouni} on $\B$, giving us that $\B$ has an expected running time of $2^{O(\delta^{1-1/d})} n$ on $X_n$, the random point set of which the $x$-coordinates are picked uniformly from $[0,n]$.
Since $\B$ is slower than $\A$, we conclude that the expected running time of $\A$ must have the same bound, finishing the proof.

\section{A more efficient algorithm for sparse point sets}\label{sec:sparsev2}
Recall that a point set~$P$ inside a $\delta$-cylinder is \emph{sparse} if for every $x\in \Reals$ the set~$[x,x+1]\times \ball^{d-1}(\delta/2)$ contains at most $O(1)$~points.
We have already proven that if $P$ is sparse, we can solve \etsp in $2^{O(\delta^{1-1/d})} n^2$ time;
see Theorem~\ref{thm:alg}.
The goal of this section is to obtain the following improved result:

\begin{theorem}\label{thm:sparsev2:runtime}
Let $P$ be a set of $n$ points in a $\delta$-cylinder.
If $P$ is sparse then we can solve \etsp in $2^{O(\delta^{1-1/d})} n + O(\delta^2 n^2)$ time.
\end{theorem}

Our algorithm will be a dynamic program which uses three different tables.
The first two of these will be very similar to the table used by our previous sparse point set algorithm.
We will proceed in four steps.
First, we will recall a few core concepts, and introduce another set of separators.
Second, we will introduce the three tables, and define what values they contain.
Third, we will show how to compute these values.
Fourth, we will show that the algorithm runs in the desired time.

\mypara{Step 1: A new set of separators.}
Because $P$ is sparse, the drum $\sigma_i \mydef [(i-1)\delta,i\delta] \times \ball^{d-1}(\delta/2)$ contains at most $k$ points from $P$ for any $i\in \Integers$.
Furthermore, recall that we have a separator $\sps_i$ between every pair of consecutive points $p_i$ and $p_{i+1}$.
These separators form the set $\SPS \mydef \{\sps_1,\ldots,\sps_{n-1}\}$.
Since any two separators that induce the same partitioning of $P$ are considered equivalent, $\SPS$ is the set of all nontrivial separators.
Recall that we also selected a separator from every second non-empty drum.
Let $\spt_i$ denote the separator in the $i$'th such drum.
We define a function $\ind$ that maps the index $i$ of a separator in $\SPT$ to the index of the corresponding separator in $\SPS$. Thus $\sps_{\ind(i)} = \spt_i$.
Let $\SPT=\{t_1,\ldots,t_{|\SPT|}\}$ be the collection of selected separators.
For every separator $\spt_i$, we computed a family $\B_i$ of candidate endpoint sets in $2^{O(k^{1-1/d})} \cdot n$ time.

We will now place separators in the other nonempty drums.
Each drum contains at most $k$ points, and has width $\delta$.
Therefore, for all $i$ we can find a separator $\spu_i$ in the nonempty drum between $\spt_i$ and $\spt_{i+1}$ such that $x_{\indp(i)+1}-x_{\indp(i)} > \delta/(2k)$, where $\indp$ is defined by $\sps_{\indp(i)} = \spu_i$ for $1 \leq i \leq |\SPT|-1$.
See Figure~\ref{fig:sparsev2:blocks} for an example.
Note that this is not tight;
we can in fact obtain $x_{\indp(i)+1}-x_{\indp(i)} \geq \delta/(k+1)$, but the above will suffice.
Let $\SPU \mydef \{\spu_1, \ldots, \spu_{|\SPT|-1}\}$ be the resulting set of separators.
\begin{figure}
\begin{center}
\includegraphics[]{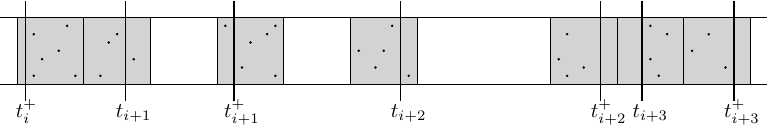}
\caption{The separators $\spt_i$ and $\spu_i$ are placed alternately. Note that even though $\spu_{i+2}$ is placed in a different drum than $\spt_{i+3}$, in this example they are equivalent, since they induce the same partitioning of $P$.}
\label{fig:sparsev2:blocks}
\end{center}
\end{figure}
Note that no two separators $\spu_i, \spu_j$ are equal.
However, we can have $\spt_i = \spu_i$ or $\spu_i = \spt_{i+1}$, namely when $\spt_i$ (or $\spt_{i+1}$) and $\spt_i^+$ induce the same partitioning as the boundary between the consecutive drums in which they are placed.
See Figure~\ref{fig:sparsev2:blocks} for an example.
In this case, we will still consider $\spu_i$ to be to the right of $\spt_i$ and to the left of $\spt_{i+1}$.
Recall that we also defined $\spt_0$ and $\spt_{|\SPT|+1}$ as `separators' which are to the left and to the right of all points, respectively.
To ensure we have no two $\spt_i, \spt_{i+1}$ without a $\spu_{i}$ in between, we define two new separators.
We define $\spu_0 \mydef \spt_0$, but we do consider it to be to the right of $\spt_0$.
Similarly, $\spu_{|\SPT|} \mydef \spt_{|\SPT|+1}$,  which we consider to be to the left of $\spt_{|\SPT|+1}$.
The following observation, which is a slightly modified version of Observation~\ref{obs:random:long_edge_then_bitonic}, shows how we can use the separators from $\SPU$.
\begin{observation}\label{obs:sparsev2:long_edge_then_bitonic}
Let $k$ be such that $P$ contains at most $k$ points in any drum.
Let $\spu_i \in \SPU$.
Let $r_1,r_2$ be two points with $x(r_1) + 2k \delta < x_{\indp(i)}$ and $x_{\indp(i)+1} + 2k \delta < x(r_2)$. 
Let $T$ be an optimal tour on $P$, and suppose that $T$ contains $r_1 r_2$.
Then $T$ is bitonic at $\spu_i$.
\end{observation}
The proof is analogous to the proof of Observation~\ref{obs:random:long_edge_then_bitonic};
see Appendix~\ref{sec:app_omitted-proofs} for the full proof.
Finally, we will need the following observation:
\begin{observation}\label{obs:sparsev2:sparsebound}
Let $k$ be such that $P$ contains at most $k$ points in any drum.
Let $p_i$ and $p_j$ be two points with $i < j$.
Suppose $j - i \geq (m+1)k$ for some $m \in \mathbb{N}$.
Then $x_j - x_i \geq m \delta$.
\end{observation}
\begin{proof}
Since for all $i \in \mathbb{Z}$ the drum~$\sigma_i$ contains at most $k$ points, $x_j - x_i \geq \floor{ \frac{j-i}{k}-1 } \delta$.
From this, the observation directly follows.
\end{proof}

\mypara{Step 2: The three tables.}
Let $\spt_i$ be a separator in $\SPT$.
Recall that the basic algorithm used a family $\B_i$ of sets, such that $\B_i$ contains the endpoint configuration of the edges of an optimal tour that are crossing $t_i$.
The set $\B_i$ was formed as follows (see also Corollary~\ref{cor:candidate-enumeration}).
Let $\R_{1a}$ be the family of sets of points formed by taking the right endpoints of edges of the candidate edge sets from Lemma~\ref{lem:random_det_sep}.
Let $\R_{1b}$ be the family of sets of at most $c^*$ points between $\spt_i$ and $\spt_{i+3}$, where $c^*$ is as defined in Lemma~\ref{lem:sep_cross}.
To improve readability, we write $\R_1 \mydef \{R_{1a} \cup R_{1b} | R_{1a} \in \R_{1a} \text{ and } R_{1b} \in \R_{1b}\}$.
Finally, let $\R_2$ be the family of sets containing at most one point to the right of $\spt_{i+3}$ and none to its left, so each element of $\R_2$ is either the empty set or a set containing a single point.
Then
\[\B_i = \{R_1 \cup R_2 \: | \: R_1 \in \R_1 \text{ and } R_2 \in \R_2\}.\]
See Figure~\ref{fig:sparsev2:tables}(i) for an example. \medskip

Recall that the basic algorithm takes $2^{O(\delta^{1-1/d})} n^2$ time because for each of the $O(n)$ separators $\spt_i$, the family $\B_i$ contains $2^{O(k^{1-1/d})} n$ sets.
To counter this, we replace $\B_i$ by two other families of sets of points, each containing only $2^{O(k^{1-1/d})}$ sets.
Recall that $P[i,j] \mydef \{p_i,\ldots,p_j\}$ for all $i \leq j$. 

Let $\R_2'$ be the family of sets of at most one point from $P[\ind(i)+1, \ind(i)+20k^2]$.
Note that $\R_2'$ contains $20k^2+1$ sets: $20k^2$ singleton sets and the empty set.
The number $20$ is not special;
it is merely a large enough constant.
Finally, let $\R_3$ be the family of sets of at most one point from $P[\ind(i)-20k^2+1,\ind(i)]$.

Now we define $\B_i^{(1)}$ as the family of sets of points formed by combining one set from $\R_1$, one from $\R_2'$ and one from $\R_3$, such that there exists a $j$ such that all the points lie in $P[j,j+20k^2]$.
In other words,
\[\B_i^{(1)} \mydef \{B = R_1 \cup R_2' \cup R_3 \: | \: R_1 \in \R_1, R_2' \in \R_2', R_3 \in \R_3 \text{ and } B \subseteq P[j,j+20k^2] \text{ for some } j\}.\]
See Figure~\ref{fig:sparsev2:tables}(ii) for an example.
Note that the empty set is an element of $\B_i^{(1)}$.
Also, note that not every element of $\B_i^{(1)}$ is an element of $\B_i$.
Specifically, this is true for the elements which contain a point to the left of $\spt_i$. \medskip


Let $\R_2''$ be the family of sets of exactly one point from $P[\ind(i) + 20k^2+1,\ind(i) + 20k^2+3k]$.
We define
\[\B_i^{(2)} \mydef \{R_1 \cup R_2'' \: | \: R_1 \in \R_1 \text{ and } R_2'' \in \R_2''\}.\]
Note that $\B_i^{(2)} \subseteq \B_i$ and that $\B_i^{(1)} \cap \B_i^{(i)} = \emptyset$.
See Figure~\ref{fig:sparsev2:tables}(iii) for an example.

Now we can define the tables $A^{(1)}$, $A^{(2)}$ and $A^{(3)}$ that our algorithm uses.
Recall that for Algorithm~\ref{alg:sparse}, we used
\[A[i,B]:= \begin{cases}\parbox{0.8\textwidth}{A representative set containing pairs $(M,x)$, where $M$ is a perfect matching on~$B \in \B_i$
and $x$ is a real number equal to the minimum total length of a path cover of
$P_0\cup\dots\cup P_{i-1} \cup B$ realizing the matching $M$.}\end{cases}\]
We use the same definition for $A^{(1)}[i,B]$ for all $B \in \B_i^{(1)}$, and for $A^{(2)}[i,B]$ for all $B \in \B_i^{(2)}$.
We get
\[A^{(1)}[i,B]:= \begin{cases}\parbox{0.8\textwidth}{A representative set containing pairs $(M,x)$, where $M$ is a perfect matching on~$B \in \B_i^{(1)}$
and $x$ is a real number equal to the minimum total length of a path cover of
$P_0\cup\dots\cup P_{i-1} \cup B$ realizing the matching $M$}\end{cases}\]
and
\[A^{(2)}[i,B]:= \begin{cases}\parbox{0.8\textwidth}{A representative set containing pairs $(M,x)$, where $M$ is a perfect matching on~$B \in \B_i^{(2)}$
and $x$ is a real number equal to the minimum total length of a path cover of
$P_0\cup\dots\cup P_{i-1} \cup B$ realizing the matching $M$.}\end{cases}\]
Note that the only difference between these three definitions is for which $B$ the table entries are defined.
Furthermore, note that $A^{(1)}[||\SPT|+1,\emptyset]$ is defined, and the corresponding representative set contains a single element $(\emptyset, x)$ where $x$ is the length of the shortest tour on all points of $P$.
\medskip

Unfortunately, we cannot compute every value of $A^{(1)}$ and $A^{(2)}$ using only these two tables.
This brings us to the third and final table.
Let $\spu_i \in \SPU$, and let $q_1$ be a point between $\spu_i$ and $\spu_{i+1}$.
Let $q_2$ be a point in $P[\ind(i)+20k^2+1,n]$.
We then define
\[A^{(3)}[i,q_1,q_2]:= \begin{cases}\parbox{0.8\textwidth}{The length of the shortest path from $q_1$ to $q_2$ that visits all points in $P[1, \ind(i)]$, such that the neighbour of $q_2$ is a point in $P[1,\ind(i)-5k^2]$.}\end{cases}\]
See Figure~\ref{fig:sparsev2:tables}(iv) for an example.
\begin{figure}[p]
\begin{center}
\includegraphics[]{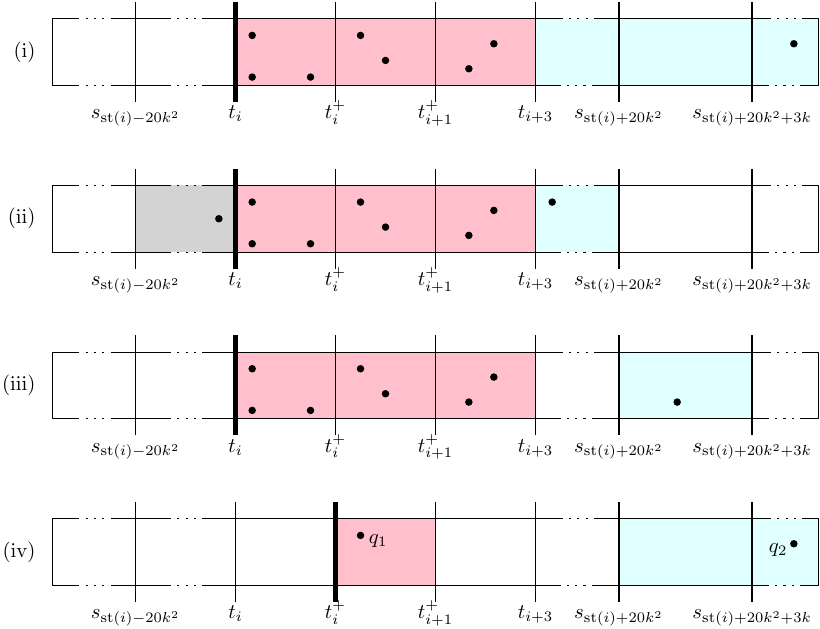}
\caption{Examples of entries of every table.
Note that all three different types of separators are used ($\sps, \spt$ and $\spu$).
Also note that the spaces between consecutive separators contain varying numbers of points.
\protect\\
(i) For comparison, an entry of table $A$ of the basic algorithm.
A set of points in $\B_i$ consists of a set from $\R_1$ (in red), and possibly any point to the right of $\spt_{i+3}$ (in blue).
\protect\\
(ii) An entry of $A^{(1)}$.
A set of points in $\B_i^{(1)}$ contains points from a set in $\R_1$ (in red), possibly a point between $\spt_{i+3}$ and $\sps_{\ind(i)+15k^2}$ (in blue), and possibly a point between $\sps_{\ind(i)-15k^2}$ and $\spt_j$ (in grey).
Not shown is the property that there exists a $j$ such that all the points lie in $P[j,j+20k^2]$.
\protect\\
(iii) An entry of $A^{(2)}$.
A set of points in $\B_i^{(2)}$ contains points from a set in $\R_1$ (in red), and exactly one point between $\sps_{\ind(i)+20k^2}$ and $\sps_{\ind(i)+20k^2+3k}$ (in blue).
\protect\\
(iv) An entry of $A^{(3)}$.
It contains exactly one point $q_1$ between $\spu_i$ and $\spu_{i+1}$ (in red), and exactly one point $q_2$ to the right of $\sps_{\ind(i)+20k^2}$ (in blue).
Not shown is the property that the neighbour of $q_2$ is a point in $P[1,\ind(i)-5k^2]$.
}
\label{fig:sparsev2:tables}
\end{center}
\end{figure}

\mypara{Step 3: Computing the table values.}
Since the algorithm uses three different tables, the order in which to compute these is nontrivial.
Furthermore, for the third table, we will need to precompute two (significantly smaller) tables.
Algorithm~\ref{alg:sparsev2} gives our algorithm in pseudocode, which we will explain below.

\begin{algorithm}
\caption{NarrowRectTSP-DP-v2($P, \delta$)}\label{alg:sparsev2}
\hspace*{\algorithmicindent} \textbf{Input:} A sparse point set $P$ in $(-\infty,\infty) \times \ball^{d-1}$\\
\hspace*{\algorithmicindent} \textbf{Output:} The length of the shortest tour visiting all points in $P$
\begin{algorithmic}[1]
\State Compute the separators $\spt_1,\ldots,\spt_{|\SPT|}$ using Lemma~\ref{lem:random_det_sep}, as explained earlier. 
\State Compute the separators $\spu_1,\ldots,\spu_{|\SPT|-1}$ as explained in Step 1.
\State Compute the sets $\cB_1^{(1)} \! ,\ldots, \cB_{|\SPT|+1}^{(1)}$ as explained in Step 2.
\State Compute the sets $\cB_1^{(2)} \! ,\ldots, \cB_{|\SPT|+1}^{(2)}$ as explained in Step 2.
\State $A^{(1)}[0,\emptyset] \mydef \{(\emptyset,0)\}$

\For {$i = 1$ to $|\SPT|+1$}

	\ForAll {$B\in \B_i^{(1)}$}
		\State Compute $A^{(1)}[i,B]$ as explained in Step 3-a. \label{line:sparsev2:A1}
	\EndFor
	
	\ForAll {$B\in \B_i^{(2)}$}
		\State Compute $A^{(2)}[i,B]$ as explained in Step 3-b. \label{line:sparsev2:A2}
	\EndFor
	
	\If {$1 < i < |\SPT|$}
		\ForAll {$q_1$ between $\spu_i$ and $\spu_{i+1}$}
				
			\ForAll {$q'$ between $\spu_{i-1}$ and $\spu_i$}
				\State Compute $D[i,q_1,q']$, as explained in Step 3-c.\label{line:sparsev2:A3_1}
			\EndFor
		
			\ForAll {$p \in P[\indp(i)-5k^2-5k+1,\indp(i)-5k^2]$}
				\State Compute $E[i,p,q_1]$, as explained in Step 3-c. \label{line:sparsev2:A3_2}
			\EndFor
		
			\ForAll {$q_2 \in P[\ind(i)+20k^2+1,n]$}
				\State Compute $A^{(3)}[i,q_1,q_2]$ as explained in Step 3-c. \label{line:sparsev2:A3_3}
			\EndFor
	
		\EndFor
	\EndIf
\EndFor
\State \Return $\length(A^{(1)}[|\SPT|+1,\emptyset])$
\end{algorithmic}
\end{algorithm}

\mypara{Step 3-a: Computing table values of $A^{(1)}$.}
Computing the values of $A^{(1)}$ is analogous to computing the table values of the basic algorithm.
See Figure~\ref{fig:sparsev2:a1example} for an example.
\begin{figure}
\begin{center}
\includegraphics[]{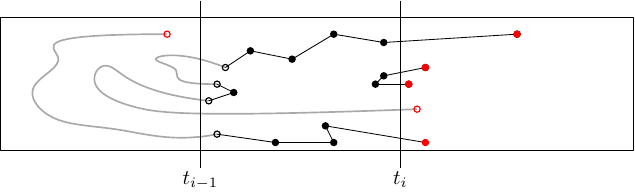}
\caption{An example of the computation of one of the elements of a table entry of $A^{(1)}$.
The red circles (open and filled) form $B$.
The open circles (black and red) form $B'$.
The grey paths contain only points to the left of $t_{i-1}$.}
\label{fig:sparsev2:a1example}
\end{center}
\end{figure}
However, there is one exception.
Sometimes, this would require a (non-existing) table entry $A^{(1)}[i-1,B']$ for some $B' \notin \B^{(1)}_{i-1}$.
Note that this can only be the case because $B'$ contains a point too far to the right.
Since there are at most $3k$ points between $\spt_i$ and $\spt_{i-1}$, the set $B'$ contains a point in $P[\ind(i-1)+20k^2+1,\ind(i-1)+20k^2+3k]$.
Since there exists a $j$ such that $B' \subseteq P[j,j+20k^2]$, the set $B'$ does not contain a point to the left of $\spt_{i-1}$.
Therefore, $B' \in \B^{(2)}_{i-1}$, and so $A^{(2)}[i-1,B']$ is the value we need.

For pseudocode, for a given $i$ and $B \in \B_i^{(1)}$, see Algorithm~\ref{alg:sparsev2-1}.
\begin{algorithm}
\caption{NarrowRectTSP-DP-v2-1$(i,B)$}\label{alg:sparsev2-1}
\begin{algorithmic}[1]
\State $A^{(1)}[i,B] \mydef \emptyset$
\ForAll {$j \in \{1,2\}$}
	\ForAll {$B'\in \cB_{i-1}^{(j)}$ where $B' \subseteq P_{i-1} \cup B$}
		\ForAll {$(M,x)\in \text{\emph{TSP-repr}}(P_{i-1}\cup B' \cup B,\ B'\symdiff B))$}
			\ForAll {$(M',x')\in A^{(j)}[i-1,B']$}		
				\If{$M'$ and $M$ are compatible}
					\State Insert $(\mathrm{Join}(M,M'),x+x')$ into $A^{(1)}[i,B]$
				\EndIf
			\EndFor
		\EndFor
	\EndFor
\EndFor		
\State $Reduce(A^{(1)}[i,B])$
\end{algorithmic}
\end{algorithm}
The analysis of the running time for table $A^{(1)}$ is analogous to the analysis of the basic algorithm.
Therefore, the time needed to calculate a single table entry is $2^{O(k^{1-1/d})}$.

\mypara{Step 3-b: Computing table values of $A^{(2)}$.}
Suppose we want to compute a table entry $A^{(2)}[i,B]$.
Let $p_j$ be the rightmost point of $B$.
Note that $p_j \in P[\ind(i)+20k^2+1,\ind(i)+20k^2+3k]$.
Now, by Observation~\ref{obs:sparsev2:sparsebound}, since
\[j - (\indp(i-1)+1) \geq \ind(i)+20k^2+1 - (\ind(i)+1) = 20k^2 \geq (2k+1)k,\]
we get that $x_j - x_{\indp(i-1)+1} > 2k \delta$.
Therefore, if $p_j$ has a neighbour far enough to the left, then Observation~\ref{obs:sparsev2:long_edge_then_bitonic} gives us that the tour is bitonic at $\spu_{i-1}$.

Let $M$ be a perfect matching on $B$ with corresponding path cover $C$.
Let $p$ be the neighbour of $p_j$ in $C$.
Now there are two possible cases: either $p \in P[\ind(i)-10k^2+1,\ind(i)]$, or $p \in P[1,\ind(i)-10k^2]$.
To compute the value of $x$, we will simply take the minimum of these two cases, illustrated in Figure~\ref{fig:sparsev2:a2example}.
\begin{figure}
\begin{center}
\includegraphics[]{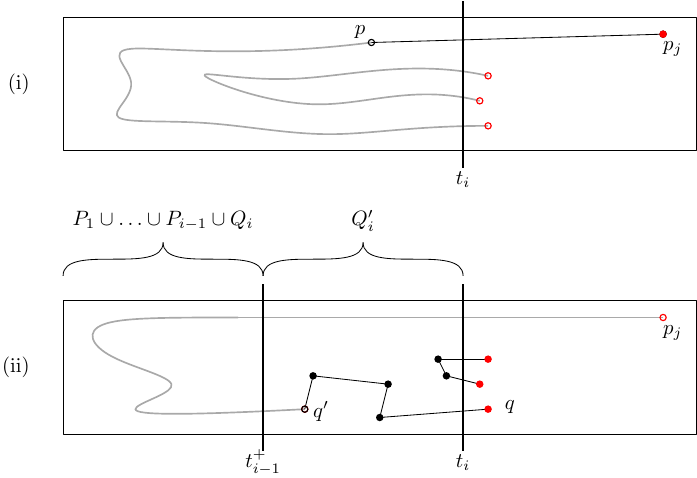}
\caption{The two possible cases for a table entry of $A^{(2)}$.
The red circles (open and filled) form $B$.
The open circles (black and red) form $B'$.
The grey paths contain only points to the left of the leftmost separator shown.}
\label{fig:sparsev2:a2example}
\end{center}
\end{figure}\medskip

\begin{itemize}
\item \textbf{Case (i): $p\in P[\ind(i)-10k^2+1,\ind(i)]$.} For this case, we do the following for all possible $p$, and take the minimum of the resulting values.
We simply add the length of $|p_j p|$ to the corresponding value of $A^{(1)}[i,B']$, where $B'$ is formed by taking $B$, and swapping $p_j$ for $p$.
See Figure~\ref{fig:sparsev2:a2example}(i) for an example.
We claim that $B' \in \B^{(1)}_i$.
Since $p_j$, the rightmost point of $B$, is not in $B'$, the only non-trivial requirement we must check for this claim is that there exists an index $z$ such that $B' \subseteq P[z,z+20k^2]$.
We will now show that this is satisfied for $z = \ind(i)-10k^2$.

We first take a look at the leftmost point of $B'$, which is $p$.
Clearly, $p \in P[\ind(i)-10k^2,\ind(i)+10k^2]$.
Now, the rightmost point from $B'$ is to the left of $\spt_{i+3}$.
Since there are at most $7k$ points between $\spt_i$ and $\spt_{i+3}$, all points except $p$ are guaranteed to be in $P[\ind(i)+1, \ind(i) + 7k+1] \subseteq P[\ind(i),\ind(i)+10k^2]$.
Since both the leftmost point and the rightmost point of $B'$ are in $P[\ind(i)-10k^2, \ind(i) + 10k^2]$, we conclude that $B' \in \B^{(1)}_i$. 

\item \textbf{Case (ii): $p\in P[1,\ind(i)-10k^2]$.} For this case, things are slightly more complicated.
By Observation~\ref{obs:sparsev2:long_edge_then_bitonic}, we get that $C$ must be bitonic at $\spu_{i-1}$.
Let $M^*$ be a perfect matching on $B$.
Let $q$ be the point to which $p_j$ is matched in $M^*$.
See Figure~\ref{fig:sparsev2:a2example}(ii) for an example.
Note that the path from $q$ to $p_j$ of $C$ crosses $\spu_{i-1}$ at least twice.
Therefore, this path crosses $\spu_{i-1}$ exactly twice, and no other path crosses $\spu_{i-1}$.
Let $q'$ be the last point on this path before $\spu_{i-1}$ is crossed the first time.
Note that $q' = q$ can hold.
We can now split $C$ into two parts of which the lengths can be computed independently: the path from $q'$ to $p_j$, and the rest.
Let us take a look at these two separately:
\begin{itemize}
\item Recall that $P_i$ is the set of points between $\spt_i$ and $\spt_{i+1}$.
We want to know the length of the shortest path from $q'$ to $p_j$ using all points in $P_1 \cup \ldots \cup P_{i-2} \cup Q_{i-1}$, where $Q_{i-1}$ is the set of points between $\spt_{i-1}$ and $\spu_{i-1}$.
This is the value $A^{(3)}[i-1,q',p_j]$.
Note that $p_j$ is far enough to the right to make this a valid table entry.
Indeed, since $p \in P[1,\ind(i)-10k^2]$, and there are at most $3k$ points between $\spu_{i-1}$ and $\spt_i$, we have $p \in P[1,\indp(i-1)-5k^2]$, as required. 
Finally, we claim that $q'$ is indeed between $\spu_{i-1}$ and $\spu_i$.
We know that the tour must be bitonic at $\spu_{i-1}$ and $\spu_i$, and that there is at least one point between these two separators.
If $q'$ is not between $\spu_{i-1}$ and $\spu_i$, we would have two edges crossing both of these separators: the edge following $q'$ in the path from $q'$ to $p$ and the edge from $p$ to $p_j$.
However, since there is at least one point between these two separators, there is also at least one point with at least one neighbour to the left of $\spu_{i-1}$ or to the right of $\spu_i$.
Both lead to a contradiction, as we now have at least three edges crossing a separator where the tour must be bitonic.
We conclude that $q'$ is indeed between $\spu_{i-1}$ and $\spu_i$.
\item Let $Q_{i-1}'$ be the set of points between $t_{i-1}^+$ and $t_i$.
In other words, $Q_{i-1}' \mydef P_{i-1} \setminus Q_{i-1}$.
Let $M$ be the matching $M^*$, with one difference:
instead of $q$ being matched to $p_j$, it is matched to $q'$.
Now we want to know the length of the shortest path cover on $Q_{i-1}' \cup B \setminus \{p_j\}$ realizing the matching $M$.
We can find this value using the method of the standard $2^{O(n^{1-1/d})}$ algorithm.
\end{itemize}
We can compute the total length of $C$ for all possible $q'$, and take the minimum of these values, to find the minimal length of $C$.
This ends the second case.
\end{itemize}

In conclusion, for a given $i$ and $B \in \B_i^{(2)}$, we can compute $A^{(2)}[i,B]$ using the pseudocode of Algorithm~\ref{alg:sparsev2-2}.
Notice that the $Reduce$ at the end automatically takes the minimum of all options added at Line~\ref{line:sparsev2:A2:opt1} for every $M$.
\begin{algorithm}
\caption{NarrowRectTSP-DP-v2-2$(i,B)$}\label{alg:sparsev2-2}
\begin{algorithmic}[1]
\State $A^{(2)}[i,B] \mydef \emptyset$
\ForAll {$p \in P[\ind(i)-10k^2+1,\ind(i)]$}
	\State $B' \mydef B \text{ with } p_j \text{ swapped for }p$
	\ForAll {$(M',x') \in A^{(1)}[i,B']$}
		\State $M \mydef M' \text{ with } p \text{ swapped for }p_j$
		\State Insert $(M,x' + |p_j p|)$ into $A^{(2)}[i,B]$. \label{line:sparsev2:A2:opt1}
	\EndFor
\EndFor

\ForAll {$q' \in Q_{i-1}' \cup B \setminus \{p_j\}$}
	\State {$B' \mydef \{q',p_j\}$}
	\State $M' \mydef \{\{q',p_j\}\}$
	\State $x' \mydef A^{(3)}[i-1,q',p_j]$
	\ForAll {$(M,x)\in \text{\emph{TSP-repr}}(Q_{i-1}' \cup B' \cup B,\ B'\symdiff B))$}
		\If{$M'$ and $M$ are compatible}
			\State Insert $(\mathrm{Join}(M,M'),x+x')$ into $A^{(2)}[i,B]$
		\EndIf
	\EndFor
\EndFor
\State $Reduce(A^{(2)}[i,B])$
\end{algorithmic}
\end{algorithm}

Next, let us take a look at the time needed to compute a single table entry of $A^{(2)}$.
For the first part, there are $O(k^2)$ possible $p$.
For every $p$, we need $|A^{(1)}[i,B']| = 2^{O(k^{1-1/d})}$ time.
For the second part, there are $O(k)$ possible $q'$.
For each of these combinations, we run the $2^{O(n^{1-1/d})}$ algorithm on a set containing $O(k)$ points.
Analogously to the basic algorithm, the \textit{Reduce} algorithm runs in $2^{O(k^{1-1/d})}$ time.
Therefore, the total running time needed to compute a single table entry is $k^2 2^{O(k^{1-1/d})} + k 2^{O(k^{1-1/d})} + 2^{O(k^{1-1/d})} = 2^{O(k^{1-1/d})}$.

\mypara{Step 3-c: Computing table values of $A^{(3)}$.}
Finally, we will show how to compute the value of $A^{(3)}[i,q_1,q_2]$.
The point $q_2$ is far enough to the right of $\spu_{i-1}$ to make Observation~\ref{obs:sparsev2:long_edge_then_bitonic} relevant.
Therefore, we will once more split into two possible cases, of which we will use the minimum of the resulting values.
Let $p$ be the neighbour of $q_2$ in the shortest path corresponding to this table entry.
Now either $p \in P[1,\indp(i-1)-5k^2]$, or $p \in P[\indp(i-1)-5k^2+1,\ind(i)-5k^2]$. \medskip

\begin{itemize}
\item \textbf{Case (i): $p \in P[1,\indp(i-1)-5k^2]$.}
From Observation~\ref{obs:sparsev2:sparsebound} we get that $x_{\indp(i-1)} - x(p) > 2k \delta$.
Therefore, by Observation~\ref{obs:sparsev2:long_edge_then_bitonic}, the path from $q_1$ to $q_2$ must be bitonic at $\spu_{i-1}$.
Since the edge $pq_2$ crosses $\spu_{i-1}$, it can only be crossed once before.
See Figure~\ref{fig:sparsev2:a3example}(i) for an example.
Let $q'$ be the last point on this path before $\spu_{i-1}$ is crossed the first time.
Note that $q' \neq q_1$, as $q'$ is between $\spu_{i-1}$ and $\spu_i$, since there is at least one point between these separators.
\begin{figure}
\begin{center}
\includegraphics[]{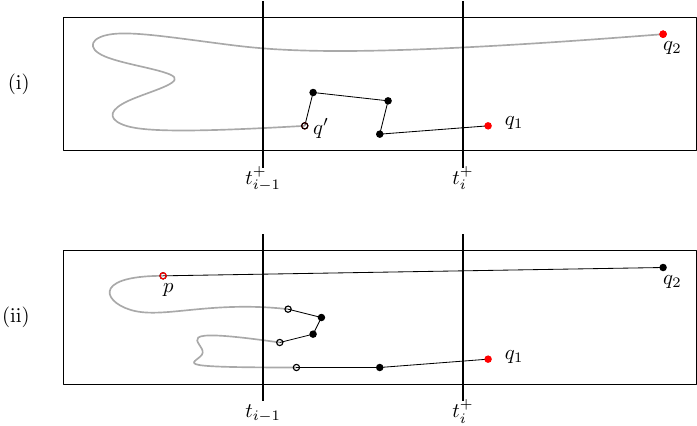}
\caption{The two options for a table entry of $A^{(3)}$.
The red circles (open and filled) form $B$.
The open circles (black and red) form $B'$.
The grey paths contain only points to the left of $\spt_{i-1}$.}
\label{fig:sparsev2:a3example}
\end{center}
\end{figure}

We can now split the value of $A^{(3)}[i,q_1,q_2]$ into the sum of two values which can be computed independently: the length of the path from $q_1$ to $q'$, and the length of the path from $q'$ to $q_2$.
Let us take a look at these two separately:
\begin{itemize}
\item We want to know $D[i,q_1,q'] \mydef $ the length of the shortest path from $q_1$ to $q'$ using all points between $\spu_{i-1}$ and $\spu_i$.
We can find this value using the standard $2^{O(n^{1-1/d})}$ algorithm.
Note that this value is independent of $q_2$, and therefore only needs to be computed once.
This is done at Line~\ref{line:sparsev2:A3_1} of Algorithm~\ref{alg:sparsev2}.
\item We want to know the length of the shortest path from $q'$ to $q_2$ using all points in $P_1 \cup \ldots \cup P_{i-1} \cup Q_{i-1} \cup \{q',q_2\}$.
This value can be found in the table entry $A^{(3)}[i-1,q',q_2]$.
Notice that this is indeed a valid table entry, since $q'$ is between $\spu_{i-1}$ and $\spu_i$.
\end{itemize}
We can compute the sum of these lengths for all possible $q'$, and take the minimum of these sums to find the optimal path length for this case.

\item \textbf{Case (ii): $p \in P[\indp(i-1)-5k^2+1,\ind(i)-5k^2]$.}
In this case, $A^{(3)}[i,q_1,q_2]$ is equal to $|q_2 p|$ plus the length of the shortest path from $p$ to $q_1$, using every point in $P[1,\ind(i)]$.
See Figure~\ref{fig:sparsev2:a3example}(ii) for an example.
Let $B = \{p,q_1\}$.
We claim that table $A^{(1)}$ contains a representative set containing pairs $(M,x)$, where $M$ is a perfect matching on $B$ and $x$ is a real number equal to the total length of the path cover of $P[1,\ind(i)] \cup B$ realizing the matching $M$.
Note that since $B \subseteq P[\indp(i-1)-10k^2+1,\indp(i-1)+3k]$, no invalid table entries of $A^{(1)}$ are called.
Since $B$ contains only two points, this representative set contains only one pair, of which $M = \{\{p,q_1\}\}$.
Therefore, the value $x$ is the length of the shortest path we are looking for.
Since this value is independent of $q_2$, it needs only to be computed once.
This can be done with the following pseudocode, which is the full version of Line~\ref{line:sparsev2:A3_2} of Algorithm~\ref{alg:sparsev2}:
\begin{algorithm}
\begin{algorithmic}[1]
\State $E[i,p,q_1] \mydef \infty$
\State $B \mydef \{p,q_1\}$
\ForAll {$B'\in \B_i^{(1)}$ where $B' \subseteq Q_i \cup B$}
	\ForAll {$(M,x)\in \text{\emph{TSP-repr}}(Q_i \cup B' \cup B,\ B'\symdiff B))$}
		\ForAll {$(M',x')\in A^{(1)}[i,B']$}			
			\If{$M'$ and $M$ are compatible}
				\State $E[i,p,q_1] \mydef \min\{E[i,p,q_1], x+x'\}$
			\EndIf
		\EndFor
	\EndFor
\EndFor
\end{algorithmic}
\end{algorithm}
This ends the second case.
\end{itemize}

Combining the two cases, for a given $i, q_1$ and $q_2$ we get the following pseudocode to use at Line~\ref{line:sparsev2:A3_3} of Algorithm~\ref{alg:sparsev2}:
\begin{algorithm}
\begin{algorithmic}[1]
\State $A^{(3)}[i,q_1,q_2] \mydef \infty$
\ForAll {$q'$ between $\spu_{i-1}$ and $\spu_i$}
	\State $A^{(3)}[i,q_1,q_2] \mydef \min\{A^{(3)}[i,q_1,q_2], \: D[i,q_1,q'] + A^{(3)}[i-1,q',q_2] \}$
\EndFor

\ForAll {$p \in P[\indp(i-1)-5k^2+1,\ind(i)]$}
	\State $A^{(3)}[i,q_1,q_2] \mydef \min\{A^{(3)}[i,q_1,q_2], \: E[i,p,q_1] + |p q_2| \}$
\EndFor
\end{algorithmic}
\end{algorithm}

Finally, let us take a look at the time needed to compute a single table entry of $A^{(3)}$.
For the first option, there are $O(k)$ possible $q'$, and the computations now only take constant time per $q'$.
For the second option, there are $O(k)$ possible $p$ (since $\ind(i) - \indp(i-1) \leq 3k$), and the computations only take constant time per $p$.
This brings the total time needed to compute a single table entry of $A^{(3)}$ to $O(k + k) = O(k)$.
This does not include the computations done at Line~\ref{line:sparsev2:A3_2} and Line~\ref{line:sparsev2:A3_3} of Algorithm~\ref{alg:sparsev2}; these will be considered below.

This concludes the computation of the table entries.

\mypara{Step 4: The running time.}
Let us analyse the total time needed by Algorithm~\ref{alg:sparsev2}.
\begin{itemize}
\item \textbf{Computation of table entries of $A^{(1)}$ (Line~\ref{line:sparsev2:A1}).}
There are $O(n)$ possible $i$ and $2^{O(k^{1-1/d})}$ possible $B \in \B_i^{(1)}$.
As seen before, the computation of a single table entry takes $2^{O(k^{1-1/d})}$ time.
This brings the total time needed for this line to $n 2^{O(k^{1-1/d})} 2^{O(k^{1-1/d})} = 2^{O(k^{1-1/d})} n$.
\item \textbf{Computation of table entries of $A^{(2)}$ (Line~\ref{line:sparsev2:A2}).}
There are $O(n)$ possible $i$ and $2^{O(k^{1-1/d})}$ possible $B \in \B_i^{(2)}$.
As seen before, the computation of a single table entry takes $2^{O(k^{1-1/d})}$ time.
This brings the total time needed for this line to $n 2^{O(k^{1-1/d})} 2^{O(k^{1-1/d})} = 2^{O(k^{1-1/d})} n$.
\item \textbf{Precomputation of table $D$ (Line~\ref{line:sparsev2:A3_1}).}
We compute the length of the shortest path from $q_1$ to $q'$ using all points between $\spu_{i-1}$ and $\spu_i$.
There are $O(n)$ possible $i$, there are $O(k)$ possible $q_1$ and there are $O(k)$ possible $q'$.
The $2^{O(n^{1-1/d})}$ algorithm is run on a point set of size $O(k)$.
Therefore, the total time needed for this line is $2^{O(k^{1-1/d})} n$.
\item \textbf{Precomputation of table $E$ (Line~\ref{line:sparsev2:A3_2}).}
Let $i, p, q_1$ be some valid combination for the second option.
There are $O(n)$ possible $i$, there are $O(k)$ possible $p$ and $O(k)$ possible $q_1$.
Analogously to the computation of a table entry of $A^{(1)}$, for each combination $2^{O(k^{1-1/d})}$ time is needed.
Therefore, this line takes $2^{O(k^{1-1/d})} n$ time in total.
\item \textbf{Computation of table entries of $A^{(3)}$ (Line~\ref{line:sparsev2:A3_3}).}
There are $O(n)$ possible $i$, a total of $O(k)$ possible $q_1$ and $O(n)$ possible $q_2$.
As seen before, the computation of a single table entry takes $O(k)$ time.
This brings the total time needed for this line to $O(n k n k) = O(k^2 n^2)$.
\end{itemize}
This brings the total running time to $2^{O(k^{1-1/d})} n + 2^{O(k^{1-1/d})} n + 2^{O(k^{1-1/d})} n + 2^{O(k^{1-1/d})} n + O(k^2 n^2) = 2^{O(k^{1-1/d})} n + O(k^2 n^2)$.
Since $k = O(\delta)$, this concludes the proof of Theorem \ref{thm:sparsev2:runtime}.

\section{Concluding remarks} \label{sec:conc}
Our paper contains three main results on \etsp.

First, we proved that for points with distinct integer $x$-coordinates
in a strip of width~$\delta$, an optimal bitonic tour is optimal overall when $\delta\leq 2\sqrt{2}$.
The proof of this bound, which is tight in the worst case, is partially automated to reduce the potentially enormous number of cases to two worst-case scenarios.
It would be interesting to see whether a direct proof can be given for this fundamental result.
We note that the proof of Theorem \ref{thm:bitonic:main} can easily be adapted to point sets of which the $x$-coordinates of the points need not be integer, as long as the difference between $x$-coordinates of any two consecutive points is at least 1.

Second, we gave a $2^{O(\delta^{1-1/d})} n + O(\delta^2 n^2)$ dynamic programming algorithm for sparse point sets in a $d$-dimensional $\delta$-cylinder.
For $\delta=\Theta(n)$ the running time becomes $2^{O(n^{1-1/d})}$, which is optimal under ETH.
We expect that the factor $O(\delta^2 n^2)$ can be improved to a factor $O(\delta^2 \, n \log^2 n)$, using a method similar to that used by De Berg \etal\ to solve {\sc Bitonic TSP} in $O(n \log^2 n)$ time \cite{DBLP:journals/corr/BergBJW16}.

Third, we gave a $2^{O(\delta^{1-1/d})}n$ expected time algorithm for random point sets.
The proof also gives a way to relate the expected running times of algorithms for any problem on two different kinds of random point sets:
a version where the $x$-coordinates of the points are taken uniformly at random from $[0,n]$, and a version where the differences between two consecutive $x$-coordinates are taken independently from $\Exp(1)$.
Note that the problem is scalable, so these results generalize to any desired interval or desired expected difference between two consecutive $x$-coordinates.

\medskip

There are several directions for further research. 
In $\Reals^2$, our results show how the complexity of TSP scales as we go from an essentially 1-dimensional problem ($\delta$ is a small constant) to an essentially unrestricted 2-dimensional problem ($\delta=n$).
We can also ask how the complexity of TSP scales from a 2-dimensional problem to a 3-dimensional problem.
For instance, we can assume the points are randomly distributed inside a box of size~$n\times n \times \delta$, or we can consider sparse point sets inside a slab of width $\delta$.
(Here a point set is sparse when every box of size~$1\times 1\times \delta$ contains $O(1)$ points.)
The goal would be to obtain an algorithm with running time $2^{O(f(\delta)\sqrt{n})}$, where $f(n)= O(n^{1/6})$.
Such a running time becomes $2^{O(\sqrt{n})}$ for constant $\delta$ (which is optimal for TSP in $\Reals^2$, under ETH), and it becomes $2^{O(n^{2/3})}$ for $\delta=n$ (which is optimal for TSP in $\Reals^3$, assuming ETH). 
One can also generalize the problem in another way.
Instead of assuming that the points are all close to a line or line segment, one could also assume that the points are all close to for example a curve, or a (connected) set of line segments.
We believe that our algorithm can serve as the basis of an algorithm solving such a problem, under the assumption that the point sets are dense enough to ensure that the solution will generally follow these curves / segments. Making this precise, and investigating how the running time depends on the number of line segments, would be interesting.

More generally, we believe this more fine-grained look at the dimension of a problem instance is worth investigating for many other problems as well.
We already investigated this for the {\sc Minimum Rectilinear Steiner Tree} problem in $\Reals^2$, where the fastest known algorithm runs in $2^{O(\sqrt{n} \log n)}$ time \cite{rectiSteiner16}.
We showed that there exists an $n^{O(\sqrt{\delta})}$ algorithm for sparse point sets and a $2^{O(\delta \sqrt{\delta})} n$ expected time algorithm for random point sets \cite{MRST}.

\appendix
\newpage
\section{The automated prover} \label{sec:app_prover}
Here, we will give an overview of the inner workings of the automated prover.
See Algorithm~\ref{alg:autoprover} for pseudocode, which we will now explain.
First, all candidate sets of edges $\overline{F}'$ that form a tour together with $\widetilde{E}$ are generated.
Then, a set of cases is generated.
Each case consists of the following:
\begin{itemize}
\item The $x$-coordinates of each of the points, stored in $\mathcal{X}$.
Every point must have a different integer $x$-coordinate, and must adhere to the given bounds on the $x$-coordinates.
\item For every point, an interval denoting the range in which its $y$-coordinate must lie, stored in $\mathcal{Y}$.
When the cases are first generated, this is simply $[0,\delta]$ for every point.
\end{itemize}
For each possible combination $\mathcal{X}$ of $x$-coordinates of the points, one such case is generated.
Note that every possible set of points in represented by one of these cases.
Then, for every case, it does the following:
\begin{itemize}
\item To prove a case, it calculates an upper bound on the total lengths of every candidate $\overline{F}'$, and a lower bound of the total length of $\overline{F}$.
For every edge, a lower bound can be found by simply taking the points as close together as possible, and an upper bound is found by doing the opposite.
See Figure~\ref{fig:autoprover:interval_arithmetic} for some examples.
Suppose there exists a set $\overline{F}'$ such that the upper bound on its length is guaranteed to be lower than the lower bound on the length of $\overline{F}$.
Then $\|\overline{F}'\| < \|\overline{F}\|$ must hold, and the case holds.
To counter rounding errors, the calculated upper bound must be more than some fixed $\eta \mydef 10^{-6}$ lower than the calculated lower bound for this to trigger.
The prover has been implemented in Java using \texttt{doubles}, which have a precision of up to 15 to 16 decimal digits.
\item If it fails to prove the case, it splits the case into $2^{n_\myleft + n_\myright}$ cases, by splitting the interval of every $y$-coordinate into two equal parts.
For example, if a case has two points and intervals $[0,2]$ and $[6,8]$, it is split into the four cases $([0,1],[6,7]), ([0,1],[7,8]), ([1,2],[6,7])$ and $([1,2],[7,8])$.
Then, it recursively tries to prove all of these cases.
If, however, the intervals become too small (smaller than the given precision parameter $\eps$), the prover gives up on proving this case.
\end{itemize}
This process continues until all cases have been either proven or given up on.
Finally, the prover returns a list of all cases and whether it succeeded or failed on these cases.
Note that the smaller the precision parameter $\eps$ is, the more precise the cases (and therefore the answer) will be.
However, a smaller $\eps$ will also result in more cases, which increases both the running time and the number of lines of the output.

\begin{algorithm}
\caption{\emph{FindShorterTour}$(n_\myleft,n_\myright,\overline{F},\widetilde{E},X,\delta,\eps)$}\label{alg:autoprover}
\vspace*{2mm}
\textbf{Input:} 
\vspace*{-8mm}
\begin{quotation}
\noindent
\begin{itemize}
\item $n_\myleft$ and $n_\myright$ denote the number of points with $x$-coordinate at most $-1$ and at least $0$, respectively;
\item $\overline{F},\widetilde{E}$ are edge sets such that $\widetilde{E} \cup \overline{F}$ forms a tour;
\item $X$ is an array where $X[i]$ specifies the set from which the $x$-coordinate of the $i$-th point in the given scenario may be chosen; 
\item $\eps$ denotes the maximum `size' of the returned cases.
\end{itemize} 
\end{quotation}
\textbf{Output:} 
\vspace*{-2mm}
\begin{quotation}
\noindent A list of \emph{scenarios} and an \emph{outcome} for each scenario. 

A scenario contains for each point~$q$ an $x$-coordinate~$x(q)$ from the set of allowed $x$-coordinates for~$q$, and a range~$\yrange(q)\subseteq [0,2\sqrt{2}]$ for its $y$-coordinate.
This $y$-range is an interval of length at least~$\eps/2$.
The outcome of a scenario is either \success or \fail.
An outcome \success means that a set $\overline{F}'$ has been found with the desired properties:
$\widetilde{E}\cup \overline{F}'$ is a tour, and for all possible instantiations of the scenario---that is, all choices of $y$-coordinates from the $y$-ranges in the scenario---we have $\|\overline{F}'\| < \|\overline{F}\|$.
An outcome \fail means that such an $\overline{F'}$ has not been found.
It does not guarantee that such an $F'$ does not exist for this scenario.

The list of scenarios is complete in the sense that for any instantiation of the input case there is a scenario that covers it.
\end{quotation}
\begin{algorithmic}[1]
\State $result \leftarrow \emptyset$
\State Generate all $\overline{F}'$ for which $\widetilde{E} \cup \overline{F}'$ is a tour
\For {Every valid combination $\mathcal{X}$ of $x$-coordinates for the points}
	\State Generate case $(\mathcal{X},\mathcal{Y})$, consisting of the exact $x$-coordinates of the points, and the bounds $[0, \delta]$ for the $y$-coordinate of each of the points
	\State \textsc{TryToProveCase($\mathcal{X},\mathcal{Y}$)}
\EndFor

\State \Return $result$
\end{algorithmic}
\hspace*{\algorithmicindent}
\begin{algorithmic}[1]
\Function{TryToProveCase}{$\mathcal{X},\mathcal{Y}$}
	\If{there exists an $\overline{F}'$ such that $\overline{F}'$ is guaranteed to be shorter than $\overline{F}$}
		\State Add $(\mathcal{X},\mathcal{Y},$ {\sc Success}$)$ to $result$
	\ElsIf{the bounds in $\mathcal{Y}$ are larger than $\eps$}
		\For{every of the $2^{n_\myleft + n_\myright}$ subcases $\mathcal{Y'}$ of $\mathcal{Y}$}
			\State \textsc{TryToProveCase($\mathcal{X},\mathcal{Y'}$)}
		\EndFor
	\Else
		\State Add $(\mathcal{X},\mathcal{Y},$ {\sc Fail}$)$ to $result$
	\EndIf
\EndFunction
\end{algorithmic}
\end{algorithm}

\begin{figure}
\begin{center}
\includegraphics{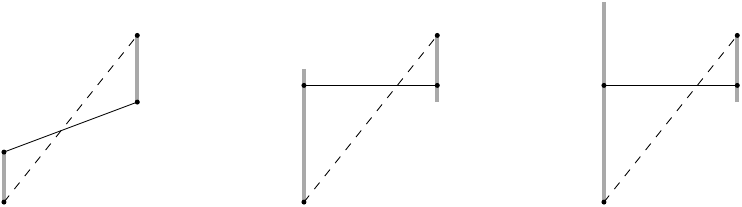}
\caption{Three pairs of intervals, the shortest edges between points in those intervals (solid) and the longest edges between points in those intervals (dashed)}
\label{fig:autoprover:interval_arithmetic}
\end{center}
\end{figure}


\newpage

\section{Omitted proofs}\label{sec:app_omitted-proofs}

\subsection{Proof of Observation~\ref{obs:sparsev2:long_edge_then_bitonic}}
See Figure~\ref{fig:algproofs:long_edge_then_bitonic} for an example.
Note that $x_{\indp(i)}$ is the $x$-coordinate of the rightmost point to the left of $\spu_i$, and that $x_{\indp(i)+1}$ is the $x$-coordinate of the leftmost point to the right of $\spu_i$.
Therefore, $x_{\indp(i)+1} - x_{\indp(i)} > \delta / (2k)$.
\begin{figure}
\begin{center}
\includegraphics{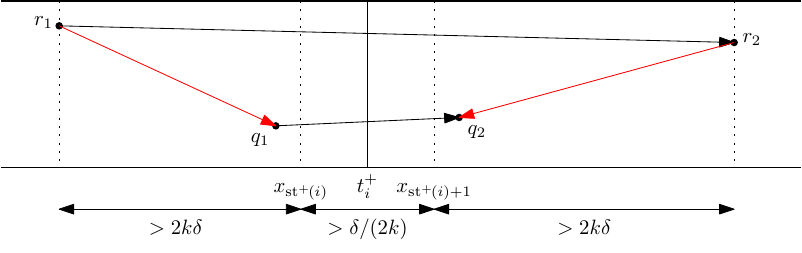}
\caption{An example of Observation~\ref{obs:sparsev2:long_edge_then_bitonic}.
Edges of $T$ are shown in black, edges of $T'$ are shown in red.}
\label{fig:algproofs:long_edge_then_bitonic}
\end{center}
\end{figure}

W.l.o.g., $T$ contains the directed edge from $r_1$ to $r_2$.
Suppose for a contradiction that $T$ is an optimal tour and not bitonic at $\spu_i$.
Then $T$ must cross $\spu_i$ at least four times, which implies that there must be another directed edge $q_1 q_2$ with $x(q_1) \leq x_{\indp(i)}$ and $x_{\indp(i)+1} \leq x(q_2)$.

Note that $q_1 \neq r_1$ and $q_2 \neq r_2$, since a single point cannot have two incoming or two outgoing edges.
We will now show that $| r_1 q_1| + |r_2 q_2| < |q_1 q_2| + |r_1 r_2|$.
By Observation~\ref{obs:2ktonicproofs:switcheroo}, we then get an optimal tour $T'$ shorter than $T$, thereby finishing the proof.
First, we will simplify the problem using an argument similar to the proof of Observation~\ref{obs:2ktonicproofs:linesum}.
Suppose we move $q_1$ towards $q_2$ until $x(q_1) = x_{\indp(i)}$.
By doing so, $|q_1 r_1|$ will never decrease more than $|q_1 q_2|$.
Therefore, it is sufficient to prove the statement for $x(q_1) = x_{\indp(i)}$.
Analogously, it is sufficient to prove the statement for $x(q_2) = x_{\indp(i)+1}$ and $x(r_1) + 2k \delta = x(q_1)$ and $x(q_2) + 2k \delta = x(r_2)$.
Finally, recall that $x_{\indp(i)+1} - x_{\indp(i)} > \delta / (2k)$, so we will prove the statement for $x_{\indp(i)+1} - x_{\indp(i)} = \delta / (2k)$.
Together, we now have
$$x(r_1) + 2k\delta = x(q_1) = x(q_2)-\delta/(2k) = x(r_2)-\delta/(2k)-2k\delta.$$
We get
\begin{align*}
|q_1 r_1| + |q_2 r_2|
    & \leq \sqrt{(x(q_1)-x(r_1))^2+\delta^2} + \sqrt{(x(r_2)-x(q_2))^2+\delta^2} \\
    & = 2 \delta \sqrt{ (2k)^2 + 1} \\
    & < 2 \delta (2k + 1/(2k))\\
    & = \delta/(2k) + (2 (2k\delta) + \delta/(2k)) \\
    & = (x(q_2)-x(q_1)) + (x(r_2)-x(r_1)) \\
    & \leq |q_1 q_2| + |r_1 r_2|.
\end{align*}
By Observation~\ref{obs:2ktonicproofs:switcheroo}, this gives us an optimal tour $T'$ shorter than $T$, thereby finishing the proof.

\bibliographystyle{plain}
\bibliography{bibfile}

\end{document}